%% file: manuscript.tex
 \documentclass{mlcv}
 
 \usepackage[round]{natbib}
 \renewcommand{\cite}[1]{\citep{#1}}
 
 \usepackage{xcolor}         

 \usepackage{cases}
 
 \newcommand{\darkgreen}{green!60!black}
 \usetikzlibrary{calc}
 \usetikzlibrary{math}
 \usetikzlibrary{external}
 \usepackage{pgfplots}
 \usepackage{pgfplotstable}
 \pgfplotsset{compat=1.18}
 \usepgfplotslibrary{groupplots}
 \usetikzlibrary{positioning, arrows.meta, fit, graphs, graphs.standard, backgrounds,matrix}
 \usepgfplotslibrary{statistics}

 \makeatletter
 \pgfplotsset{
 	boxplot/hide outliers/.code={
 		\def\pgfplotsplothandlerboxplot@outlier{}%
 	}
 }
 \makeatother
 \pgfplotsset{
 	every axis/.style={
 		error bars/y dir=both,
 		error bars/y explicit,
 		error bars/error bar style={line width=0.5pt},
 		error bars/error mark options={rotate=90, mark size=0.4ex, line width=0.5pt}
 	},
 	boxplot/.append style={
 		/pgfplots/boxplot/hide outliers
 	}
 }
 
 \usepackage{autonum}
 
 \usepackage{environ}
 \usepackage{etoolbox}

 \newcommand{\pop}[2]{\text{POP}_{#1, #2}}
 \newcommand{\popcc}[3]{\text{POP}_{#1, #2}[#3]}

 \newcommand{\pzo}[1]{P_{01}[#1]}
 \newcommand{\poz}[1]{P_{10}[#1]}
 \newcommand{\dom}[1]{\textnormal{dom}\;#1}
 \newcommand{\tclosure}[1]{\textnormal{tcl}\,#1}
 \newcommand{\closure}[2]{\textnormal{cl}_{#1}\,#2}
 \newcommand{\pathsG}[3]{\mathcal{P}_{#1}(#2, #3)}
 \newcommand{\pathsVE}[4]{\mathcal{P}_{(#1, #2)}(#3, #4)}
 \newcommand{\pfs}[1]{X_{#1}}
 \newcommand{\pfsc}[2]{X_{#1}[#2]}
 \newcommand{\ppc}[1]{\hat{X}_{#1}}
 
 \newcommand{\ice}{\textnormal{ive}}
 \newcommand{\ici}{\textnormal{ivi}}
 \newcommand{\lb}{\textnormal{lb}}
 \newcommand{\ub}{\textnormal{ub}}
 \newcommand{\pairs}[1]{P_{#1}}
 \newcommand{\triples}[1]{T_{#1}}
 
 \DeclareMathOperator*{\argmax}{argmax}

 \newif\ifproofsinappendix
 \proofsinappendixtrue
 
 \newcommand{\myappendixproofs}{}
 
 \NewEnviron{proofatend}[1][Proof]{%
 	\ifproofsinappendix
 	\global\edef\myappendixproofs{%
 		\unexpanded\expandafter{\myappendixproofs}%
 		\unexpanded{%
 			\subsection*{#1}%
 			\begin{proof}
 			}%
 			\unexpanded\expandafter{\BODY}%
 			\unexpanded{\end{proof}}%
 	}
 	\else
 	\begin{proof}
 		\BODY
 	\end{proof}
 	\fi
 }
 
 \newcommand{\makeproofs}{%
 	\ifproofsinappendix
 	\section{Proofs}\label{appendix:proofs}
 	For any set $V$ and any $E' \subseteq \pairs{V}$, abbreviate the transitive closure of $E'$ by 
 	\begin{align}
 		\tclosure{E'} \coloneqq \{pq\in \pairs{V} \mid \pathsVE{V}{E'}{p}{q} \neq \emptyset\}\enspace.
 	\end{align}
 	\myappendixproofs
 	\fi
 }
 
 \makeatletter
 \pgfplotsset{
 	boxplot prepared from table/.code={
 		\def\tikz@plot@handler{\pgfplotsplothandlerboxplotprepared}%
 		\pgfplotsset{
 			/pgfplots/boxplot prepared from table/.cd,
 			#1,
 		}
 	},
 	/pgfplots/boxplot prepared from table/.cd,
 	table/.code={
 		\pgfplotstableread[col sep=comma]{#1}\boxplot@datatable
 	},
 	row/.initial=0,
 	make style readable from table/.style={
 		#1/.code={
 			\pgfplotstablegetelem{\pgfkeysvalueof{/pgfplots/boxplot prepared from table/row}}{##1}\of\boxplot@datatable
 			\pgfplotsset{boxplot/#1/.expand once={\pgfplotsretval}}
 		}
 	},
 	make style readable from table=lower whisker,
 	make style readable from table=upper whisker,
 	make style readable from table=lower quartile,
 	make style readable from table=upper quartile,
 	make style readable from table=median,
 	make style readable from table=lower notch,
 	make style readable from table=upper notch
 }
 \makeatother
 
 \newlength{\hspacing}
 \newlength{\vspacing}

\Crefname{section}{Sec.}{Secs.}
\Crefname{appendix}{App.}{Apps.}
\Crefname{definition}{Def.}{Defs.}
\Crefname{theorem}{Thm.}{Thms.}
\Crefname{proposition}{Prop.}{Props.}
\Crefname{corollary}{Cor.}{Cors.}
\Crefname{figure}{Fig.}{Figs.}

\begin{document}

\author{David Stein$^1$ $\cdot$ Jannik Irmai$^1$ $\cdot$ Bjoern Andres$^1$}
\title{Partial Optimality in the Preordering Problem}
\date{Machine Learning for Computer Vision\\TU Dresden}
\maketitle

\begin{abstract}
	Preordering is a generalization of clustering and partial ordering with applications in  bioinformatics and social network analysis. 
	Given a finite set $V$ and a value $c_{ab} \in \mathbb{R}$ for every ordered pair $ab$ of elements of $V$, the preordering problem asks for a preorder $\lesssim$ on $V$ that maximizes the sum of the values of those pairs $ab$ for which $a \lesssim b$.
	Building on the state of the art in solving this NP-hard problem partially, we contribute new partial optimality conditions and efficient algorithms for deciding these conditions.
	In experiments with real and synthetic data, these new conditions increase, in particular, the fraction of pairs $ab$ for which it is decided efficiently that $a \not\lesssim b$ in an optimal preorder.
\end{abstract}

\tableofcontents
\footnotetext[1]{\href{mailto:david.stein1@tu-dresden.de,jannik.irmai@tu-dresden.de,bjoern.andres@tu-dresden.de}{david.stein1@tu-dresden.de, jannik.irmai@tu-dresden.de, bjoern.andres@tu-dresden.de}}

\input{section-introduction}

\input{section-related-work}
\input{section-partial-preorders}
\input{section-maps}
\input{section-conditions}
\input{section-cut-conditions}
\input{section-join-conditions}
\input{section-fixation-conditions}
\input{section-algorithms}

\input{section-experiments}
\input{section-conclusion}

\bibliographystyle{plainnat}
\bibliography{manuscript}

\appendix
\makeproofs
\input{appendix-equivalence-classes}
\input{appendix-energy-minimization}
\input{appendix-alpha-beta-swap-moves}
\input{appendix-google+}
\input{appendix-boecker-condition}

\end{document}

%% file: section-introduction.tex
\section{Introduction}

A \emph{preorder} is a binary relation on a set that is reflexive and transitive.
Given a finite set $V$ and a value $c_{ab} \in \mathbb{R}$ for every ordered pair $ab \in V^2$ such that $a \neq b$, the preordering problem asks for a preorder $\lesssim$ on $V$ that maximizes the total value $\sum_{a \lesssim b, a \neq b} c_{ab}$.
See \Cref{figure:introduction} for an example.
The preordering problem has applications in bioinformatics \cite{jacob2008,boecker2009} and social network analysis \cite{irmai2025}.
It is closely related to clustering and ordering:
In clustering, the feasible solutions can be identified with equivalence relaxations, i.e.~symmetric preorders.
In partial ordering, the feasible solutions are anti-symmetric preorders.
A preorder need neither be symmetric nor anti-symmetric.
Yet, the symmetric subset of a preorder defines a clustering, and the anti-symmetric subset well-defines a partial order on the clusters (\Cref{figure:introduction}).
Thus, preordering is both a hybrid and a joint relaxation of clustering and partial ordering \citep{irmai2025}.
At the same time, preordering remains NP-hard, even for values in $\{-1,1\}$ \citep{weller2012}.

We attempt to solve the preordering problem partially by deciding efficiently for some pairs $ab \in V^2$ with $a \neq b$ whether $a \lesssim b$ in an optimal preorder. 
Using a technique introduced by \citet{shekhovtsov-2013,shekhovtsov-2014} and applied, e.g., by \citet{shekhovtsov-2015}, we prove new partial optimality conditions for the preordering problem that can be verified by establishing that certain maps from the feasible set to itself are improving (as made rigorous in \Cref{definition:improving-map} below).
For these decision problems, we propose efficient algorithms.
To examine the effectiveness and efficiency of these algorithms and thus the effectiveness of the underlying partial optimality conditions, we conduct experiments with real and synthetic data.
Compared to the state of the art in solving the preordering problem partially, the new conditions increase, in particular, the fraction of pairs $ab$ for which it is decided efficiently that $a \not\lesssim b$ in an optimal preorder.
\ifproofsinappendix
All proofs are deferred to~\Cref{appendix:proofs}.
\fi

\input{figure-illustration-preorder-maximal-specificy-combined}

\subsection{Problem statement}

For any set $V$, let $\pairs{V} = \{pq \in V^2 \mid p \neq q\}$ and $\triples{V} = \{pqr \in V^3 \mid p \neq q \wedge p \neq r \wedge q \neq r\}$ be the sets of pairs and triples of distinct elements of $V$.

\begin{definition}[\citet{wakabayashi1998}]
\label{definition:problem}
For any finite set $V$ and any $c \in \mathbb{R}^{\pairs{V}}$, let 
$X_V \coloneqq \{x\in \{0, 1\}^{P_V} \mid \forall pqr\in T_V \colon 	x_{pq} + x_{qr} - x_{pr} \leq 1 \}$.
For any $\varphi_c \colon X_V \to \mathbb{R} \colon x \mapsto \sum_{pq \in {P_V}} c_{pq} \, x_{pq}$,
call
$\max \ \{\varphi_c(x) \mid x \in X_V \}$
the instance of the \emph{(maximum value) preordering problem} with respect to $V$ and $c$ ($\pop{V}{c}$).
\end{definition}

For any $x \in X_V$, $x^{-1}(1)$ is a transitive relation on $V$. 
This relation is also irreflexive, thus a transitive digraph over the node set $V$.
For this reason, $\pop{V}{c}$ is also called the \emph{transitive digraph problem} \citep[cf.][]{weller2012}.
Here, we extend $x$ and $c$ such that $\forall i \in V$: $x_{ii} := 1  \land c_{ii} := 0$.
This extension is a value-preserving bijection from the characteristic functions of transitive digraphs to the characteristic functions of preorders.
It justifies the term \emph{preordering problem} for $\pop{V}{c}$.

%% file: figure-illustration-preorder-maximal-specificy-combined.tex
\begin{figure}[t]
	\centering
	\begin{minipage}[b]{0.525\linewidth}
		\centering 
		\small
		\begin{minipage}{0.55\linewidth}
			\centering
			\begin{tabular}{l|r@{\;\;}r@{\;\;}r@{\;\;}r@{\;\;}r@{\;\;}r}
				$c$ & $i$ & $j$ & $k$ & $l$ & $m$   \\
				\midrule
				$i$ &       & $2$  & $-1$ & $-1$ & $-1$ \\
				$j$ & $2$   &      & $-1$ & $2$  & $-1$ \\
				$k$ & $-4$  & $-4$ &      & $3$  &  $2$ \\
				$l$ & $1$   & $1$  & $-1$ &    & $-1$ \\
				$m$ & $-1$  & $-1$ & $1$  & $-2$ & \\
			\end{tabular}
		\end{minipage}
		\hfill
		\begin{minipage}{0.4\linewidth}
			\centering
			\begin{tikzpicture}[
				yscale=-1,
				vertex/.style={circle, draw, inner sep=0.4ex, fill=white},
				]
				\draw[draw=none, fill=black!15] (0, 0) ellipse (1 and 0.4);
				\draw[draw=none, fill=black!15] (1.5, 0) ellipse (0.35 and 0.35);
				\draw[draw=none, fill=black!15] (1.5, -1.5) ellipse (0.35 and 0.35);
				\draw[draw=none, fill=black!15] (0, -1.5) ellipse (0.35 and 0.35);
				\node[vertex, label=left:$i$] (0) at (-0.5, 0) {};
				\node[vertex, label=right:$j$] (1) at (0.5, 0) {};
				\node[vertex, label=left:$l$] (3) at (0, -1.5) {};
				\node[vertex, label=right:$k$] (2) at(1.5, 0) {};
				\node[vertex, label=right:$m$] (4) at (1.5, -1.5) {};
				\draw[-latex]  (2) edge (3);
				\draw[-latex] (0) edge[bend left] (1);
				\draw[-latex] (1) edge[bend left] (0);
				\draw[-latex] (0) edge (3);
				\draw[-latex] (1) edge (3);
				\draw[-latex] (2) edge (4);
			\end{tikzpicture}
		\end{minipage}
		\caption{An instance of the preordering problem with the set $V = \{i,j,k,l,m\}$ and the values $c$ is defined by the table on the left.
			An optimal solution, i.e.~a maximum value preorder $\lesssim$, is shown on the right with arrows from $a$ to $b$ indicating $a \lesssim b$.
			The symmetric subset of the preorder is an equivalence relation and thus defines a partition, or clustering, of $V$ (gray).
			The anti-symmmetric subset well-defines a partial order over the clusters.}
		\label{figure:introduction}
	\end{minipage}%
	\hfill
	\begin{minipage}[b]{0.425\linewidth}
		\centering
		\begin{tikzpicture}[yscale=-1, xscale=1]
			\node[draw=black, circle,inner sep=0.4ex] (1) at (0, 0) {};
			\node[draw=black, circle,inner sep=0.4ex] (2) at (1, 0) {};
			\node[draw=black, circle,inner sep=0.4ex] (3) at (2, 0) {};
			
			\node[draw=black, circle,inner sep=0.4ex] (4) at (0, 1) {};
			\node[draw=black, circle,inner sep=0.4ex] (5) at (1, 1) {};
			\node[draw=black, circle,inner sep=0.4ex] (6) at (2, 1) {};
			
			\draw[-latex] (1) -- (2);
			\draw[-latex] (2) -- (3);
			\draw[-latex, gray] (1) edge[bend right] (3);
			
			\draw[-latex] (4) -- (5);
			\draw[-latex] (5) -- (6);
			\draw[-latex, gray] (4) edge[bend left] (6);

			\draw[-latex, dashed] (5) -- (2);
			\draw[-latex, gray, dashed] (6) -- (1);
			\draw[-latex, gray, dashed] (6) -- (2);
			\draw[-latex, gray, dashed] (5) -- (1);
			
		\end{tikzpicture}
		\caption{In the example depicted above, must-join constraints (solid black arrows) and must-cut constraints (dashed black arrow) defined by a partial characteristic function of preorders, $\tilde{x}\in \tilde{X}_V$, imply additional must-join constraints (solid gray arrows) as well as additional must-cut constraints (dashed gray arrows).
			In this case, the closure $\closure{V}{\tilde{x}}$ is strictly more specific than $\tilde{x}$.}
		\label{figure:illustration-maximal-specificity}
	\end{minipage}
\end{figure}

%% file: section-related-work.tex
\section{Related Work}

The preordering problem in the form of \Cref{definition:problem} is stated by \citet{wakabayashi1998} who establishes NP-hardness and discusses specializations with additional constraints.
The state of the art in solving this problem partially is by the conditions, bounds and algorithms of \citet{boecker2009}.
Here, we prove their partial optimality conditions via improving maps~\citep{shekhovtsov-2013,shekhovtsov-2014} and establish a more general condition that subsumes these.

More specific is the transitivity editing problem that asks for the minimum number of edges to insert into or remove from a given digraph in order to make it transitive.
This problem is isomorphic to the preordering problem with values in $\{-1,1\}$ and remains NP-hard \citep{weller2012}.
Partial optimality conditions for the transitivity editing problem are established and analyzed also by \citet{weller2012}, including conditions transferred from \citet{boecker2009} and additional conditions specific to this problem.

More restrictive than the preordering problem are the clique partition problem \cite{groetschel-1984} and the correlation clustering problem \cite{bansal-2004} whose feasible solutions are equivalence relations.
Partial optimality conditions for these problems established and studied by \citet{alush2012,lange2018,lange2019,stein2023}.
Some partial optimality conditions we define relate to theirs, and we point out these relations below.
To describe clusterings constrained by partial optimality, \citet{hornakova2017} introduce the notion of maximally specific partial characterizations. 
We adapt their definition to preorders.

Also more restrictive than the preordering problem is the partial ordering problem \citep{muller1996partial} for which partial optimality has not been discussed specifically, to the best of our knowledge.
More constrained still is the linear ordering problem \cite{marti-2011} for which partial optimality conditions are established by \citet{stein2024}.
The conditions we define do not relate directly to theirs because the totality constraint changes the problem fundamentally.

A branch-and-cut algorithm for the preordering problem in the framework of \citet{gurobi} is defined by \citet{irmai2025}.
Local search algorithms for the preordering problem are defined by \citet{boecker2009,irmai2025}.
The geometry of the preorder polytope is studied by \citet{gurgel1992,gurgel1997adjacency}. 

%% file: section-partial-preorders.tex
\section{Constraint Propagation}

Given an instance of the preordering problem $\pop{V}{c}$, our goal is to decide efficiently for some $ab \in P_V$ that $x_{ab} = 0$, and for some $ab \in P_V$ that $x_{ab} = 1$, in an optimal solution $x$.
We collect such constraints in a \emph{partial function}, i.e., a function from some $P\subseteq P_V$ to $\{0, 1\}$. 
With minor abuse of notation, let $\{0, 1, *\}^{P_V} \coloneqq \bigcup_{P\subseteq P_V}\{0, 1\}^{P}$ denote the set of all such functions. 
For any $\tilde{x} \in \{0, 1, *\}^{P_V}$, let $\dom{\tilde{x}}:= \tilde{x}^{-1}(\{0, 1\})$ denote the domain of $\tilde{x}$.
For any $a \in V$: $aa \notin \dom \tilde x$.
Yet, we define $\tilde{x}_{aa} := 1$, by convention.
Once we have established partial optimality $\tilde{x} \in \{0, 1, *\}^{P_V}$ and look for more, we are no longer concerned with the feasible set $X_V$ but instead with the set of those $x \in X_V$ that agree with $\tilde x$ on the domain of $\tilde x$:

\begin{definition}
	Let $V \neq \emptyset$ finite and $\tilde{x}\in \{0, 1, *\}^{\pairs{V}}$. 
	The elements of 
		$X_{V}[\tilde{x}] \coloneqq \{x\in X_V \mid \forall pq\in \dom{\tilde x}\colon x_{pq} = \tilde{x}_{pq}\}$
	are called the \emph{completions} of $\tilde{x}$ in $X_V$. 
	Moreover, $\tilde{x}$ is called a \emph{partial characteristic function of preorders on $V$} if and only if
		$X_{V}[\tilde{x}] \neq \emptyset$.
	For the set of all partial characteristic functions of preorders on $V$, we write $\tilde{X}_V \coloneq \{\tilde{x}\in \{0, 1, *\}^{\pairs{V}} \mid X_V[\tilde{x}] \neq \emptyset\}$.
\end{definition}%

\begin{proposition}
	\label{proposition:consistency-in-p}
	For any finite $V \neq \emptyset$ and any $\tilde{x}\in \{0, 1, *\}^{\pairs{V}}$, we can decide efficiently if $X_V[\tilde{x}] \neq \emptyset$.
\end{proposition}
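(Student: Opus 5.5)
The plan is to reduce consistency to a reachability check in the digraph of forced ones. Let $J \coloneqq \tilde{x}^{-1}(1)$ be the must-join pairs and $C \coloneqq \tilde{x}^{-1}(0)$ the must-cut pairs. Form the transitive closure $\bar J$ of $J$ in the digraph $(V, J)$, i.e., $pq \in \bar J$ iff $p \neq q$ and there is a directed $p$–$q$ path using arcs of $J$. The claim is
\begin{align}
X_V[\tilde{x}] \neq \emptyset \iff \bar J \cap C = \emptyset\enspace.
\end{align}
Computing $\bar J$ takes one BFS/DFS from each node, i.e., $O(|V|(|V|+|J|)) \subseteq O(|V|^3)$ time. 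Checking the intersection takes $O(|V|^2)$ time. Both are polynomial, which proves the proposition.

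\emph{Necessity.} Let $x \in X_V[\tilde{x}]$. Then $x^{-1}(1)$ is transitive: for $pqr \in T_V$ with $x_{pq}=x_{qr}=1$, the inequality $x_{pq}+x_{qr}-x_{pr}\le 1$ forces $x_{pr}=1$. Take a path $p=v_0,\dots,v_k=q$ in $J$ with $p \neq q$. We may shorten it to a simple path, since any sub-path of arcs in $J$ still consists of $J$ arcs. Then $x_{v_iv_{i+1}}=1$ for all $i$, and induction along the path gives $x_{pq}=1$. Each intermediate step $v_0 v_i v_{i+1}$ is a triple of distinct nodes because the path is simple. Hence $pq \notin C$.

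\emph{Sufficiency.} Assume $\bar J \cap C = \emptyset$ and set $x \coloneqq$ the characteristic function of $\bar J$ on $P_V$.
\begin{itemize}
\item $x \in X_V$: $\bar J$ is transitive. If $x_{pq}=x_{qr}=1$ with $pqr\in T_V$, concatenating the two paths gives a $p$–$r$ walk in $J$, and $p\neq r$, so $x_{pr}=1$.
\item $x$ agrees with $\tilde{x}$ on $J$: $J \subseteq \bar J$.
\item $x$ agrees with $\tilde{x}$ on $C$: $x$ vanishes on $C$ by assumption.
\end{itemize}
Thus $x \in X_V[\tilde{x}]$.

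The only delicate point is the irreflexive convention. A cycle in $J$ puts the pairs $pq$ and $qp$ both in $\bar J$. This is fine, since preorders need not be antisymmetric. We must only exclude diagonal pairs from $\bar J$, which matches $P_V$ containing only $p\neq q$; then the transitivity check via triples with distinct entries goes through. Otherwise the argument is routine: the minimal completion of the ones is the transitive closure, and consistency fails exactly when that closure hits a forced zero.
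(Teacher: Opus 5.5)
Your proposal is correct and follows the paper's proof. Both characterize consistency by $\tilde{x}^{-1}(0)$ being disjoint from the transitive closure of $\tilde{x}^{-1}(1)$, both take that closure as the witnessing completion, and both derive the contradiction in the other case from transitivity. You additionally spell out the simple-path induction, the transitivity of the closure and the BFS running time, which the paper treats as obvious.
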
%
\begin{proofatend}[Proof of~\Cref{proposition:consistency-in-p}]
	Let $\tilde{x}\in \{0, 1, *\}^{\pairs{V}}$. 
	We show that $\tilde{x}$ is consistent if and only if $\tilde{x}^{-1}(0) \cap \tclosure{\tilde{x}^{-1}(1)} = \emptyset$. Since $\tclosure{\tilde{x}^{-1}(1)} $ can be computed in polynomial time, the claim follows. 
	
	(1) Let $\tilde{x}^{-1}(0) \cap \tclosure{\tilde{x}^{-1}(1)}= \emptyset$. We define $x\in \{0, 1\}^{\pairs{V}}$ such that $x^{-1}(1) = \tclosure{\tilde{x}^{-1}(1)}$. 
	Obviously $x\in X_V$. 
	Moreover, $\tilde{x}^{-1}(1) \subseteq \tclosure{\tilde{x}^{-1}(1)} = x^{-1}(1)$. 
	Furthermore, $\tilde{x}^{-1}(0) \setminus x^{-1}(0) = \tilde{x}^{-1}(0) \cap x^{-1}(1) = \tilde{x}^{-1}(0) \cap \tclosure{\tilde{x}^{-1}(1)}= \emptyset$ by assumption and thus $\tilde{x}^{-1}(0) \subseteq x^{-1}(0)$. Therefore, $x\in X_V[\tilde{x}]$. 
	 
	(2) Let $\tilde{x}^{-1}(0) \cap \tclosure{\tilde{x}^{-1}(1)} \neq \emptyset$. 
	Consider some fixed $pq\in \tilde{x}^{-1}(0) \cap \tclosure{\tilde{x}^{-1}(1)}$. 
	We assume that there exists $x\in X_V[\tilde{x}]$. 
	Then it follows by transitivity of $x$ that $\tclosure{\tilde{x}^{-1}(1)} \subseteq x^{-1}(1)$. 
	In particular, we have that $1 = x_{pq} \neq \tilde{x}_{pq} = 0$, contradicting the assumption. 
	Therefore:
	$X_V[\tilde{x}] = \emptyset$. 
\end{proofatend}%

For any partial characteristic function of preorders on $V$, $\tilde{x}$, pairs $ab \in P_V$ may exist for which $ab \notin \dom \tilde x$ and yet, there exists $\beta \in \{0,1\}$ such that every completion $x$ of $\tilde x$ is such that $x_{ab} = \beta$.
See \Cref{figure:illustration-maximal-specificity} for an example.
We seek to add all such pairs to $\dom \tilde x$, i.e., to propagate constraints.
Hence, we characterize these pairs and show how they can be found efficiently:

\begin{definition}
	Let $V\neq \emptyset$ finite. For any $\tilde{x}\in \tilde{X}_V$, the pairs in
		$\pairs{V}[\tilde{x}] \coloneq \{pq\in \pairs{V} \mid \forall x, x'\in X_V[\tilde{x}]\colon x_{pq} = x'_{pq}\}$
	are called \emph{decided}.  
	The pairs $\pairs{V} \setminus \pairs{V}[\tilde{x}]$ are called \emph{undecided}. Moreover, $\tilde{x}$ is called \emph{maximally specific} if and only if 
		$\pairs{V}[\tilde{x}] = \dom{\tilde{x}}\enspace$.
\end{definition}
\begin{proposition}
	\label{proposition:maximal-specificity-in-p}
	Maximal specificity is efficiently decidable.
	Moreover, for any $\tilde x \in \tilde X_V$, the maximally specific partial characteristic function of preorders on $V$ that has the same completions as $\tilde{x}$ is unique, called the \emph{closure} $\closure{V}{\tilde{x}}$ of $\tilde x$, and is such that
	$
	(\closure{V}{\tilde{x}})^{-1}(1) = \{pq\in \pairs{V} \mid \pathsVE{V}{\tilde{x}^{-1}(1)}{p}{q} \neq \emptyset\} 
	$
	and
	$(\closure{V}{\tilde{x}})^{-1}(0) = \{pq\in \pairs{V} \mid \exists p'q'\in \tilde{x}^{-1}(0) \colon \pathsVE{V}{\tilde{x}^{-1}(1)}{p'}{p} \neq \emptyset 
	\land 
	\pathsVE{V}{\tilde{x}^{-1}(1)}{q}{q'} \neq \emptyset
	\}
	$.
\end{proposition}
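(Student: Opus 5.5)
Write $J := \tilde{x}^{-1}(1)$ and $C := \tilde{x}^{-1}(0)$, and let $\tclosure{J}$ denote the transitive closure of $J$, as in the proof of \Cref{proposition:consistency-in-p}. Let $A_1$ and $A_0$ be the two sets on the right-hand sides in the statement. Because $\tilde x \in \tilde X_V$, we know $X_V[\tilde x] \neq \emptyset$, and by the proof of \Cref{proposition:consistency-in-p} this means $C \cap \tclosure{J} = \emptyset$. The plan is to show that $A_1 \cup A_0$ is exactly the set of decided pairs $\pairs{V}[\tilde x]$, with value $1$ on $A_1$ and value $0$ on $A_0$. Both claims then follow.

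\textbf{Uniqueness.} Any partial function with the same completions as $\tilde x$ that is maximally specific must have domain $\pairs{V}[\tilde x]$. Its value on each decided pair is forced to be the common value of all completions, which exist because $X_V[\tilde x]\neq\emptyset$. Hence this function is unique, and it is the closure $\closure{V}{\tilde x}$. Conversely, defining a function on $\pairs{V}[\tilde x]$ by these common values leaves the completion set unchanged: $\dom{\tilde x} \subseteq \pairs{V}[\tilde x]$, and every completion of $\tilde x$ agrees with the common values on the decided pairs.

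\textbf{Pairs in $A_1\cup A_0$ are decided.} Every completion $x$ is transitive and contains $J$, so $x$ contains $\tclosure{J} = A_1$. Now let $pq \in A_0$, witnessed by $p'q' \in C$ with paths $p'\to p$ and $q\to q'$ in $J$, and suppose $x_{pq}=1$. Transitivity along the path gives $x_{p'q'}=1$, contradicting $x_{p'q'} = 0$. Here I must handle the degenerate cases $p=p'$ or $q=q'$, where the path is trivial. This fits the convention $\tilde x_{aa}=1$ if paths of length $0$ are allowed, so I will read $\mathcal{P}$ that way. Finally, $A_0 \cap A_1 = \emptyset$: otherwise we would get $p'q' \in \tclosure{J}$, and $p'\neq q'$ since $p'q'\in\pairs{V}$.

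\textbf{The remaining pairs are undecided (the main step).} Take $pq \notin A_0 \cup A_1$. I need two completions that differ at $pq$.
\begin{itemize}
\item A completion with $x_{pq}=0$ is $x^{-1}(1)=\tclosure{J}$, which is a completion as in the proof of \Cref{proposition:consistency-in-p}. It has $x_{pq}=0$ because $pq \notin A_1$.
\item For $x_{pq}=1$, consider $\tilde x'$ obtained from $\tilde x$ by adding $\tilde x'_{pq}=1$. By \Cref{proposition:consistency-in-p} it suffices to show that $C\cap\tclosure{(J\cup\{pq\})}=\emptyset$. Suppose $p'q'\in C$ is reachable from $p'$ to $q'$ using $J\cup\{pq\}$. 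If the walk does not use $pq$, then $p'q' \in \tclosure{J}$, which contradicts consistency. Otherwise, shortcut the walk to its first use of the edge $pq$ and its last use. This yields a $J$-path from $p'$ to $p$ and a $J$-path from $q$ to $q'$, with trivial paths allowed. That means $pq\in A_0$, a contradiction.
\end{itemize}

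This shortcutting argument, together with careful treatment of trivial paths, is where I expect the main care to be needed.

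\textbf{Efficiency.} The sets $A_1$ and $A_0$ are computable in polynomial time via the transitive closure of $J$ (made reflexive), followed by one scan over $C$. Maximal specificity is then decided by checking whether $\dom{\tilde x} = A_0\cup A_1$. The consistency of $\tilde x$ itself is first checked by \Cref{proposition:consistency-in-p}.
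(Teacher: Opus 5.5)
Your proof is correct, but it takes a different route from the paper's.

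\textbf{The paper's route.} The paper first proves a characterization: $\tilde x$ is maximally specific if and only if $\tilde x^{-1}(1)$ is transitively closed and $\tilde x^{-1}(0)$ is closed under the propagation rule. In other words, the two formulas of the statement reproduce $\tilde x$ itself. For the `if' direction, it takes each $pq\notin\dom{\tilde x}$ and exhibits two completions, with $1$-sets $\tilde x^{-1}(1)$ and $\tilde x^{-1}(1)\cup\{ab\mid a \text{ reaches } p,\ q \text{ reaches } b\}$. It checks transitivity and the $0$-constraints for these by hand. It then shows, by path-concatenation arguments, that $\closure{V}{\tilde x}$ satisfies the two closedness conditions. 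It concludes with uniqueness via completion sets.

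\textbf{Your route.} You compute $\pairs{V}[\tilde x]=A_0\cup A_1$ directly for every consistent $\tilde x$. You then get existence, uniqueness and the formula for the closure at once, from the fact that a maximally specific function with a given completion set must have the decided pairs as its domain and the forced values as its values.

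\textbf{What each approach buys.} Your route is shorter. Feasibility of the completion with $x_{pq}=1$ comes for free from \Cref{proposition:consistency-in-p}, applied to $\tilde x$ extended by $pq\mapsto 1$, so only the shortcut argument remains. You also establish explicitly two facts that the paper's proof leaves implicit:
\begin{itemize}
\item $A_0\cap A_1=\emptyset$, so the closure is a well-defined partial function;
\item $\pfsc{V}{\closure{V}{\tilde x}}=\pfsc{V}{\tilde x}$.
\end{itemize}
You are also right that paths of length zero must be admitted. Otherwise the formula for the $0$-set would in general fail to contain $\tilde x^{-1}(0)$. The paper uses this convention tacitly, for example when it asserts $x''_{pq}=1$. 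In exchange, the paper's route gives a characterization of $\ppc{V}$ that can be checked on $\tilde x$ alone, without reference to completions. Computationally, both amount to one transitive closure plus a scan over $\tilde x^{-1}(0)$.
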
%
\begin{proofatend}[Proof of~\Cref{proposition:maximal-specificity-in-p}]
	Let $\tilde{x}\in \{0, 1, *\}^{\pairs{V}}$. 
	We show that $\tilde{x}$ is maximally specific if and only if the following two conditions hold:
	\begin{align}
		\label{eq:closure-one-edges}
		\tilde{x}^{-1}(1) = &\{pq\in \pairs{V} \mid \pathsVE{V}{\tilde{x}^{-1}(1)}{p}{q} \neq \emptyset\}\\
		\label{eq:closure-zero-edges}
		\tilde{x}^{-1}(0) 
		=&\left\{pq\in \pairs{V} \mid 
		\exists p'q'\in \tilde{x}^{-1}(0) \colon \pathsVE{V}{\tilde{x}^{-1}(1)}{p'}{p} \neq \emptyset \land 
		\pathsVE{V}{\tilde{x}^{-1}(1)}{q}{q'} \neq \emptyset
		\right\}
	\end{align}
	Firstly, we assume that \eqref{eq:closure-one-edges} and \eqref{eq:closure-zero-edges} are true and show that $\tilde{x}$ is maximally specific. $\dom{\tilde{x}} \subseteq \pairs{V}[\tilde{x}]$ is trivially true. 
	It suffices to show that for every $pq\in \pairs{V}\setminus \dom{\tilde{x}}$ there exist  completions $x', x'' \in X_V[\tilde{x}]$ such that $x'_{pq} = 0$ and $x''_{pq} = 1$. From this it follows that every $pq\in \pairs{V}\setminus \dom{\tilde{x}}$ is undecided and thus $\pairs{V}[\tilde{x}] \subseteq \dom{\tilde{x}}$. 
	Let $pq\in P_V\setminus \dom{\tilde{x}}$ be fixed.
	Let $x'\in \{0, 1\}^{\pairs{V}}$ such that $\left(x'\right)^{-1}(1) = \tilde{x}^{-1}(1)$. 
	We have $x'\in X_V$ since $\left(x'\right)^{-1}(1) = \tclosure{\left(x'\right)^{-1}(1)}$ by \eqref{eq:closure-one-edges}. 
	Moreover, $x' \in X_V[\tilde{x}]$ and $x'_{pq} = 1$ by definition of $x'$.
	Let $x''\in \{0, 1\}^{\pairs{V}}$ such that 
	\begin{align}
		&(x'')^{-1}(1) \coloneqq \tilde{x}^{-1}(1) \cup \left\{p'q'\in \pairs{V} \mid \pathsVE{V}{\tilde{x}^{-1}(1)}{p'}{p} \neq \emptyset \land \pathsVE{V}{\tilde{x}^{-1}(1)}{q}{q'} \neq \emptyset\right\}\enspace.
	\end{align}%
	We have $x''\in X_V$ since $(x'')^{-1}(1) = \tclosure{(x'')^{-1}(1)}$.
	Moreover, we have $\tilde{x}^{-1}(1) \subseteq (x'')^{-1}(1)$ by definition of $x''$. Furthermore, we show that $\tilde{x}^{-1}(0) \subseteq (x'')^{-1}(0)$. 
	In order to prove this, it suffices to show that $(x'')^{-1}(1) \cap \tilde{x}^{-1}(0) = \emptyset$. 
	For the sake of contradiction, let us assume that $p'q'\in (x'')^{-1}(1) \cap \tilde{x}^{-1}(0)$. 
	By definition of $x''$ and the fact that $\tilde{x}^{-1}(1) \cap \tilde{x}^{-1}(0) = \emptyset$, we have $\pathsVE{V}{\tilde{x}^{-1}(1)}{p'}{p} \neq \emptyset$ and $\pathsVE{V}{\tilde{x}^{-1}(1)}{q}{q'} \neq \emptyset$. 
	By \eqref{eq:closure-zero-edges} and $p'q'\in \tilde{x}^{-1}(0)$, we have $pq\in \tilde{x}^{-1}(0)$, in contradiction to $pq\in P_V\setminus \dom{\tilde{x}}$.
	Therefore, $(x'')^{-1}(1) \cap \tilde{x}^{-1}(0) = \emptyset$. 
	Thus, $x'' \in X_V[\tilde{x}]$, i.e., $x''$ is a proper completion of $\tilde{x}\in \tilde{X}_V$. Moreover, $x''_{pq} = 1$.
	
	Secondly, we assume that \eqref{eq:closure-one-edges} and \eqref{eq:closure-zero-edges} are not both true and show that $\tilde{x}$ is not maximally specific. In particular, we show that there exists a decided pair that is not included in the domain of $\tilde{x}$, i.e.,  $pq\in \pairs{V}[\tilde{x}]\setminus \dom{\tilde{x}}$. 
	If \eqref{eq:closure-one-edges} is false, then there exists some pair $pq\in \pairs{V} \setminus \dom{\tilde{x}}$ such that $\pathsVE{V}{\tilde{x}^{-1}(1)}{p}{q}\neq \emptyset$. 
	For every solution $x\in X_V[\tilde{x}]$ this implies $x_{pq} = 1$ by transitivity and therefore $pq\in \pairs{V}[\tilde{x}] \setminus \dom{\tilde{x}}$. 
	If \eqref{eq:closure-zero-edges} is false, then there exist a pair $pq\in \pairs{V}\setminus \dom{\tilde{x}}$ and a pair $p'q'\in \tilde{x}^{-1}(0)$ with $\pathsVE{V}{\tilde{x}^{-1}(1)}{p'}{p} \neq \emptyset$ and $\pathsVE{V}{\tilde{x}^{-1}(1)}{q}{q'} \neq \emptyset$. 
	For every solution $x\in X_V[\tilde{x}]$ this implies $x_{pq} = 0$. 
	In order to show this, let us assume that $x_{pq} = 1$. 
	Then there is a $p'q'$-path in $(V, x^{-1}(1))$ and thus $x_{p'q'} = 1$, in contradiction to $p'q'\in \tilde{x}^{-1}(0)$. 
	Therefore, the pair $pq$ is decided to be cut, i.e., $pq\in \pairs{V}[\tilde{x}] \setminus \dom{\tilde{x}}$. 
	As conditions~\eqref{eq:closure-one-edges} and~\eqref{eq:closure-zero-edges} can be decided efficiently, we conclude that maximum specificity can be decided efficiently.
	
	Next, let $\tilde{x}\in \tilde{X}_V$ be fixed. We show that $\closure{V}{\tilde{x}}$ is maximally specific. Assume that $\closure{V}{\tilde{x}}$ is not maximally specific. 
	We consider two cases: 
	
	Firstly, assume $\closure{V}{\tilde{x}}$ does not fulfill~\eqref{eq:closure-one-edges}. Thus, there is $pq\in P_V\setminus \left(\closure{V}{\tilde{x}}\right)^{-1}(1)$ such that $\pathsVE{V}{\left(\closure{V}{\tilde{x}}\right)^{-1}(1)}{p}{q} \neq \emptyset$, i.e., $\closure{V}{\tilde{x}}$ does not satisfy~\eqref{eq:closure-one-edges}. Therefore, it follows that $\pathsVE{V}{\tilde{x}^{-1}(1)}{p}{q} = \emptyset$ by definition of $\closure{V}{\tilde{x}}$. 
	We have that $\pathsVE{V}{\left(\closure{V}{\tilde{x}}\right)^{-1}(1)}{p}{q} \neq \emptyset$ implies $\pathsVE{V}{\tilde{x}^{-1}(1)}{p}{q} \neq \emptyset$, which is thus a contradiction. To see the latter argument let $(V', E') \in \pathsVE{V}{\left(\closure{V}{\tilde{x}}\right)^{-1}(1)}{p}{q} \neq \emptyset$. 
	Notice that any $p'q'\in E'$ is such that $\pathsVE{V}{\tilde{x}^{-1}(1)}{p'}{q'} \neq \emptyset$ by definition of $\closure{V}{\tilde{x}}$. 
	Therefore, we can replace any edge $p'q'\in E'$ in the path $(V', E')$ by a $p'q'$-path in $(V, \tilde{x}^{-1}(1))$. 
	Thus $\pathsVE{V}{\tilde{x}^{-1}(1)}{p}{q} \neq \emptyset$. 
	Overall, we have derived a contradiction.
	Therefore, $\closure{V}{\tilde{x}}$ fulfills~\eqref{eq:closure-one-edges} for every $\tilde{x}\in \tilde{X}_V$.
	
	Secondly, assume $\closure{V}{\tilde{x}}$ does not fulfill~\eqref{eq:closure-zero-edges}. 
	Thus, there is $pq\in P_V\setminus \left(\closure{V}{\tilde{x}}\right)^{-1}(0)$ such that there exists $p'q'\in \left(\closure{V}{\tilde{x}}\right)^{-1}(0)$ with $\pathsVE{V}{\left(\closure{V}{\tilde{x}}\right)^{-1}(1)}{p'}{p} \neq \emptyset$ and $\pathsVE{V}{\left(\closure{V}{\tilde{x}}\right)^{-1}(1)}{q}{q'} \neq \emptyset$. 
	By definition of $\closure{V}{\tilde{x}}$ it follows that there exists some $p''q''\in \tilde{x}^{-1}(0)$ such that $\pathsVE{V}{\tilde{x}^{-1}(1)}{p''}{p'}\neq \emptyset$ and $\pathsVE{V}{\tilde{x}^{-1}(1)}{q'}{q''}\neq \emptyset$. 
	Moreover, it follows that 
	$\pathsVE{V}{\tilde{x}^{-1}(1)}{p'}{p} \neq \emptyset$ and $\pathsVE{V}{\tilde{x}^{-1}(1)}{q}{q'} \neq \emptyset$. 
	Overall, we have $p''q''\in \tilde{x}^{-1}(0)$ and $\pathsVE{V}{\tilde{x}^{-1}(1)}{p''}{p} \neq \emptyset$ and $\pathsVE{V}{\tilde{x}^{-1}(1)}{q}{q''} \neq \emptyset$. 
	Thus, $pq\in \left(\closure{V}{\tilde{x}}\right)^{-1}(0)$, which contradicts the assumption.  
	
	We have now proven that $\closure{V}{\tilde{x}}$ is maximally specific. 
	
	Lastly, we show that the maximally specific partial characterization that has the same completions as $\tilde{x}'$ is unique. 
	For that purpose, let us define $\tilde{X}_V[\tilde{x}] = \{\tilde{x}'\in \tilde{X}_V\mid X_V[\tilde{x}] = X_V[\tilde{x}']\}$. 
	Now, assume that there are two maximally specific $\hat{x}', \hat{x}''\in \tilde{X}_V[\tilde{x}]$ such that $\tilde{x}' \neq \tilde{x}''$. 
	By definition of $\tilde{X}_V[\tilde{x}]$ we have $\pfsc{V}{\tilde{x}'} = \pfsc{V}{\tilde{x}''}$. 
	We first show that $\dom{\tilde{x}'} = \dom{\tilde{x}''}$. 
	For the sake of contradiction, assume $pq\in \dom{\tilde{x}'}\setminus \dom{\tilde{x}''}$. 
	As both $\tilde{x}'$ and $\tilde{x}''$ are maximally specific by assumption, we conclude $pq\in P_V[\tilde{x}']\setminus P_V[\tilde{x}'']$, i.e., $pq$ is decided by $\tilde{x}'$ but not by $\tilde{x}''$. 
	This contradicts $\pfsc{V}{\tilde{x}'} = \pfsc{V}{\tilde{x}''}$. Thus, $\dom{\tilde{x}'}\subseteq \dom{\tilde{x}''}$. 
	Similarly, we can show $\dom{\tilde{x}''}\subseteq \dom{\tilde{x}'}$.  
	Therefore, we conclude $\dom{\tilde{x}'} = \dom{\tilde{x}''}$. 
	Now, we show that $\tilde{x}' = \tilde{x}''$. 
	From the assumption that $\tilde{x}'_e \neq \tilde{x}''_e$ for some $e\in \dom{\tilde{x}'} = \dom{\tilde{x}''}$, it directly follows that $\pfsc{V}{\tilde{x}'} \neq \pfsc{V}{\tilde{x}''}$, which is a contradiction. 
	Hence $\tilde{x}' = \tilde{x}''$. 
	Thus, the maximally specific partial characterization of preorders on $V$ that has the same completions as $\tilde{x}$ is unique.
	By the previous statements, it is given by $\closure{V}{\tilde{x}}$.
\end{proofatend}%

With this established, we can consider the set of all maximally specific partial characteristic functions of preorders on $V$, $\hat{X}_V \coloneqq \{\tilde{x}\in \tilde{X}_{V}\mid \tilde{x} = \closure{V}{\tilde{x}}\}$.
For any $\hat{x}\in \hat{X}_V$, we can consider the preordering problem constrained to the feasible set $X_V[\hat{x}]$, for which we write $\popcc{V}{c}{\hat{x}}$. 

%% file: section-maps.tex
\section{Improving Maps}

We fix some notation and recall the notion of improving maps:
For any set $V$ and any $U \subseteq V$, let $\delta(U, V\setminus U) := U \times \left(V\setminus U\right) $ and $\delta(U) := \delta(U, V\setminus U) \cup \delta(V\setminus U, U)$. 
For any digraph $G = (V, E)$ and any $p,q \in V$, let $\pathsG{G}{p}{q}$ denote the set of all directed paths in $G$ from $p$ to $q$. 
\ifproofsinappendix
\else
For any set $V$ and any $E' \subseteq \pairs{V}$, abbreviate the transitive closure of $E'$ by 
\begin{align}
	\tclosure{E'} \coloneqq \{pq\in \pairs{V} \mid \pathsVE{V}{E'}{p}{q} \neq \emptyset\}\enspace.
\end{align}
\fi

\begin{definition}
\label{definition:improving-map}
	Let $X\neq \emptyset$ finite, $\varphi\colon X\to \mathbb{R}$ and $\sigma\colon X\to X$. 
	If $\varphi(\sigma(x))\geq \varphi(x)$ for all $x\in X$, then $\sigma$ is called \emph{improving} for the problem $\max_{x\in X}\varphi(x)$. 
\end{definition}
\begin{proposition}
	\label{proposition:partial-optimality}
	Let $X\neq \emptyset$ finite, $\varphi\colon X \to \mathbb{R}$,
	$\sigma\colon X\to X$ improving for $\max_{x\in X}\varphi(x)$, and
	$X'\subseteq X$. 
	If $\sigma(X) \subseteq X'$ there is an optimal solution $x^*$ to $\max_{x\in X}\varphi(x)$ such that $x^* \in X'$.
\end{proposition}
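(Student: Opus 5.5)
The plan is a direct argument from finiteness and the defining inequality of an improving map. Since $X$ is finite and nonempty, the maximum $\max_{x\in X}\varphi(x)$ is attained. So I first fix some optimal solution $\hat{x}\in X$, that is, $\varphi(\hat{x})\geq\varphi(x)$ for all $x\in X$.

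I then set $x^*\coloneqq\sigma(\hat{x})$. By the hypothesis $\sigma(X)\subseteq X'$, this gives $x^*\in X'$ immediately. Because $\sigma$ is improving in the sense of \Cref{definition:improving-map}, we have $\varphi(x^*)=\varphi(\sigma(\hat{x}))\geq\varphi(\hat{x})$. Also $x^*\in X$, since $\sigma$ maps $X$ to $X$, so optimality of $\hat{x}$ gives $\varphi(x^*)\leq\varphi(\hat{x})$. Together these yield $\varphi(x^*)=\varphi(\hat{x})=\max_{x\in X}\varphi(x)$. Hence $x^*$ is an optimal solution and lies in $X'$, which is the claim.

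There is no real obstacle here. The only points needing care are that the maximum exists, which uses finiteness and nonemptiness of $X$, and that $\sigma(\hat{x})$ is feasible, which uses that $\sigma$ has codomain $X$. No earlier result of the paper is needed.
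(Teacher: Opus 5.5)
Your proof is correct and follows the same argument as the paper: take an optimal solution, apply $\sigma$, and use improvingness together with optimality to conclude that the image is optimal and lies in $X'$. Your version is slightly cleaner than the paper's. It justifies that a maximizer exists via finiteness and nonemptiness of $X$, and it avoids the paper's case distinction on whether the original optimum already lies in $X'$.
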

\begin{proofatend}[Proof of~\Cref{proposition:partial-optimality}]
	Let $x^*\in X$ be an optimal solution to $\max_{x\in X}\varphi(x)$ such that $x^*\notin X'$. Then, $\sigma(x^*)$ is also optimal since $\varphi(\sigma(x^*)) \geq \varphi(x^*)$. Thus, $\varphi(\sigma(x^*)) = \varphi(x^*)$, by improvingness of $\sigma$. 
	Moreover, $\sigma(x^*) \in X'$. 
	This concludes the proof.
\end{proofatend}

\begin{corollary}
	\label{corollary:partial-optimality-single-variable}
	Let $S\neq \emptyset$ finite, $X\subseteq \{0, 1\}^S$ and $\varphi\colon X\to \mathbb{R}$. 
	Let $\sigma \colon X\to X$ be improving for $\max_{x\in X}\varphi(x)$. 
	Let $s\in S$ and $\beta\in \{0, 1\}$. 
	If $\sigma(x)_s = \beta$ for all $x\in X$, there is an optimal solution $x^*$ to $\max_{x\in X}\varphi(x)$ such that $x^*_s = \beta$.
\end{corollary}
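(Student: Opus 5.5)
The plan is to obtain the corollary as a direct specialization of \Cref{proposition:partial-optimality}. I take the target subset to be
\begin{align}
	X' \coloneqq \{x \in X \mid x_s = \beta\}\enspace.
\end{align}
Then I check the hypotheses of that proposition and read off the conclusion.

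First, $X$ has to be nonempty and finite. Finiteness is immediate because $X \subseteq \{0,1\}^S$ with $S$ finite. Nonemptiness is implicit, since $\sigma \colon X \to X$ and the optimization problem are only meaningful for $X \neq \emptyset$; if needed I would state this as a standing assumption, as in \Cref{proposition:partial-optimality}. Second, $\sigma$ is improving for $\max_{x\in X}\varphi(x)$ by hypothesis. Third, I need $\sigma(X) \subseteq X'$. For every $x \in X$ we have $\sigma(x) \in X$ because $\sigma$ maps into $X$, and $\sigma(x)_s = \beta$ by assumption. Hence $\sigma(x) \in X'$.

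Applying \Cref{proposition:partial-optimality} then gives an optimal solution $x^* \in X'$, that is, $x^*_s = \beta$, which is the claim. There is no real obstacle here. The only point needing care is the nonemptiness of $X$, which the setting of the corollary guarantees.
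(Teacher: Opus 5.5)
Your proposal is correct and matches the paper, which gives no separate proof and obtains the corollary directly from \Cref{proposition:partial-optimality} with $X' = \{x \in X \mid x_s = \beta\}$. Your care about nonemptiness is warranted. The hypotheses as written do not guarantee $X \neq \emptyset$, so it is an implicit assumption rather than something the setting enforces. It does hold in every application in the paper, where $X = X_V[\hat{x}]$ with $\hat{x} \in \hat{X}_V$.
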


In \Cref{section:conditions}, we will construct maps $\sigma \colon X_V \to X_V$ from any feasible solution of $\pop{V}{c}$ to another and establish conditions under which such maps are improving.
These maps will be composed of two maps we introduce here:

\begin{definition}
	\label{definition:elementary-cut-map}
	Let $V\neq \emptyset$ finite and $U\subseteq V$. 
	We define the \emph{elementary dicut map} $\sigma_{\delta(U, V\setminus U)}\colon X_V \to X_V$ such that for all $x\in X_V$ and all $pq\in \pairs{V}$:
	\begin{equation}
		\sigma_{\delta(U, V\setminus U)}(x)_{pq} = \begin{cases}
			0 & \textnormal{if $pq\in \delta(U, V\setminus U)$} \\
			x_{pq} & \textnormal{otherwise}
		\end{cases}\enspace.
	\end{equation}
\end{definition}
\begin{lemma}
	\label{lemma:elementary-cut-map-well-definedness}
	The elementary dicut map is well-defined, i.e., $\forall V\neq \emptyset\forall U \subseteq V \colon \sigma_{\delta(U, V\setminus U)}(X_V) \subseteq X_V$.
\end{lemma}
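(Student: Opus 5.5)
We fix $x\in X_V$, abbreviate $x' \coloneqq \sigma_{\delta(U, V\setminus U)}(x)$, and show $x'\in X_V$. Clearly $x'\in\{0,1\}^{\pairs{V}}$, so it remains to check the transitivity inequalities. We verify, for every $pqr\in\triples{V}$, that $x'_{pq}+x'_{qr}-x'_{pr}\le 1$. Equivalently, we show that $x'_{pq}=x'_{qr}=1$ implies $x'_{pr}=1$, since the inequality can only fail when $x'_{pq}=x'_{qr}=1$ and $x'_{pr}=0$.

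Assume $x'_{pq}=x'_{qr}=1$. Because the map only sets entries to zero, we have $x'\le x$ componentwise, hence $x_{pq}=x_{qr}=1$, and transitivity of $x$ gives $x_{pr}=1$. Thus $x'_{pr}=0$ could only happen if $pr\in\delta(U,V\setminus U)$, that is, $p\in U$ and $r\notin U$. We rule this out by a case distinction on $q$. If $q\in U$, then $qr\in\delta(U,V\setminus U)$, so $x'_{qr}=0$, a contradiction. If $q\notin U$, then $pq\in\delta(U,V\setminus U)$, so $x'_{pq}=0$, again a contradiction. Hence $x'_{pr}=x_{pr}=1$, and the inequality holds.

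The argument is essentially a single observation. The only point needing care is that any directed path from $U$ to $V\setminus U$ must use an arc of $\delta(U,V\setminus U)$, which is exactly what the case distinction on the middle vertex $q$ captures. The degenerate cases $U=\emptyset$ and $U=V$ are covered automatically, because then $\delta(U,V\setminus U)=\emptyset$ and the map is the identity.
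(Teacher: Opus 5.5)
Your proof is correct and follows essentially the same route as the paper's: you use $x'\le x$ and transitivity of $x$ to reduce to the case $pr\in\delta(U,V\setminus U)$, and then get a contradiction from where the middle element $q$ lies. The only cosmetic difference is that you split on $q\in U$ versus $q\notin U$, whereas the paper derives both $q\in U$ and $q\in V\setminus U$ from $pq,qr\notin\delta(U,V\setminus U)$.
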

\begin{proofatend}[Proof of~\Cref{lemma:elementary-cut-map-well-definedness}]
	Let $x\in X_V$ and $x' = \sigma_{\delta(U, V\setminus U)}(x)$. Assume we have $pqr\in \triples{V}$ such that
	\begin{equation}
		x'_{pq} + x'_{qr} - x'_{pr} > 1\enspace.
	\end{equation}
	Then, $x'_{pq} = x'_{qr} = 1$ and $x'_{pr} = 0$. 
	It follows that $x_{pq} = x_{qr} = 1$ and $pq, qr\notin\delta(U, V\setminus U)$ from \Cref{definition:elementary-cut-map}. 
	If $pr\notin \delta(U, V\setminus U)$ it follows that $x_{pr} = 0$ from \Cref{definition:elementary-cut-map}. 
	In this case, $x\notin X_V$, which is a contradiction. 
	If $pr\in \delta(U, V\setminus U)$, then $p\in U$ and $r\in V\setminus U$. From $pq\notin \delta(U, V\setminus U)$ it follows that $q\in U$, and from $qr\notin \delta(U, V\setminus U)$ it follows that $q\in V\setminus U$, which is a contradiction.
\end{proofatend}

\begin{definition}
	\label{definition:elementary-join-map}
	Let $V\neq \emptyset$ finite and $ij\in \pairs{V}$. 
	We define the \emph{elementary join map} $\sigma_{ij}\colon X_V \to X_V$ such that for all $x\in X_V$ and all $pq\in \pairs{V}$:
	 \begin{numcases}{\sigma_{ij}(x)_{pq} = }
		\label{eq:elementary-join-map-path}
		1 & \textnormal{if $x_{pi} = x_{jq} = 1$}\\
		x_{pq} & \textnormal{otherwise}	\enspace .
	\end{numcases}
\end{definition}

\begin{lemma}
	\label{lemma:elementary-join-map-well-definedness}
	The elementary join map is well-defined, i.e., $\forall V\neq \emptyset \forall ij\in \pairs{V}\colon \sigma_{ij}(X_V) \subseteq X_V$.
\end{lemma}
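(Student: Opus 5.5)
We show directly that $x' \coloneqq \sigma_{ij}(x)$ satisfies every transitivity inequality, using the convention $x_{aa}=1$ for all $a\in V$. The key observation is the characterization
\begin{align}
	x'_{pq} = 1 \iff x_{pq} = 1 \lor (x_{pi} = 1 \land x_{jq} = 1)\enspace,
\end{align}
so $x'^{-1}(1)$ consists of the old relation together with all pairs $pq$ such that $p \lesssim i$ and $j \lesssim q$ in the preorder defined by $x$. (With the reflexive convention, the pair $ij$ itself is included, since $x_{ii}=x_{jj}=1$.) Two side remarks: $x' \geq x$ pointwise, and the definition only ever sets entries to $1$. For $p=q$ we need not care, since $\dom$ excludes diagonal pairs and $x'$ is only defined on $\pairs{V}$.

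Let $pqr\in\triples{V}$ with $x'_{pq}=x'_{qr}=1$. We must show $x'_{pr}=1$, and we split into four cases according to which alternative in the characterization yields each $1$.
\begin{itemize}
	\item[(i)] If $x_{pq}=x_{qr}=1$, then $x_{pr}=1$ by transitivity of $x$, so $x'_{pr}\ge x_{pr}=1$.
	\item[(ii)] If $x_{pq}=1$ and $x_{qi}=x_{jr}=1$, then $x_{pi}=1$ by transitivity of $x$. This needs a small check when $p=i$, where it holds trivially, or when $q=i$, where $x_{pi}=x_{pq}=1$. Together with $x_{jr}=1$ this gives $x'_{pr}=1$.
	\item[(iii)] If $x_{pi}=x_{jq}=1$ and $x_{qr}=1$, then symmetrically $x_{jr}=1$, hence $x'_{pr}=1$.
	\item[(iv)] If $x_{pi}=x_{jq}=1$ and $x_{qi}=x_{jr}=1$, then directly $x_{pi}=x_{jr}=1$ gives $x'_{pr}=1$.
\end{itemize}

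The only delicate point, and the one I expect to be the main (mild) obstacle, is the treatment of degenerate index coincidences. Transitivity of $x$ is stated only for triples of distinct elements, so chaining $x_{pq}=1$ with $x_{qi}=1$ when, say, $p=i$ or $q=i$ requires the reflexive extension $x_{aa}=1$. Once we verify that the preorder interpretation (reflexive and transitive on all of $V$) is equivalent to the triangle inequalities on $\triples{V}$ together with this convention, all four cases reduce to transitivity of a preorder. Hence $x'\in X_V$.
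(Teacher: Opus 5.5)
Your proof is correct and follows the paper's argument: the same four-way case split on whether $x'_{pq}=1$ and $x'_{qr}=1$ come from $x$ itself or from the join rule. The paper phrases it as a contradiction with $x'_{pr}=0$ and you argue directly, and your explicit use of the convention $x_{aa}=1$ for index coincidences (such as $p=i$ or $q=i$) makes precise a point the paper's proof uses silently.
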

\begin{proofatend}[Proof of~\Cref{lemma:elementary-join-map-well-definedness}]
	Let $x\in X_V$ and $x' = \sigma_{ij}(x)$. Assume that $x'\notin X_V$, i.e., we have $pqr\in \triples{V}$ such that
	\begin{equation}
		x'_{pq} + x'_{qr} - x'_{pr} > 1\enspace.
	\end{equation}
	Then, $x'_{pq} = x'_{qr} = 1$ and $x'_{pr} = 0$. From~\Cref{definition:elementary-join-map} it follows that $x_{pr} = 0$, and 
	$x_{pi} = 0$ or $x_{jr} = 0$.
	For $x_{pq}$ and $x_{qr}$, we consider the four remaining cases. 
	
	(1) Let $x_{pq} = x_{qr} = 1$ and $x_{pr} = 0$.
	In this case, $x\notin X_V$. 
	
	(2) Let $x_{pq} = 1$, $x_{qr} = 0$ and $x_{pr} = 0$. 
	Because $x'_{qr} = 1$, we have $x_{qi} = x_{jr} = 1$. 
	From this and $x_{pq} = 1$ follows $x_{pi} = x_{jr} = 1$, which is a contradiction.
	
	(3) Let $x_{pq} = 0$, $x_{qr} = 1$ and $x_{pr} = 0$. 
	Because $x'_{pq} = 1$, we have $x_{pi} = x_{jq} = 1$. 
	From this and $x_{qr} = 1$ follows $x_{pi} = x_{jr} = 1$, which is a contradiction.
	
	(4) Let $x_{pq} = x_{qr} = 0$ and $x_{pr} = 0$. 
	Because $x'_{pq} = x'_{qr} = 1$, we have $x_{pi} = x_{jq} = 1$ and $x_{qi} = x_{jr} = 1$. 
	From this follows $x_{pi} = x_{jr} = 1$, which is a contradiction.
	
	This concludes the proof.
\end{proofatend}

Once we have established some partial optimality in terms of a maximally specific partial characteristic function $\hat x \in \hat X_V$, we need to focus on maps $\sigma$ that maintain this partial optimality:
\begin{definition}
For any $V \neq\emptyset$ finite, any $\hat{x}\in \ppc{V}$ and any $\sigma\colon \pfs{V} \to \pfs{V}$, $\sigma$ is called \emph{true to $\hat{x}$} if $\sigma(\pfsc{V}{\hat{x}})\subseteq \pfsc{V}{\hat{x}}$. 
\end{definition}%

In order to establish that a map $\sigma$ is true to a partial characteristic function $\hat x \in \hat X_V$, we consider the pairs whose label can change from $i \in \{0, 1\}$ to $j\in \{0, 1\}$ under $\sigma$ (\Cref{def:pzo-poz}). 
More specifically, we will apply~\Cref{lemma:pzo-poz-supersets}.
\begin{definition}
\label{def:pzo-poz}
For any $V\neq \emptyset$ finite, 
any $\hat{x}\in \hat{X}_V$,
any $i, j \in \{0, 1\}$ and 
any $\sigma\colon X_V[\hat x] \to X_V$, let
$P_{ij}[\sigma] := \{e\in P_V\mid \exists x\in X_V[\hat x] \colon x_{e} = i \land \sigma(x)_{e} = j\}$.
\end{definition}
\begin{lemma}
	\label{lemma:pzo-poz-supersets}
Let $V\neq \emptyset$ finite, $\hat{x}\in \hat{X}_V$ and $\sigma\colon \pfsc{V}{\hat{x}}\to \pfs{V}$.
If there exist $P_{01}'\supseteq \pzo{\sigma}$ and $P_{10}' \supseteq \poz{\sigma}$ such that $P_{10}'\cap \hat{x}^{-1}(1) = P_{01}'\cap \hat{x}^{-1}(0) = \emptyset$, then $\sigma$ is true to $\hat{x}$.
\end{lemma}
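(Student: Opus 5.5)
Fix an arbitrary completion $x\in \pfsc{V}{\hat{x}}$ and write $x' \coloneqq \sigma(x)$. The plan is to verify directly, from the definition of completions, that $x' \in \pfsc{V}{\hat{x}}$.

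First, note that $\sigma$ maps into $\pfs{V}$, so $x'\in X_V$ holds automatically. It therefore remains to show that $x'_{pq} = \hat{x}_{pq}$ for every $pq\in\dom{\hat{x}}$. The domain splits as $\dom{\hat{x}} = \hat{x}^{-1}(0)\,\dot\cup\,\hat{x}^{-1}(1)$, so we treat the two parts separately.

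\textbf{Case $pq\in\hat{x}^{-1}(1)$.} Since $x$ is a completion of $\hat{x}$, we have $x_{pq}=1$. Suppose for contradiction that $x'_{pq}=0$. Then $x$ itself witnesses $pq\in\poz{\sigma}$ by \Cref{def:pzo-poz}, since $x \in \pfsc{V}{\hat{x}}$, $x_{pq}=1$ and $\sigma(x)_{pq}=0$. Because $\poz{\sigma}\subseteq P_{10}'$, this gives $pq\in P_{10}'\cap\hat{x}^{-1}(1)$, contradicting the assumption that this intersection is empty. Hence $x'_{pq}=1=\hat{x}_{pq}$.

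\textbf{Case $pq\in\hat{x}^{-1}(0)$.} This is symmetric. We have $x_{pq}=0$, and if $x'_{pq}=1$, then $x$ witnesses $pq\in\pzo{\sigma}\subseteq P_{01}'$. That would place $pq$ in $P_{01}'\cap\hat{x}^{-1}(0)=\emptyset$, a contradiction. Hence $x'_{pq}=0=\hat{x}_{pq}$.

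Combining both cases, $x'$ agrees with $\hat{x}$ on $\dom{\hat{x}}$, so $x'\in\pfsc{V}{\hat{x}}$. As $x$ was arbitrary, $\sigma(\pfsc{V}{\hat{x}})\subseteq\pfsc{V}{\hat{x}}$, which means $\sigma$ is true to $\hat{x}$.

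There is no real obstacle here; the argument is an unfolding of definitions. The only point requiring care is that the witness in \Cref{def:pzo-poz} has to be a completion of $\hat{x}$, and it is, because $x \in \pfsc{V}{\hat{x}}$ was chosen from the start. Maximal specificity of $\hat{x}$ plays no role in this argument.
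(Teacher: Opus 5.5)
Your proof is correct and is essentially the paper's argument. Both show that no pair of $\hat{x}^{-1}(1)$ lies in $\poz{\sigma}$ and no pair of $\hat{x}^{-1}(0)$ lies in $\pzo{\sigma}$, so $\sigma$ preserves these labels on every completion (and you state the resulting values correctly, whereas the paper's write-up swaps them, writing $\sigma(x)_e = 0$ for $e \in \hat{x}^{-1}(1)$ and $\sigma(x)_e = 1$ for $e \in \hat{x}^{-1}(0)$).
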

\begin{proofatend}[Proof of~\Cref{lemma:pzo-poz-supersets}]
	From $P_{10}'\cap \hat{x}^{-1}(1) = P_{01}'\cap \hat{x}^{-1}(0) = \emptyset$ follows $\poz{\sigma}\cap \hat{x}^{-1}(1) = \pzo{\sigma}\cap \hat{x}^{-1}(0) = \emptyset$. 
	Thus, for every $e\in \hat{x}^{-1}(1)$ we have $e\notin \poz{\sigma}$, and therefore 
	for every $x\in \pfsc{V}{\hat{x}}$ we have $\sigma(x)_e = 0$ by definition of~$\poz{\sigma}$ and $x_e = \hat{x}_e = 1$.
	Moreover, for every $e\in \hat{x}^{-1}(0)$ we have $e\notin \pzo{\sigma}$, and therefore for every $x\in \pfsc{V}{\hat{x}}$ we have $\sigma(x)_e = 1$ by definition of~$\pzo{\sigma}$ and $x_e = \hat{x}_e = 0$.
\end{proofatend}

Another idea we will use in \Cref{section:conditions} is to apply a map $\sigma$ only if $x_{ij} = b$ for some $ij \in P_V, b \in \{0,1\}$:
\begin{definition}
For any $V\neq \emptyset$ finite, 
any $\hat{x}\in \hat{X}_V$, 
any $ij\in P_V\setminus \dom{\hat{x}}$,
any $b\in \{0, 1\}$ and 
any $\sigma\colon X_V[\hat x] \to X_V$, 
let $\sigma^{ij|b}\colon \pfsc{V}{\hat{x}}\to \pfs{V}$ such that for all $x \in \pfsc{V}{\hat{x}}$:
$\sigma^{ij | b}(x) = x$ if $x_{ij} = b$, and $\sigma^{ij | b}(x) = \sigma(x)$ otherwise.
\end{definition}

%% file: section-conditions.tex
\section{Partial Optimality Conditions}
\label{section:conditions}

%% file: section-cut-conditions.tex
\subsection{Cut Conditions}
To begin with, consider $ij \in \pairs{V}$ and $U\subseteq V$ such that $ij\in \delta(U, V\setminus U)$.
We establish a sufficient condition for the map \smash{$\sigma_{\delta(U, V\setminus U)}^{ij | 0}$} to be improving.
In this case, there exists an optimal solution $x$ such that $x_{ij} = 0$ (\Cref{proposition:edge-cut-condition}).
In order to maintain partial optimality established previously in the form of some \smash{$\hat x \in \hat X_V$}, we discuss trueness (\Cref{lemma:cut-map-trueness}).
\begin{lemma}
	\label{lemma:cut-map-trueness}
	Let $V\neq \emptyset$ finite, $\hat{x}\in \ppc{V}$ and $U \subseteq V$.
	$\sigma_{\delta(U, V\setminus U)}$ is true to $\hat{x}$ if $\delta(U, V\setminus U) \cap \hat{x}^{-1}(1) = \emptyset$.
\end{lemma}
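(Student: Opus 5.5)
The plan is to invoke \Cref{lemma:pzo-poz-supersets} with the sets $P_{01}' \coloneqq \emptyset$ and $P_{10}' \coloneqq \delta(U, V\setminus U)$. Throughout, write $\sigma$ for the restriction of $\sigma_{\delta(U, V\setminus U)}$ to $\pfsc{V}{\hat{x}}$. By \Cref{lemma:elementary-cut-map-well-definedness}, this restriction maps $\pfsc{V}{\hat{x}}$ into $\pfs{V}$, so \Cref{lemma:pzo-poz-supersets} applies. Its conclusion, $\sigma(\pfsc{V}{\hat{x}})\subseteq\pfsc{V}{\hat{x}}$, is exactly trueness of $\sigma_{\delta(U, V\setminus U)}$ to $\hat{x}$.

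First, I would show $\pzo{\sigma} = \emptyset$. By \Cref{definition:elementary-cut-map}, for every $x$ and every pair $pq$, the value $\sigma(x)_{pq}$ is either $0$ or $x_{pq}$. Hence no pair changes from $0$ to $1$, and $P_{01}'=\emptyset \supseteq \pzo{\sigma}$ holds trivially.

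Second, I would show $\poz{\sigma}\subseteq \delta(U, V\setminus U)$. If $pq\notin\delta(U, V\setminus U)$, then $\sigma(x)_{pq}=x_{pq}$, so such a pair cannot change from $1$ to $0$. It follows that $P_{10}'=\delta(U,V\setminus U)\supseteq\poz{\sigma}$.

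Finally, I would check the two disjointness conditions of \Cref{lemma:pzo-poz-supersets}. The condition $P_{10}'\cap\hat{x}^{-1}(1)=\emptyset$ is precisely the hypothesis of the present lemma. The condition $P_{01}'\cap\hat{x}^{-1}(0)=\emptyset$ holds trivially because $P_{01}'$ is empty. Every step is direct; the only point needing care is confirming that \Cref{lemma:pzo-poz-supersets} applies to the restricted map, which is what \Cref{lemma:elementary-cut-map-well-definedness} provides.
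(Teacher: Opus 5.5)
Your proof is correct and is essentially the paper's argument. The paper checks inline that $\sigma_{\delta(U,V\setminus U)}(x)\le x$ preserves every zero of $\hat{x}$, and that the ones of $\hat{x}$, which lie outside the dicut by hypothesis, are left unchanged; these are exactly your facts $\pzo{\sigma}=\emptyset$ and $\poz{\sigma}\subseteq\delta(U,V\setminus U)$, which you route through \Cref{lemma:pzo-poz-supersets}. Your explicit appeal to \Cref{lemma:elementary-cut-map-well-definedness} for $\sigma(x)\in\pfs{V}$ also supplies a step the paper leaves implicit.
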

\begin{proofatend}[Proof of~\Cref{lemma:cut-map-trueness}]
	Let $x\in \pfsc{V}{\hat{x}}$ and $x' = \sigma_{\delta(U, V\setminus U)}(x)$.
	Since $x' \leq x$, we have $\hat{x}^{-1}(0) \subseteq x^{-1}(0) \subseteq (x')^{-1}(0)$. 
	For every $pq\in \hat{x}^{-1}(1)$ we have $pq\notin \delta(U, V\setminus U)$ by assumption, and thus, $x'_{pq} = x_{pq} = 1$ by~\Cref{definition:elementary-cut-map}. Therefore, $\hat{x}^{-1}(1) \subseteq (x')^{-1}(1)$.
\end{proofatend}

\begin{theorem}
	\label{proposition:edge-cut-condition}
	Let $V\neq \emptyset$ finite, $c\in \mathbb{R}^{\pairs{V}}$ and $\hat{x}\in \hat{X}_V$. Moreover, let $U \subseteq V$ and $ij\in \delta(U, V\setminus U)\setminus \dom{\hat{x}}$. 
	If $\delta(U, V\setminus U) \cap \hat{x}^{-1}(1) = \emptyset$ and \eqref{eq:improvingness-conditional-cut-map-constrained} holds,
	there is a solution $x^*$ to $\popcc{V}{c}{\hat{x}}$ with $x^*_{ij} = 0$.
		\begin{equation}
		\label{eq:improvingness-conditional-cut-map-constrained}
		c_{ij}^- \geq \sum_{pq\in \delta(U, V\setminus U)\setminus \hat{x}^{-1}(0)}c_{pq}^+
	\end{equation}
\end{theorem}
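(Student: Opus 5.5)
We use the conditional map $\sigma := \sigma_{\delta(U, V\setminus U)}^{ij|0}$ on $\pfsc{V}{\hat{x}}$ and show three things: it maps into $\pfsc{V}{\hat{x}}$, it is improving for $\popcc{V}{c}{\hat{x}}$, and every image satisfies $\sigma(x)_{ij}=0$. Then \Cref{corollary:partial-optimality-single-variable}, applied with $X = \pfsc{V}{\hat{x}}$ (nonempty since $\hat{x}\in\hat{X}_V$), $s = ij$ and $\beta = 0$, gives an optimal $x^*$ with $x^*_{ij}=0$. Here $c^-_{ij} := \max\{0,-c_{ij}\}$ and $c^+ := \max\{0,c\}$, as the notation in \eqref{eq:improvingness-conditional-cut-map-constrained} suggests. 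The hypothesis $ij\notin\dom{\hat{x}}$ is exactly what makes $\sigma^{ij|0}$ defined.

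\emph{Trueness and the value of $ij$.} If $x_{ij}=0$, then $\sigma(x)=x$ lies in $\pfsc{V}{\hat{x}}$ and has $\sigma(x)_{ij}=0$. Otherwise $\sigma(x)=\sigma_{\delta(U,V\setminus U)}(x)$. This lies in $X_V$ by \Cref{lemma:elementary-cut-map-well-definedness} and in $\pfsc{V}{\hat{x}}$ by \Cref{lemma:cut-map-trueness}, using $\delta(U,V\setminus U)\cap\hat{x}^{-1}(1)=\emptyset$. It also has $\sigma(x)_{ij}=0$ because $ij\in\delta(U,V\setminus U)$.

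\emph{Improvingness.} Fix $x\in\pfsc{V}{\hat{x}}$. If $x_{ij}=0$ there is nothing to show. If $x_{ij}=1$, let $x'=\sigma_{\delta(U,V\setminus U)}(x)$. Then
\begin{align}
\varphi_c(x')-\varphi_c(x) = -\sum_{pq\in\delta(U,V\setminus U)} c_{pq}x_{pq}.
\end{align}
For $pq\in\hat{x}^{-1}(0)$ we have $x_{pq}=0$, so these terms vanish. The $ij$ term contributes exactly $-c_{ij}$, since $x_{ij}=1$ and $ij\notin\hat{x}^{-1}(0)$. Each remaining term satisfies $-c_{pq}x_{pq}\ge -c^+_{pq}$. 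Hence
\begin{align}
\varphi_c(x')-\varphi_c(x) \ge -c_{ij} - \sum_{pq\in\delta(U,V\setminus U)\setminus(\hat{x}^{-1}(0)\cup\{ij\})} c^+_{pq}.
\end{align}

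\emph{Matching to \eqref{eq:improvingness-conditional-cut-map-constrained}.} This bookkeeping step is the main obstacle, because the sum in \eqref{eq:improvingness-conditional-cut-map-constrained} includes $c^+_{ij}$. Write $-c_{ij} = c^-_{ij}-c^+_{ij}$. The lower bound then becomes
\begin{align}
c^-_{ij}-\sum_{pq\in\delta(U,V\setminus U)\setminus\hat{x}^{-1}(0)} c^+_{pq},
\end{align}
which is $\ge 0$ by \eqref{eq:improvingness-conditional-cut-map-constrained}. So $\sigma$ is improving, and the corollary concludes the proof.
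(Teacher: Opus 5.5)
Your proof is correct and follows the paper's argument essentially step for step. The paper also applies $\sigma_{\delta(U, V\setminus U)}^{ij|0}$, gets trueness from \Cref{lemma:cut-map-trueness}, bounds the objective change below by $c^-_{ij} - \sum_{pq\in\delta(U, V\setminus U)\setminus\hat{x}^{-1}(0)} c^+_{pq}$ using the same split $-c_{ij} = c^-_{ij} - c^+_{ij}$, and concludes with \Cref{corollary:partial-optimality-single-variable}.
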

\begin{proofatend}[Proof of~\Cref{proposition:edge-cut-condition}]
	Let $x\in X_V[\hat{x}]$ and $x' = \sigma_{\delta(U, V\setminus U)}^{ij | 0}(x)$. 
	Now, $x'\in X_V[\hat{x}]$ by~\Cref{lemma:cut-map-trueness} and $\delta(U, V\setminus U) \cap \hat{x}^{-1}(1) = \emptyset$.
	
	(1) If $x_{ij} = 0$, we have $\varphi_c(x') = \varphi_c(x)$. 
	
	(2) If $x_{ij} = 1$, then
	\begin{align}
		&\varphi_c(x') - \varphi_c(x)  = \sum_{pq\in \delta(U, V\setminus U)\setminus \hat{x}^{-1}(0)}c_{pq}\left(0 - x_{pq} \right) 
		= - c_{ij} - \sum_{\substack{pq\in \delta(U, V\setminus U) \setminus \hat{x}^{-1}(0) \\ pq \neq ij}}c_{pq}x_{pq}\\
		\geq &-c_{ij} - \sum_{\substack{pq\in \delta(U, V\setminus U)\setminus \hat{x}^{-1}(0) \\ pq \neq ij}}c^+_{pq} 
		= c^-_{ij} - \sum_{pq\in \delta(U, V\setminus U)\setminus \hat{x}^{-1}(0)} c^+_{pq} \overset{\eqref{eq:improvingness-conditional-cut-map-constrained}}{\geq} 0
	\end{align}
	
	Therefore, the map $\sigma^{ij | 0}_{\delta(U, V\setminus U)}$ is improving. Moreover, $x'_{ij} = 0$. By~\Cref{corollary:partial-optimality-single-variable} it follows that there exists an optimal solution $x^*$ to $\popcc{V}{c}{\hat{x}}$ such that $x^*_{ij} = 0$.
\end{proofatend}

\Cref{proposition:edge-cut-condition} is related to Thm.~1 of~\citet{lange2019} but considers directed edges and dicuts instead of undirected edges and cuts.
Moreover, it incorporates already-established partial optimality.
More efficiently decidable than the condition of \Cref{proposition:edge-cut-condition} is the condition of the following specialization:

\begin{corollary}
	\label{corollary:directed-cut-condition}
	Let $V\neq \emptyset$ finite, $c\in \mathbb{R}^{\pairs{V}}$, $\hat{x}\in \ppc{V}$, $U \subseteq V$ and $\hat{x}\in \hat{X}_V$ such that $\hat{x}^{-1}(1)\cap \delta(U, V\setminus U) = \emptyset$. 
	If for all $ij\in \delta(U, V\setminus U) \setminus \hat{x}^{-1}(0)$ we have $0 \geq c_{ij} $ then
	there is a solution $x^*$ to $\popcc{V}{c}{\hat{x}}$ such that $x^*_{ij} = 0$ for all $ij\in \delta(U, V\setminus U)\setminus \hat{x}^{-1}(0)$.
\end{corollary}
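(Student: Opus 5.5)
The plan is to apply \Cref{proposition:partial-optimality} directly to the unconditional elementary dicut map $\sigma \coloneqq \sigma_{\delta(U, V\setminus U)}$, restricted to $X_V[\hat{x}]$. Going through \Cref{proposition:edge-cut-condition} once per pair would not do: it only gives, for each pair separately, some optimal solution with that one entry zero, whereas the statement asks for a single $x^*$ that is zero on every pair at once. So we want one map that sends all completions into the set
\[
X' \coloneqq \{x\in X_V[\hat{x}] \mid \forall pq\in \delta(U, V\setminus U)\colon x_{pq}=0\}.
\]

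\textbf{Step 1 (feasibility and trueness).} By \Cref{lemma:elementary-cut-map-well-definedness}, $\sigma(X_V)\subseteq X_V$. The hypothesis $\hat{x}^{-1}(1)\cap \delta(U, V\setminus U)=\emptyset$ together with \Cref{lemma:cut-map-trueness} gives that $\sigma$ is true to $\hat{x}$. Hence $\sigma$ restricts to a map $X_V[\hat{x}]\to X_V[\hat{x}]$. By \Cref{definition:elementary-cut-map}, its image sets every pair of $\delta(U, V\setminus U)$ to $0$, so $\sigma(X_V[\hat{x}])\subseteq X'$.

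\textbf{Step 2 (improvingness).} Fix $x\in X_V[\hat{x}]$. The map $\sigma$ only changes entries in $\delta(U, V\setminus U)$, and for $pq\in\hat{x}^{-1}(0)$ we already have $x_{pq}=0$. Therefore
\[
\varphi_c(\sigma(x))-\varphi_c(x) = -\sum_{pq\in \delta(U, V\setminus U)\setminus \hat{x}^{-1}(0)} c_{pq}\,x_{pq}\enspace .
\]
Every $c_{pq}$ in this sum is $\le 0$ by assumption, so the right-hand side is $\ge 0$. Thus $\sigma$ is improving for $\max_{x\in X_V[\hat{x}]}\varphi_c(x)$. The set $X_V[\hat{x}]$ is nonempty because $\hat{x}\in\hat{X}_V$, which \Cref{definition:improving-map} requires.

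\textbf{Step 3 (conclusion).} \Cref{proposition:partial-optimality} applied with $X=X_V[\hat{x}]$ and the set $X'$ yields an optimal $x^*\in X'$ of $\popcc{V}{c}{\hat{x}}$. Such an $x^*$ vanishes on all of $\delta(U, V\setminus U)$, and in particular on $\delta(U, V\setminus U)\setminus \hat{x}^{-1}(0)$.

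The argument is routine. The only point needing care is Step 1: the restriction of $\sigma$ must map $X_V[\hat{x}]$ into itself, not merely into $X_V$, and this is exactly what the hypothesis on $\hat{x}^{-1}(1)$ secures via \Cref{lemma:cut-map-trueness}.
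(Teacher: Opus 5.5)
Your proof is correct, but it takes a different route from the paper.

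\textbf{The paper's route.} The paper derives the corollary from \Cref{proposition:edge-cut-condition}. Since $c_{pq}^+ = 0$ for every $pq \in \delta(U, V\setminus U)\setminus \hat{x}^{-1}(0)$, the right-hand side of \eqref{eq:improvingness-conditional-cut-map-constrained} vanishes. The theorem therefore applies to each such pair separately, and the simultaneous statement is then obtained ``by iteration over all'' pairs.

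\textbf{Your route.} You bypass the theorem and apply \Cref{proposition:partial-optimality} once, to the unconditional dicut map $\sigma_{\delta(U, V\setminus U)}$. The key observation is that when all costs on the dicut are non-positive, this map is already improving. No conditioning on $x_{ij}$ (as in $\sigma^{ij|0}_{\delta(U,V\setminus U)}$) is needed. This gives one optimal completion that vanishes on the whole dicut in a single step, and the argument is fully self-contained.

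\textbf{Your dismissal of the per-pair route is too strong.} It works once the iteration is made explicit. After establishing an optimal completion with $x_{ij}=0$, replace $\hat{x}$ by the closure of $\hat{x}$ extended by $\hat{x}_{ij}=0$. This update has three properties:
\begin{itemize}
\item It leaves $\hat{x}^{-1}(1)$ unchanged, because $\hat{x}$ was maximally specific. Hence the hypothesis $\hat{x}^{-1}(1)\cap\delta(U,V\setminus U)=\emptyset$ persists.
\item It only shrinks the index set of the sum in \eqref{eq:improvingness-conditional-cut-map-constrained}, so the condition for the remaining pairs still holds.
\item Optimal solutions of the new constrained problem remain optimal for the old one, since the old problem attains its maximum inside the new feasible set.
\end{itemize}

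\textbf{Trade-off.} The paper leaves these checks implicit, and your argument avoids them entirely. The paper's route has the expository merit of presenting the corollary as a literal specialization of \Cref{proposition:edge-cut-condition}, matching the surrounding narrative about more efficiently decidable conditions.
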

\begin{proofatend}[Proof of~\Cref{corollary:directed-cut-condition}]
	For every $ij\in \delta(U, V\setminus U)\setminus \hat{x}^{-1}(0)$, we have $c_{ij}^+ = 0$. 
	Thus, \Cref{proposition:edge-cut-condition} is fulfilled for every $ij\in \delta(U, V\setminus U)$ as 
	\begin{equation}
		c_{ij}^- \geq \sum_{pq\in \delta(U, V\setminus U) \setminus \hat{x}^{-1}(0)}c_{pq}^+  = 0\enspace.
	\end{equation}
	By \Cref{proposition:edge-cut-condition}, there exists an optimal solution $x^*\in X_V[\hat{x}]$ such that $x^*_{ij} = 0$. 
	By iteration over all $ij\in \delta(U, V\setminus U) \setminus \hat{x}^{-1}(0)$, the claim follows.
\end{proofatend}

\Cref{corollary:directed-cut-condition} relates to Thm.~1 of~\citet{alush2012}. 
But unlike in the case of clustering where setting all edges in a cut to zero leads to a decomposition of the problem, here, in the case of preordering, setting all directed edges in a dicut to zero does not necessarily result in a decomposition.

%% file: section-join-conditions.tex
\subsection{Join Condition}
Next, consider $U, U' \subseteq V$ disjoint, $i\in U$ and $j \in U'$. 
We establish a sufficient condition for the map $\gamma^{ij | 1}$ with $\gamma := \sigma_{ij} \circ \sigma_{\delta(V\setminus U, U)} \circ \sigma_{\delta(U', V\setminus U')}$ to be improving.
In this case, there exists an optimal solution with $x_{ij} = 1$ for some $ij\in P_V$ (\Cref{proposition:edge-join-condition}).
The map $\gamma$ is illustrated in \Cref{figure:illustration-join-map}.
In order to maintain partial optimality established previously in the form of some $\hat x \in \hat X_V$, we discuss trueness of $\gamma^{ij | 1}$ to $\hat x$ (\Cref{lemma:join-map-pzo-poz,lemma:join-map-trueness}).

\input{figure-visualization-join-map-example-combined}

\begin{lemma}
\label{lemma:join-map-pzo-poz}
For any $V\neq \emptyset$ finite, 
any $U, U' \subseteq V$ disjoint,
$U'' := V \setminus \left(U\cup U'\right)$,
any $i\in U$, $j\in U'$, 
any $\hat{x}\in \hat{X}_V$ and
$P_{01}' \coloneqq \{pq\in U \times U'\mid \hat{x}_{pi} \neq 0 \neq \hat{x}_{jq}\} \setminus \hat{x}^{-1}(1)$ and
$P_{10}' \coloneqq \left(\left(U'' \times U\right) \cup \left(U'\times U \right) \cup \left(U'\times U''\right)\right) \setminus \hat{x}^{-1}(0)$,
we have $\pzo{\gamma^{ij | 1}}\subseteq P_{01}'$ and $\poz{\gamma^{ij | 1}}\subseteq P_{10}'$.
\end{lemma}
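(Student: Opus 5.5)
The plan is to unfold the definition of $\gamma^{ij|1}$ and track, for a fixed completion $x\in X_V[\hat x]$, which coordinates can change under each of the three elementary maps. If $x_{ij}=1$, then $\gamma^{ij|1}(x)=x$, so no pair changes and nothing needs to be shown. So fix $x\in X_V[\hat x]$ with $x_{ij}=0$ and write the three stages as
\[
y:=\sigma_{\delta(U', V\setminus U')}(x),\qquad z:=\sigma_{\delta(V\setminus U, U)}(y),\qquad x':=\sigma_{ij}(z)=\gamma(x).
\]
By \Cref{lemma:elementary-cut-map-well-definedness,lemma:elementary-join-map-well-definedness}, all of these lie in $X_V$. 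The pointwise relations we will use are $z\le y\le x$, since the cut maps only set coordinates to $0$, and $x'\ge z$, since the join map only sets coordinates to $1$.

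For $\poz{\gamma^{ij|1}}$, take $pq$ with $x_{pq}=1$ and $x'_{pq}=0$. Since the join map never sets a coordinate to $0$, we must have $z_{pq}=0$. Since $x_{pq}=1$, the pair $pq$ was set to $0$ by one of the two cut maps. Hence
\[
pq\in\delta(U',V\setminus U')\cup\delta(V\setminus U,U)=(U'\times U)\cup(U'\times U'')\cup(U''\times U),
\]
where we use that $U$ and $U'$ are disjoint and $V\setminus U'=U\cup U''$, $V\setminus U=U'\cup U''$. Moreover $x_{pq}=1$ and $x$ is a completion of $\hat x$, so $pq\notin\hat x^{-1}(0)$. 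Together this gives $pq\in P_{10}'$.

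For $\pzo{\gamma^{ij|1}}$, take $pq$ with $x_{pq}=0$ and $x'_{pq}=1$. Since $z\le x$, we have $z_{pq}=0$, so the pair was switched on by $\sigma_{ij}$. That is, $z_{pi}=1$ and $z_{jq}=1$, with the conventions $z_{ii}=z_{jj}=1$ covering $p=i$ or $q=j$. We then argue as follows.
\begin{itemize}
\item If $p\neq i$ and $p\notin U$, then $pi\in\delta(V\setminus U,U)$ because $i\in U$. So $z_{pi}=0$, a contradiction. Hence $p\in U$.
\item Symmetrically, $z\le y$ and $y$ has no pair in $\delta(U',V\setminus U')$. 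Since $j\in U'$, this forces $q\in U'$.
\item From $z\le x$ we get $x_{pi}=1$ and $x_{jq}=1$. As $x$ agrees with $\hat x$ on $\dom\hat x$, this yields $\hat x_{pi}\neq 0$ and $\hat x_{jq}\neq 0$. This holds trivially on the diagonal by the convention $\hat x_{aa}=1$.
\item Finally, $x_{pq}=0$ gives $pq\notin\hat x^{-1}(1)$.
\end{itemize}
Hence $pq\in P_{01}'$.

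There is no substantial obstacle. The only points needing care are the correct identification of the union of the two dicuts with the three blocks in $P_{10}'$, and the bookkeeping for the diagonal cases $p=i$ or $q=j$ in the join step. For the latter we rely on the convention $x_{aa}=\hat x_{aa}=1$, and on the fact that $pq\in P_V$ forces $p\neq q$.
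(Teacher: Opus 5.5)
Your proof is correct and follows essentially the same argument as the paper. Both proofs split $\gamma$ into the two dicut maps followed by $\sigma_{ij}$, and both use three facts: the dicut maps only switch pairs off, $\sigma_{ij}$ only switches pairs on, and $x_{pi}=x_{jq}=1$ survives the cuts only if $p\in U$ and $q\in U'$. The only difference is organization. The paper partitions $P_V$ into blocks and rules out membership in $\pzo{\gamma^{ij|1}}$ or $\poz{\gamma^{ij|1}}$ block by block, whereas you start from a changed pair and deduce its location directly, which is slightly more economical.
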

\begin{proofatend}[Proof of~\Cref{lemma:join-map-pzo-poz}]
Let $x\in X_V[\hat{x}]$, and $\bar{x} = \gamma^{ij | 1}(x)$. 
If $x_{ij} = 1$, we have $x' = x$. 
Now, we consider the case $x_{ij} = 0$. 
Let $x'' = \sigma_{\delta(V\setminus U, U)} \circ \sigma_{\delta(U', V\setminus U')}(x)$. 

(1) Let $pq\in P_U \cup P_{U'} \cup P_{U''} \cup \left(U\times U''\right) \cup \left(U''\times U'\right)$. 
We show that $\bar{x}_{pq} = x_{pq}$ and thus $pq\notin \pzo{\gamma^{ij | 1}}$ and $pq\notin \poz{\gamma^{ij | 1}}$.
We have $x''_{pq} = x_{pq}$ and $p\in V\setminus U$ or $q\in V\setminus U'$.
Thus, $pi \in \delta(V\setminus U, U)$ or $jq \in \delta(U', V\setminus U')$. 
Firstly, we consider the case $1 = x_{pq} = x''_{pq}$. Then $\bar{x}_{pq} = 1$, as $\bar{x} \geq x''$ by~\Cref{definition:elementary-join-map}.
Secondly, we consider the case $0 = x_{pq} = x''_{pq}$. 
Since $pi \in \delta(V\setminus U, U)$ or $jq \in \delta(U', V\setminus U')$ it follows that $x''_{pi} = 0$ or $x''_{jq} = 0$, respectively. Therefore, it follows that $\bar{x}_{pq} = 0$ by~\Cref{definition:elementary-join-map}. 
Thus $pq\notin \pzo{\gamma^{ij|1}} \cup \poz{\gamma^{ij|1}}$.

(2) Let $pq\in U \times U'$. 

First, we show $pq\notin \poz{\gamma^{ij | 1}}$. 
We have $x''_{pq} = x_{pq}$. 
If $1 = x_{pq} = x''_{pq}$, then $\bar{x}_{pq} = 1$, as $\bar{x} \geq x''$. 
Thus, $pq\notin \poz{\gamma^{ij | 1}}$. 

Second, we show: If $\hat{x}_{pi} = 0$ or $\hat{x}_{jq} = 0$, then $pq\notin \pzo{\gamma^{ij | 1}}$. 
It follows that $x_{pi} = 0$ or $x_{jq} = 0$. 
We conclude that $x''_{pi} = 0$ or $x''_{jq} = 0$ by~\Cref{definition:elementary-cut-map}. Therefore, $x_{pq} = 0$ implies $\bar{x}_{pq} = x''_{pq} = 0$ by~\Cref{definition:elementary-join-map}. 
Thus $pq\notin \pzo{\gamma^{ij | 1}}$ if $\hat{x}_{pi} = 0$ or $\hat{x}_{jq} = 0$. 

(3) Let $pq\in \left(U''\times U\right) \cup \left(U'\times U\right) \cup \left(U'\times U''\right)$. 
This implies $pi\in \delta(V\setminus U, U)$ and $jq\in \delta(U', V\setminus U')$ and therefore $x''_{pi} = 0$ and $x''_{jq} = 0$. 
We have $x''_{pq} = 0$ since $pq\in \delta(V\setminus U, U)\cup \delta(U', V\setminus U')$. Moreover, $\bar{x}_{pq} = x''_{pq} = 0$ by~\Cref{definition:elementary-join-map} and since $x''_{pi} = x''_{jq} = 0$.
Therefore, $pq\notin \pzo{\gamma^{ij | 1}}$. 

(4) Obviously, if $pq\in \hat{x}^{-1}(0)$, then $pq\notin \poz{\gamma^{ij|1}}$, and if $pq\in \hat{x}^{-1}(1)$, then $pq\notin \pzo{\gamma^{ij | 1}}$. 

This concludes the proof.
\end{proofatend}

\begin{corollary}
\label{lemma:join-map-trueness}
For any $V\neq \emptyset$ finite, 
any $U, U' \subseteq V$ disjoint,
$U'' := V\setminus \left(U\cup U'\right)$,
any $i\in U$, 
any $j\in U'$, 
any $\hat{x}\in \hat{X}_V$ 
and $P_{01}'$ and $P_{10}'$ as in~\Cref{lemma:join-map-pzo-poz},
$\gamma^{ij | 1}$ is true to $\hat{x}$ if 
$\hat{x}^{-1}(0)\cap P_{01}' = \emptyset$ and 
$\hat{x}^{-1}(1) \cap P_{10}' = \emptyset$.
\end{corollary}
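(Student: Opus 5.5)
This corollary follows by combining \Cref{lemma:join-map-pzo-poz} with \Cref{lemma:pzo-poz-supersets}. Let $U, U', U''$, $i$, $j$ and $\hat{x}$ be as in the statement, and define $P_{01}'$ and $P_{10}'$ as in \Cref{lemma:join-map-pzo-poz}. That lemma gives the two inclusions $\pzo{\gamma^{ij|1}} \subseteq P_{01}'$ and $\poz{\gamma^{ij|1}} \subseteq P_{10}'$. So $P_{01}'$ and $P_{10}'$ are supersets of exactly the kind required in the hypothesis of \Cref{lemma:pzo-poz-supersets}.

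Before invoking \Cref{lemma:pzo-poz-supersets}, I would check that its typing requirement holds, namely that $\gamma^{ij|1}$ is a map $\pfsc{V}{\hat{x}} \to \pfs{V}$. By \Cref{lemma:elementary-cut-map-well-definedness} and \Cref{lemma:elementary-join-map-well-definedness}, each factor of $\gamma = \sigma_{ij} \circ \sigma_{\delta(V\setminus U, U)} \circ \sigma_{\delta(U', V\setminus U')}$ maps $X_V$ into $X_V$, so $\gamma$ does too. The conditional version $\gamma^{ij|1}$ either returns $x$ itself or returns $\gamma(x)$, so it also maps $\pfsc{V}{\hat{x}}$ into $X_V$.

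The conditional version $\gamma^{ij|1}$ is only defined when $ij \notin \dom{\hat{x}}$, so I would assume that implicitly. Even without it, if $\hat{x}_{ij} = 1$ then $\gamma^{ij|1}$ is the identity on $\pfsc{V}{\hat{x}}$, and trueness is trivial.

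The assumptions of the corollary, $\hat{x}^{-1}(0) \cap P_{01}' = \emptyset$ and $\hat{x}^{-1}(1) \cap P_{10}' = \emptyset$, are exactly the disjointness conditions of \Cref{lemma:pzo-poz-supersets}. That lemma then yields $\gamma^{ij|1}(\pfsc{V}{\hat{x}}) \subseteq \pfsc{V}{\hat{x}}$, which is trueness to $\hat{x}$.

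I do not expect a genuine obstacle: the substantive work is already done in \Cref{lemma:join-map-pzo-poz}. The only point needing care is the well-definedness check above.
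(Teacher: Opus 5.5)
Your proposal is correct and matches the paper's intended argument: the paper states this corollary without a separate proof, as the immediate combination of \Cref{lemma:join-map-pzo-poz} (which supplies the supersets $P_{01}'$ and $P_{10}'$) with \Cref{lemma:pzo-poz-supersets}. Your check that $\gamma^{ij|1}$ maps $\pfsc{V}{\hat{x}}$ into $\pfs{V}$ is correct, and it is extra care that the paper leaves implicit.
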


\begin{theorem}
	\label{proposition:edge-join-condition}
	Let $V\neq \emptyset$ finite and $c \in \mathbb{R}^{\pairs{V}}$.
	Let $U, U'\subseteq V$ disjoint and $U'' := V\setminus \left(U \cup U'\right)$. 
	Moreover, let $i \in U$, $j\in U'$, $\hat{x}\in \hat{X}_V$ and $P_{01}'$ and $P_{10}'$ as in~\Cref{lemma:join-map-pzo-poz}. 
	If $\gamma^{ij | 1}$ is true to $\hat{x}$ and~\eqref{eq:improvingness-join-generalization-map-1} holds then there is a solution $x^*$ to $\popcc{V}{c}{\hat{x}}$ such that $x^*_{ij} = 1$. 
	\begin{align}
		\label{eq:improvingness-join-generalization-map-1}
		c_{ij}^+ \geq 
		&\sum_{pq\in P_{01}'} c_{pq}^- 
		+ 
		\sum_{pq\in P_{10}'} c_{pq}^+
	\end{align}
\end{theorem}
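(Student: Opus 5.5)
We follow the same pattern as the proof of \Cref{proposition:edge-cut-condition}. We show that $\gamma^{ij|1}$, restricted to $X_V[\hat{x}]$, is an improving map for $\popcc{V}{c}{\hat{x}}$ whose image has $x_{ij}=1$, and then invoke \Cref{corollary:partial-optimality-single-variable}. Since $\gamma^{ij|1}$ is assumed true to $\hat{x}$, it maps $X_V[\hat{x}]$ into itself; well-definedness follows from \Cref{lemma:elementary-cut-map-well-definedness,lemma:elementary-join-map-well-definedness}. It therefore suffices to prove two things: $\gamma^{ij|1}(x)_{ij}=1$ for every $x$, and $\varphi_c(\gamma^{ij|1}(x))\geq\varphi_c(x)$ for every $x\in X_V[\hat{x}]$.

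Fix $x\in X_V[\hat{x}]$ and let $x'=\gamma^{ij|1}(x)$. If $x_{ij}=1$, then $x'=x$ and there is nothing to show. Otherwise $x_{ij}=0$. Let $x''=\sigma_{\delta(V\setminus U,U)}\circ\sigma_{\delta(U',V\setminus U')}(x)$. Because $ij\in U\times U'$, the pair $ij$ lies in neither dicut, and the reflexivity convention gives $x''_{ii}=x''_{jj}=1$. Hence the join map sets $x'_{ij}=1$.

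For the value change, write $\varphi_c(x')-\varphi_c(x)=\sum_{pq}c_{pq}(x'_{pq}-x_{pq})$ and split the sum over the pairs that flip. By definition, the pairs flipping from $0$ to $1$ lie in $\pzo{\gamma^{ij|1}}$, and those flipping from $1$ to $0$ lie in $\poz{\gamma^{ij|1}}$. By \Cref{lemma:join-map-pzo-poz}, these sets are contained in $P_{01}'$ and $P_{10}'$, respectively. The pair $ij$ itself flips $0\to1$ and contributes $c_{ij}$. Note that $ij\in P_{01}'$, since $\hat{x}_{ii}=\hat{x}_{jj}=1$ and $ij\notin\dom\hat{x}$; the latter must be assumed or argued, because if $\hat{x}_{ij}=1$ the claim is trivial. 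Every other flipping pair in $P_{01}'$ contributes $c_{pq}\geq -c^-_{pq}$, and every flipping pair in $P_{10}'$ contributes $-c_{pq}\geq -c^+_{pq}$. Therefore
\begin{align}
\varphi_c(x')-\varphi_c(x)\geq c_{ij}-\sum_{pq\in P_{01}'\setminus\{ij\}}c^-_{pq}-\sum_{pq\in P_{10}'}c^+_{pq}\enspace .
\end{align}
Using $c_{ij}+c^-_{ij}=c^+_{ij}$ (with the paper's convention $c^-=\max(0,-c)$, $c^+=\max(0,c)$), the right-hand side equals $c^+_{ij}-\sum_{P_{01}'}c^-_{pq}-\sum_{P_{10}'}c^+_{pq}$, which is $\geq 0$ by \eqref{eq:improvingness-join-generalization-map-1}.

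The main obstacle is this bookkeeping around $ij$: we must absorb $c_{ij}$ into $c^+_{ij}$ through the $c^-_{ij}$ term of the $P_{01}'$ sum, and we must make sure no flipped pair is double-counted or falls outside $P_{01}'\cup P_{10}'$. Both points rest entirely on \Cref{lemma:join-map-pzo-poz}. The remaining steps are direct applications of earlier results.
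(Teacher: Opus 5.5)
Your proposal is correct and follows essentially the same argument as the paper. Both proofs case-split on $x_{ij}$, use \Cref{lemma:join-map-pzo-poz} to confine all changes to $P_{01}' \cup P_{10}'$, absorb $c_{ij}$ into $c_{ij}^+$ through the $c_{ij}^-$ term, and conclude with \Cref{corollary:partial-optimality-single-variable}. Your hedge about needing $ij \notin \dom{\hat{x}}$ is unnecessary: when $x_{ij}=0$, the pair $ij$ flips from $0$ to $1$, so $ij \in \pzo{\gamma^{ij|1}} \subseteq P_{01}'$ directly.
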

\begin{proofatend}[Proof of~\Cref{proposition:edge-join-condition}]
	Let $x\in X_V[\hat{x}]$ and $x' = \gamma^{ij | 1}(x)$.  
	
	\noindent
	(1) Let $x_{ij} = 1$. Then $x' = x$ and $\varphi_c(x') = \varphi_c(x)$.
	
	\noindent
	(2) Let $x_{ij} = 0$. We have $\poz{\gamma^{ij | 1}} \cap \pzo{\gamma^{ij | 1}} \subseteq P_{01}'\cap P_{10}' = \emptyset$ by~\Cref{lemma:join-map-pzo-poz}, $x'_{pq} = x_{pq}$ for every $pq\notin P_{01}'\cup P_{10}'$ and $ij\in P_{01}'$. Now:
	\begin{align}
		&\varphi_c(x') - \varphi_c(x) 
		= \sum_{pq\in P_{01}'\setminus \{ij\}}c_{pq}\left(x'_{pq} - x_{pq}\right) + c_{ij} + \sum_{pq\in P_{10}'}c_{pq}\left(x'_{pq} - x_{pq}\right) \\
		\geq & -\sum_{pq\in P_{01}'\setminus \{ij\}}c_{pq}^- + c_{ij} - \sum_{pq\in P_{10}'}c_{pq}^+
		= -\sum_{pq\in P_{01}'}c_{pq}^- +c_{ij}^+ - \sum_{pq\in P_{10}'}c_{pq}^+
		\overset{\eqref{eq:improvingness-join-generalization-map-1}}{\geq}\;0 
	\end{align}
	Thus, the map $\gamma^{ij | 1}$ is improving for $\popcc{V}{c}{\hat{x}}$. 
	Since $\gamma^{ij | 1}(x)_{ij} = 1$ for all $x\in X_V[\hat{x}]$, there is an optimal solution $x^*$ to $\popcc{V}{c}{\hat{x}}$ such that $x^*_{ij} = 1$. 
\end{proofatend}

\Cref{proposition:edge-cut-condition,proposition:edge-join-condition} are complementary, witnessed by~\Cref{example:reinforcement}.

\begin{example}
	\label{example:reinforcement}
	
	Consider the instance of the preordering problem with $V = \{p, q, r\}$ and such that $c_{pq} = +1$, $c_{qp} = -2$, $c_{pr} = -2$, $c_{rp} = +2$, $c_{qr} = +2$ and $c_{rq} = +1$. 
	In addition, consider the application of partial optimality conditions according to~\Cref{table:examples}. 
	The table shows that \Cref{proposition:edge-cut-condition} can reinforce~\Cref{proposition:edge-join-condition}.
	
	\begin{table}[h]
		\centering
		\caption{Below we report the partial optimality obtained by applying~ \Cref{proposition:edge-cut-condition} and  \Cref{proposition:edge-join-condition} either in isolation (Rows 1 and 2) or jointly (Row 3).}
		\label{table:examples}
		\begin{tabular}{l r r}
			\toprule
			Partial Optimality Conditions & $\hat{x}^{-1}(0)$ & $\hat{x}^{-1}(1)$ \\
			\midrule
			\Cref{proposition:edge-cut-condition} & $\{pr, qp\}$ & $\emptyset$ \\
			 \Cref{proposition:edge-join-condition} & $\emptyset$ & $\{rp\}$ \\
			\Cref{proposition:edge-cut-condition} + \Cref{proposition:edge-join-condition} & $\{pr, qp, qr\}$ & $\{rp, rq\}$ \\
			\bottomrule
		\end{tabular}
	\end{table}
\end{example}

%% file: figure-visualization-join-map-example-combined.tex
\begin{figure}
	\begin{minipage}[t]{0.55\linewidth}\small
		\centering
		\begin{tikzpicture}[scale=1.5,scale=0.8]
			\draw[gray] (0, 0) -- (30:1.0);
			\draw[gray] (0, 0) -- (150:1.0);
			\draw[gray] (0, 0) -- (270:1.0);
			
			\node at (90:0.52) {$V \!\setminus\! (U\cup U')$};	
			\node at ($(210:0.5) +  (-0.5, 0)$) {$U$};
			\node at ($(330:0.5) + (0.5, 0)$) {$U'$};
			
			\node[label=left:$i$,circle,draw=black, inner sep=0.4ex] (i) at ($(210:0.5) +  (-0.15, -0.25)$) {};
			\node[label=right:$j$,circle,draw=black, inner sep=0.4ex] (j) at ($(330:0.5) +  (0.15, -0.25)$) {};
			
			\draw[-latex, dashed] (i) -- (j);
			
			\node[circle, draw=black, inner sep=0.4ex] (p) at ($(i) + (0, 0.25)$) {};
			\node[circle, draw=black, inner sep=0.4ex] (q) at ($(j) + (0, 0.25)$) {};
			
			\node[circle, draw=black, inner sep=0.4ex] (p2) at ($(i) + (0, -0.35)$) {};
			\node[circle, draw=black, inner sep=0.4ex] (q2) at ($(j) + (0, -0.35)$) {};
			
			\node[circle, draw=black, inner sep=0.4ex] (r) at (0, 0.25) {};
			
			\draw[-latex] (q) edge[bend right] (p);
			\draw[-latex, dashed] (p2) edge (q2);
			\draw[-latex] (p2) edge (i);
			\draw[-latex] (j) edge (q2);
			\draw[-latex] (q) edge[bend right] (r);
			\draw[-latex] (r) edge[bend right] (p);
			
			\node at (1.5, -0.25) {$\mapsto$};
			
			\begin{scope}[xshift=3.0cm]
				\draw[gray] (0, 0) -- (30:1.0);
				\draw[gray] (0, 0) -- (150:1.0);
				\draw[gray] (0, 0) -- (270:1.0);
				
				\node at (90:0.52) {$V \!\setminus\! (U\cup U')$};	
				\node at ($(210:0.5) +  (-0.5, 0)$) {$U$};
				\node at ($(330:0.5) + (0.5, 0)$) {$U'$};
				
				\node[label=left:$i$,circle,draw=black, inner sep=0.4ex] (i) at ($(210:0.5) +  (-0.15, -0.25)$) {};
				\node[label=right:$j$,circle,draw=black, inner sep=0.4ex] (j) at ($(330:0.5) +  (0.15, -0.25)$) {};
				
				\draw[-latex] (i) -- (j);
				
				\node[circle, draw=black, inner sep=0.4ex] (p) at ($(i) + (0, 0.25)$) {};
				\node[circle, draw=black, inner sep=0.4ex] (q) at ($(j) + (0, 0.25)$) {};
				
				\node[circle, draw=black, inner sep=0.4ex] (p2) at ($(i) + (0, -0.35)$) {};
				\node[circle, draw=black, inner sep=0.4ex] (q2) at ($(j) + (0, -0.35)$) {};
				
				\node[circle, draw=black, inner sep=0.4ex] (r) at (0, 0.25) {};
				
				\draw[-latex, dashed] (q) edge[bend right] (p);
				\draw[-latex] (p2) edge (q2);
				\draw[-latex] (p2) edge (i);
				\draw[-latex] (j) edge (q2);
				\draw[-latex, dashed] (q) edge[bend right] (r);
				\draw[-latex, dashed] (r) edge[bend right] (p);
				
			\end{scope}

		\end{tikzpicture}
		\\[-1ex]
		\caption{The map $\sigma_{ij} \circ \sigma_{\delta(V\setminus U, U)} \circ \sigma_{\delta(U', V\setminus U')}$ transforms the preorder depicted on the left to the preorder depicted on the right.}
		\label{figure:illustration-join-map}
	\end{minipage}%
	\hfill
	\begin{minipage}[t]{0.4\linewidth}\small
		\centering
		\begin{tikzpicture}[
			xscale=1.5,
			yscale=-1.5,
			vertex/.style={circle, draw, inner sep=0.4ex, fill=white}
			]
			\node[vertex,label=left:$p$] (1) at (0, 0) {};
			\node[vertex,label=right:$q$] (2) at (1, 0) {};
			\node[vertex,label=left:$r$] (3) at (0, 1) {};
			\node[vertex,label=right:$s$] (4) at (1, 1) {};
			\draw[-latex] (1) -- node[below] {+5} (2);
			\draw[-latex] (1) -- node[left] {+1} (3);
			\draw[-latex] (1) -- node[left] {+1} (4);
			\draw[-latex] (2) -- node[right] {+1} (3);
			\draw[-latex] (2) -- node[right] {+1} (4);
			\draw[-latex] (3) -- node[above] {+2} (4);
		\end{tikzpicture}
		\\[-1ex]
		\caption{Depicted above is an instance of the preordering problem witnessing
			that \Cref{proposition:subset-U-maximizer} is not subsumed by~\Cref{corollary:boecker-condition-weak}.}
		\label{figure:example-preorder-join-subset-not-subsumed}
	\end{minipage}
\end{figure}

%% file: section-fixation-conditions.tex
\subsection{Fixation Conditions}
We now introduce a more complex condition that can fix variables to 0 or 1.

\begin{theorem}
	\label{proposition:subset-U-maximizer}
	Let $V\neq \emptyset$ finite, $c \in \mathbb{R}^{\pairs{V}}$, $\hat{x}\in \ppc{V}$,
	$U \subseteq V$, $\hat{x}' := \hat{x}\rvert_{P_U}$, $c' := c\rvert_{P_U}$,
	$ij\in P_V\setminus \dom{\hat{x}}$, $b \in \{0, 1\}$ and 
	$y\in \argmax_{x'\in \pfsc{U}{\hat{x}'} \colon x'_{ij} = b}\varphi_{c'}(x')$.
	Let $\tau \colon \pfsc{V}{\hat{x}} \to \pfsc{V}{\hat{x}}$ such that $\tau(x)\rvert_{P_{V\setminus U}} = x\rvert_{P_{V\setminus U}}$ and $\tau(x)\rvert_{P_U}  = y$ for every $x\in \pfsc{V}{\hat{x}}$.
	If~\eqref{eq:lower-bound-subset-U-maximizer}--\eqref{eq:subset-U-maximizer-bounds} hold, there is a solution $x^*$ to $\popcc{V}{c}{\hat{x}}$ with $x^*_{ij} = b$. 
	\\
	\begin{minipage}{0.55\columnwidth}
	\begin{align}
		\label{eq:lower-bound-subset-U-maximizer}
		\lb &\leq \max_{\substack{x\in \pfsc{U}{\hat{x}'} \\ x_{ij} = b}}\varphi_{c'}(x)\\
	\end{align}
	\end{minipage}\hfill
	\begin{minipage}{0.35\columnwidth}
	\begin{align}
		\label{eq:upper-bound-subset-U-maximizer}
		\ub &\geq \max_{\substack{x\in \pfsc{U}{\hat{x}'} \\ x_{ij} = 1-b}}\varphi_{c'}(x)\\
	\end{align}
	\end{minipage}
	\\[-4.5ex]
	\begin{minipage}{0.55\columnwidth}
	\begin{align}
		\label{eq:upper-bound-boundary-subset-U-maximizer}
		\ub' &\geq \max_{\substack{x\in \pfsc{V}{\hat{x}} \\ x_{ij} = 1-b}} \sum_{pq\in \delta(U)}c_{pq}\left(x_{pq} - \tau(x)_{pq}\right)
	\end{align}
	\end{minipage}\hfill
	\begin{minipage}{0.35\columnwidth}
	\begin{align}
		\label{eq:subset-U-maximizer-bounds}
		\lb - \ub \geq \ub'
	\end{align}
	\end{minipage}
\end{theorem}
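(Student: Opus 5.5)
We would prove \Cref{proposition:subset-U-maximizer} by the improving-map technique, via \Cref{corollary:partial-optimality-single-variable}. The map is $\tau^{ij|b}$, i.e.\ $x \mapsto x$ if $x_{ij}=b$ and $x\mapsto\tau(x)$ otherwise. This map sends $\pfsc{V}{\hat{x}}$ into itself, because $\tau$ does so by hypothesis. Its image always satisfies $x_{ij}=b$. If $x_{ij}=b$ this is clear; otherwise $\tau(x)_{ij}=y_{ij}=b$, since $ij$ must lie in $P_U$ for the optimization problems in the statement to make sense. So once improvingness on $\pfsc{V}{\hat{x}}$ is shown, the corollary gives an optimal solution $x^*$ of $\popcc{V}{c}{\hat{x}}$ with $x^*_{ij}=b$.

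For improvingness, the case $x_{ij}=b$ is trivial, since the map is the identity there. For $x_{ij}=1-b$ we split the pairs into $P_U$, $P_{V\setminus U}$ and $\delta(U)$, which partition $P_V$. This gives
\begin{align}
\varphi_c(\tau(x))-\varphi_c(x) = \bigl(\varphi_{c'}(y)-\varphi_{c'}(x\rvert_{P_U})\bigr) - \sum_{pq\in\delta(U)} c_{pq}\bigl(x_{pq}-\tau(x)_{pq}\bigr),
\end{align}
because $\tau$ leaves $P_{V\setminus U}$ unchanged.

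The first term is handled as follows. By optimality of $y$ and \eqref{eq:lower-bound-subset-U-maximizer}, $\varphi_{c'}(y)\ge\lb$. Next, $x\rvert_{P_U}$ is feasible for the restricted problem: it lies in $X_U$ because transitivity restricts to $U$, it agrees with $\hat{x}'$, and it has $x_{ij}=1-b$. Hence $\varphi_{c'}(x\rvert_{P_U})\le\ub$ by \eqref{eq:upper-bound-subset-U-maximizer}. The boundary sum is at most $\ub'$ by \eqref{eq:upper-bound-boundary-subset-U-maximizer}, since $x$ is feasible for that maximum. Together,
\begin{align}
\varphi_c(\tau(x))-\varphi_c(x)\ge \lb-\ub-\ub'\ge 0
\end{align}
by \eqref{eq:subset-U-maximizer-bounds}.

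The only delicate point is the well-definedness bookkeeping: that $ij\in P_U$, that restriction of a completion of $\hat{x}$ is a completion of $\hat{x}'=\hat{x}\rvert_{P_U}$, and that $y$ exists, i.e.\ the argmax set is nonempty. Existence follows from \eqref{eq:lower-bound-subset-U-maximizer} being meaningful, with the convention that the maximum over an empty set is $-\infty$. Existence of $\tau$ into $\pfsc{V}{\hat{x}}$ is assumed in the statement, so no construction of the combined preorder is needed. No step is a genuine obstacle, since the argument is a direct three-term estimate.
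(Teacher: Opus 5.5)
Your proposal is correct and follows the paper's proof essentially step for step. Like the paper, you show that $\tau^{ij|b}$ is improving by splitting $P_V$ into $P_U$, $P_{V\setminus U}$ and $\delta(U)$, bounding the $P_U$ contribution below by $\lb-\ub$ and the boundary contribution by $\ub'$, and then applying the single-variable improving-map corollary. Your extra bookkeeping, namely that $ij\in P_U$ and that restrictions of completions of $\hat{x}$ are completions of $\hat{x}'$, makes explicit what the paper uses without stating it.
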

\begin{proofatend}[Proof of~\Cref{proposition:subset-U-maximizer}]
	We consider $\tau^{ij | b}$ and show that $\tau^{ij | b}$ is improving. 
	If $x_{ij} = b$, we have $\tau^{ij | b}(x) = x$ and $\varphi_{c}(\tau^{ij | b}(x)) = \varphi_c(x)$. 
	If $x_{ij} = 1-b$, we have $\tau^{ij | b}(x) = \tau(x)$ and 
	\begin{align}
		&\varphi_{c}(\tau(x)) - \varphi_c(x) 
		= \sum_{pq\in P_U}c_{pq}(\tau(x)_{pq} - x_{pq}) + \sum_{pq\in \delta(U)}c_{pq}(\tau(x)_{pq} - x_{pq})\\
		\geq & \max_{\substack{x\in \pfsc{U}{\hat{x}'} \\ x_{ij} = b}}\varphi_{c'}(x) 
		- \max_{\substack{x\in \pfsc{U}{\hat{x}'} \\ x_{ij} = 1-b}}\varphi_{c'}(x)
		+ \min_{\substack{x\in \pfsc{V}{\hat{x}} \\ x_{ij} = 1-b}} \sum_{pq\in \delta(U)}c_{pq}\left(\tau(x)_{pq} - x_{pq}\right)\\
		\geq & \; \lb - \ub - \ub' 
		\overset{\eqref{eq:subset-U-maximizer-bounds}}{\geq} \;0\enspace.
	\end{align}
	Thus, $\tau^{ij | b}$ is improving. 
	Moreover, $\tau^{ij | b}(x)_{ij} = b$ for all $x\in \pfsc{V}{\hat{x}}$. 
	This concludes the proof.
\end{proofatend}

\Cref{proposition:subset-U-maximizer} remains non-trivial even if $U = V$:

\begin{corollary}
	\label{corollary:boecker-condition-weak}
	Let $V\neq \emptyset$ finite, $c\in \mathbb{R}^{\pairs{V}}$ and $\hat x\in \ppc{V}$.  Let $ij\in \pairs{V} \setminus \dom{\hat{x}}$ and $b \in \{0, 1\}$.
	Moreover, assume~\eqref{eq:bounds-boecker-condition-weak}.
	 If $\lb \geq \ub$,
	there is a solution $x^*$ to $\popcc{V}{c}{\hat{x}}$ with $x^*_{ij} = b$.
	\begin{align}
		\label{eq:bounds-boecker-condition-weak}
		\lb \leq \max_{\substack{x\in \pfsc{V}{\hat{x}} \\ x_{ij} = b}}\varphi_{c}(x) \quad \text{and} \quad
		\ub \geq \max_{\substack{x\in \pfsc{V}{\hat{x}} \\ x_{ij} = 1-b}}\varphi_{c}(x)
	\end{align}
\end{corollary}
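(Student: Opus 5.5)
The plan is to derive the corollary from \Cref{proposition:subset-U-maximizer} by taking $U = V$. With this choice, $\hat{x}' = \hat{x}\rvert_{P_V} = \hat{x}$ and $c' = c$. Hence $\pfsc{U}{\hat{x}'} = \pfsc{V}{\hat{x}}$ and $\varphi_{c'} = \varphi_c$. Conditions \eqref{eq:lower-bound-subset-U-maximizer} and \eqref{eq:upper-bound-subset-U-maximizer} then become exactly the two bounds assumed in \eqref{eq:bounds-boecker-condition-weak}.

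Next we fix the objects required by \Cref{proposition:subset-U-maximizer}. Since $\lb \le \max\{\varphi_c(x) \mid x\in \pfsc{V}{\hat{x}}, x_{ij}=b\}$ and $\lb$ is a real number, the constrained set $\{x\in \pfsc{V}{\hat{x}} \mid x_{ij}=b\}$ is nonempty. We read the maximum over an empty set as $-\infty$, so a finite lower bound forces nonemptiness. We therefore pick a maximizer $y$ of this set. The map $\tau$ is then the constant map $x \mapsto y$: since $P_{V\setminus U} = \emptyset$, the first requirement on $\tau$ is vacuous. This constant map sends $\pfsc{V}{\hat{x}}$ into $\pfsc{V}{\hat{x}}$ because $y\in \pfsc{V}{\hat{x}}$.

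It remains to check \eqref{eq:upper-bound-boundary-subset-U-maximizer} and \eqref{eq:subset-U-maximizer-bounds}. For $U=V$ we have $\delta(U) = \delta(V,\emptyset)\cup\delta(\emptyset,V) = \emptyset$. So every sum in \eqref{eq:upper-bound-boundary-subset-U-maximizer} is empty, and the right-hand side is $0$, or $-\infty$ if no completion has $x_{ij}=1-b$. We may thus choose $\ub' := 0$. Then \eqref{eq:subset-U-maximizer-bounds} reads $\lb - \ub \ge 0$, which is precisely the hypothesis $\lb \ge \ub$. \Cref{proposition:subset-U-maximizer} now yields an optimal solution $x^*$ to $\popcc{V}{c}{\hat{x}}$ with $x^*_{ij} = b$.

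The only delicate point is the degenerate case in which one of the constrained maxima ranges over an empty set. This affects the existence of $y$ and the meaning of $\ub$. I would handle it through the $\pm\infty$ convention just described. If instead no completion has $x_{ij}=1-b$, the claim is immediate: every completion has $x_{ij}=b$, and $\pfsc{V}{\hat{x}}\neq\emptyset$ because $\hat{x}\in\ppc{V}$, so any optimal solution works.
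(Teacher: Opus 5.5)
Your proposal is correct and follows the paper's route. The paper presents this corollary as the specialization $U = V$ of \Cref{proposition:subset-U-maximizer}: there $\delta(V) = \emptyset$ permits $\ub' = 0$, and \eqref{eq:subset-U-maximizer-bounds} reduces to $\lb \geq \ub$.

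The degenerate empty-set cases you guard against cannot occur. Since $\hat{x} \in \ppc{V}$ is maximally specific, $ij \notin \dom{\hat{x}}$ is undecided, so both $\{x \in \pfsc{V}{\hat{x}} \mid x_{ij} = 0\}$ and $\{x \in \pfsc{V}{\hat{x}} \mid x_{ij} = 1\}$ are nonempty.
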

More efficient to decide than the condition of \Cref{corollary:boecker-condition-weak} is the condition of the following specialization:

\begin{corollary}
	\label{corollary:boecker-condition-strong}
	Let $V\neq \emptyset$ finite, $c\in \mathbb{R}^{\pairs{V}}$ and $\hat x\in \ppc{V}$. Let $ij\in \pairs{V}\setminus \dom{\hat{x}}$ and $b \in \{0, 1\}$. 
	Moreover, assume~\eqref{eq:bounds-boecker-condition-strong}.
	If $\lb > \ub$,
	there is a solution $x^*$ to $\popcc{V}{c}{\hat{x}}$ with $x^*_{ij} = b$.
	\begin{align}
		\label{eq:bounds-boecker-condition-strong}
		\lb \leq \max_{x\in \pfsc{V}{\hat{x}}}\varphi_{c}(x) 
		\quad \text{and} \quad
		\ub \geq \max_{\substack{x\in \pfsc{V}{\hat{x}} \\ x_{ij} = 1-b}}\varphi_{c}(x)
	\end{align}
\end{corollary}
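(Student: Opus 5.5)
We will deduce \Cref{corollary:boecker-condition-strong} from \Cref{corollary:boecker-condition-weak}. The strict inequality $\lb > \ub$ will let us replace the unconstrained maximum in the hypothesis by the maximum constrained to $x_{ij} = b$. Write
\[
M \coloneqq \max_{x\in \pfsc{V}{\hat{x}}}\varphi_c(x), \qquad
M_b \coloneqq \max_{\substack{x\in \pfsc{V}{\hat{x}} \\ x_{ij} = b}}\varphi_c(x), \qquad
M_{1-b} \coloneqq \max_{\substack{x\in \pfsc{V}{\hat{x}} \\ x_{ij} = 1-b}}\varphi_c(x).
\]
Since $\hat x\in\hat X_V$, the set $\pfsc{V}{\hat{x}}$ is nonempty and finite. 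It is the disjoint union of its elements with $x_{ij}=b$ and those with $x_{ij}=1-b$. Hence $M = \max\{M_b, M_{1-b}\}$, where a maximum over an empty set is read as $-\infty$.

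The main step is to show $M = M_b$. By the hypotheses, $M_{1-b} \le \ub < \lb \le M$. Therefore $M \neq M_{1-b}$, and so $M = M_b$. In particular, the set of completions with $x_{ij}=b$ is nonempty, so $M_b$ is attained by some completion.

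It follows that $\lb \le M_b$ and $\ub \ge M_{1-b}$, which are exactly the hypotheses \eqref{eq:bounds-boecker-condition-weak}. Moreover, $\lb > \ub$ implies $\lb \geq \ub$. \Cref{corollary:boecker-condition-weak} then yields a solution $x^*$ to $\popcc{V}{c}{\hat{x}}$ with $x^*_{ij}=b$.

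A direct alternative avoids the weak corollary. Take an optimal $x^*$. Then $\varphi_c(x^*) = M \ge \lb > \ub \ge M_{1-b}$, so $x^*_{ij} \neq 1-b$, that is, $x^*_{ij} = b$. This version gives the slightly stronger statement that every optimal solution has $x^*_{ij}=b$.

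The only delicate point is the case where no completion has $x_{ij}=1-b$. We handle it with the convention $\max\emptyset=-\infty$, under which $M_{1-b}$ is trivially below $M$ and the argument is unchanged. Strictness of $\lb > \ub$ is essential in the first route: with only $\lb \ge \ub$, the unconstrained maximum could be attained solely by completions with $x_{ij}=1-b$.
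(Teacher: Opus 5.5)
Your proposal is correct and follows the paper's proof: the chain $\max_{x\in X_V[\hat{x}]}\varphi_c(x)\geq \lb > \ub \geq \max_{x\in X_V[\hat{x}],\, x_{ij}=1-b}\varphi_c(x)$ forces the unconstrained maximum to equal the maximum over completions with $x_{ij}=b$, and then \Cref{corollary:boecker-condition-weak} applies. Your direct alternative is also sound; it bypasses the improving-map machinery entirely and shows moreover that every optimal completion has $x_{ij}=b$.
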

\begin{proofatend}[Proof of~\Cref{corollary:boecker-condition-strong}]
	We have
	\begin{align}
		\max_{x\in \pfsc{V}{\hat{x}}}\varphi_{c}(x) \geq \lb > \ub \geq \max_{\substack{x\in \pfsc{V}{\hat{x}} \\ x_{ij} = 1-b}}\varphi_{c}(x)
	\end{align}
	and therefore
	\begin{equation}
		\max_{x\in \pfsc{V}{\hat{x}}}\varphi_{c}(x) = \max_{\substack{x\in \pfsc{V}{\hat{x}} \\ x_{ij} = b}}\varphi_{c}(x) > \max_{\substack{x\in \pfsc{V}{\hat{x}} \\ x_{ij} = 1-b}}\varphi_{c}(x)\enspace.
	\end{equation}
	By~\Cref{corollary:boecker-condition-weak}, there is a solution $x^*$ to $\popcc{V}{c}{\hat{x}}$ such that $x^*_{ij} = b$.
\end{proofatend}

\Cref{corollary:boecker-condition-strong} with bounds as discussed in~\Cref{section:algorithms-preorder-join-bounds} is due to~\citet{boecker2009}.
\Cref{example:subset-U-maximizer} below shows that \Cref{proposition:subset-U-maximizer} is a strict generalization of \Cref{corollary:boecker-condition-weak} and \Cref{corollary:boecker-condition-strong}.

\begin{example}
	\label{example:subset-U-maximizer}
	Consider the instance of the preordering problem depicted in \Cref{figure:example-preorder-join-subset-not-subsumed}.
	Let $\hat{x} = \emptyset$.
	We analyze~\Cref{proposition:subset-U-maximizer} for $i = p$ and $j = q$. 
	For arbitrary $U$, consider the bounds $\lb \coloneqq c_{ij}$, $\ub \coloneqq \sum_{uv\in P_U\setminus \{ij\}}c_{uv}^+$ and $\ub' \coloneqq \sum_{uv\in \delta(U)} \lvert c_{uv}\rvert$.
	For $U = \{p, q\}$, we obtain $\lb = 5, \ub = 0, \ub' = 4$.
	Thus, the condition of \Cref{proposition:subset-U-maximizer} is fulfilled. 
	For $U = \{p, q, r, s\}$, we obtain $\lb = 5, \ub = 6, \ub' = 0$.
	Thus, the condition of \Cref{corollary:boecker-condition-weak} is not fulfilled. 
	This shows that partial optimality due to~\Cref{proposition:subset-U-maximizer} for $U \neq \emptyset$ is not subsumed by partial optimality due to~\Cref{corollary:boecker-condition-weak}.
\end{example}

%% file: section-algorithms.tex
\section{Algorithms}
\label{section:algorithms}

\subsection{Applying \Cref{proposition:edge-cut-condition}}%
We search for subsets $U\subseteq V$ that satisfy the condition of \Cref{proposition:edge-cut-condition} for a fixed pair $ij\in P_V$ by minimizing the rhs.~of \eqref{eq:improvingness-conditional-cut-map-constrained}, i.e., we compute a $U$ that satisfies \eqref{eq:improvingness-conditional-cut-map-constrained} with a \emph{maximum margin}. 
This is a minimum $ij$-cut problem, specifically:
\begin{equation}
	\min_{\substack{U\subseteq V \\ i \in U, j\notin U \\ \hat{x}^{-1}(1) \cap \delta(U, V\setminus U) = \emptyset}}\sum_{pq\in \delta(U, V\setminus U)\setminus \hat{x}^{-1}(0)} c_{pq}^+\enspace.
\end{equation}
Thus, the condition of \Cref{proposition:edge-cut-condition} can be decided for all pairs $ij\in P_V$ by solving $\mathcal{O}(\vert V\vert^2)$ many max-flow problems. 
We use the push-relabel algorithm \cite{goldberg-1988}.
We re-use candidates $U$: Given a candidate $U$ for a specific $ij\in \delta(U, V\setminus U)$ we decide \eqref{eq:improvingness-conditional-cut-map-constrained} for all $pq\in \delta(U, V\setminus U)$.

\subsection{Applying \Cref{corollary:directed-cut-condition}}
In order to find candidates $U$ for \Cref{corollary:directed-cut-condition}, we introduce the auxiliary digraph $G' = (V, E')$ with
$E' = (\{pq\in P_V\mid c_{pq} > 0\} \cup \hat{x}^{-1}(1)) \setminus \hat{x}^{-1}(0)$. 
A $U \subseteq V$ fulfills \Cref{corollary:directed-cut-condition} if $E' \cap \delta(U, V\setminus U) = \emptyset$. 
For each $u \in V$, we compute the set
	$W_u = \{q\in V\mid \exists uq\text{-path in } G'\}$.
For all $u\in V$, we have $E' \cap \delta(W_u, V\setminus W_u) = \emptyset$.
Moreover, $W_u$ is the smallest possible set among the sets $W'_u$ that contain $u$ and fulfill  $E' \cap \delta(W'_u, V\setminus W'_u) = \emptyset$. 
We compute $W_u$ for all $u\in V$ by computing an all-pairs reachability matrix by breadth first search, in $\mathcal{O}(\vert V\vert^2 + \vert V\vert \vert E'\vert)$ time, and then compute $W_u$ in time $\mathcal{O}(\vert V\vert)$, for every $u\in V$.

\subsection{Applying \Cref{proposition:edge-join-condition}}
We find candidates $U, U'\subseteq V$ for \Cref{proposition:edge-join-condition} by minimizing the rhs.~of \eqref{eq:improvingness-join-generalization-map-1}. We formulate this task as a three-label energy minimization problem (see~\Cref{appendix:energy-minimization})
and solve it heuristically using $\alpha\beta$-swap moves~\cite{boykov2001}. In~\Cref{appendix:alpha-beta-swaps}, we show that an optimal $\alpha\beta$-swap can be computed efficiently.

\subsection{Applying \Cref{proposition:subset-U-maximizer,corollary:boecker-condition-weak,corollary:boecker-condition-strong}}
\label{section:algorithms-preorder-join-bounds}

\textbf{Approximate bounds \eqref{eq:lower-bound-subset-U-maximizer} and~\eqref{eq:upper-bound-subset-U-maximizer}.}
Deciding the conditions of \Cref{proposition:subset-U-maximizer,corollary:boecker-condition-weak,corollary:boecker-condition-strong} requires the computation of lower and upper bounds on the objective value of the preordering problem. 
In order to lower-bound the rhs.~of \eqref{eq:lower-bound-subset-U-maximizer}, we compute a feasible solution by means of the local search algorithms of \citet{irmai2025}, in particular, greedy arc fixation and greedy arc insertion.
In order to upper-bound the rhs.~of \eqref{eq:upper-bound-subset-U-maximizer}, we separate values of edges incident to $\{i, j\}$ from values of edges in $V' = V\setminus \{i, j\}$:
\begin{align}
	\label{eq:upper-bound-decomposition}
	&\max_{\substack{x\in \pfsc{V}{\hat{x}} \\ x_{ij} = b}}\sum_{pq\in P_V}c_{pq}x_{pq}
	\leq \max_{\substack{x\in \pfsc{V}{\hat{x}} \\ x_{ij} = b}}\sum_{\substack{pq\in \pairs{V} \\ \{p, q\}\cap \{i, j\} \neq \emptyset}}c_{pq} x_{pq} + \max_{x\in \pfsc{V'}{\hat{x}'}}\sum_{pq\in P_{V'}}c_{pq} x_{pq}
\end{align}%
We follow~\citet{boecker2009} and define the induced value of exclusion and the induced value of inclusion as upper bounds of the first term of the rhs.~of \eqref{eq:upper-bound-decomposition}:
\begin{lemma}
	\label{lemma:ici-ice-bounds}
	Let $V\neq \emptyset$, $c\in \mathbb{R}^{\pairs{V}}$, $\hat{x}\in \ppc{V}$, $ij\in \pairs{V}\setminus \dom{\hat{x}}$ and $V' = V\setminus \{i, j\}$. 
	Furthermore, let
	\begin{align}
		&\ice(ij) 
		\coloneqq
		\max_{x\in \pfsc{V}{\hat{x}}}c_{ji} x_{ji} 
		 + \sum_{w\in V'}\max_{\substack{x\in \pfsc{V}{\hat{x}} \\ x_{ij} = 0}}(c_{iw}x_{iw} + c_{wj}x_{wj}) 
		+ 
		\sum_{w\in V'}(\max_{x\in \pfsc{V}{\hat{x}}}c_{wi}x_{wi} + \max_{x\in \pfsc{V}{\hat{x}}}c_{jw}x_{jw})\\
		&\ici(ij) 
		\coloneqq 
		c_{ij} + \max_{x\in \pfsc{V}{\hat{x}}}c_{ji}x_{ji} 
		+ 
		\sum_{w\in V'}\max_{\substack{x\in \pfsc{V}{\hat{x}}\\ x_{ij} = 1}}\left(c_{jw} x_{jw} + c_{iw}x_{iw}\right) 
		 +
		\sum_{w\in V'}\max_{\substack{x\in \pfsc{V}{\hat{x}}\\ x_{ij} = 1}}\left(c_{wi}x_{wi} + c_{wj}x_{wj}
		\right)
	\end{align}
	Then 
	\begin{align}
		\ice(ij) \geq\max_{\substack{x\in \pfsc{V}{\hat{x}} \\ x_{ij} = 0}}\sum_{\substack{pq\in \pairs{V} \\ \{p, q\}\cap \{i, j\} \neq \emptyset}}c_{pq} x_{pq} 
		\qquad and \qquad
		\ici(ij)  \geq \max_{\substack{x\in \pfsc{V}{\hat{x}} \\ x_{ij} = 1}} \sum_{\substack{pq\in \pairs{V} \\ \{p, q\}\cap \{i, j\} \neq \emptyset}}c_{pq} x_{pq} \enspace.
	\end{align}%
\end{lemma}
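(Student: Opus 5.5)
The approach is to split the objective restricted to pairs touching $\{i,j\}$ into independent groups and bound the maximum of the sum by the sum of the group maxima. The basic tool is the elementary inequality $\max_{x\in S}\sum_k f_k(x)\leq\sum_k\max_{x\in S_k} f_k(x)$, valid whenever $S\subseteq S_k$ for all $k$. The pairs $pq\in P_V$ with $\{p,q\}\cap\{i,j\}\neq\emptyset$ are exactly $ij$, $ji$, and, for each $w\in V'$, the four pairs $iw, wj, wi, jw$. So the sum splits into the term $c_{ij}x_{ij}$, the term $c_{ji}x_{ji}$, and for every $w\in V'$ the four terms involving $w$. Each term is then grouped as in the definitions of $\ice$ and $\ici$.

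\textbf{Bound on $\ice$.} Fix $x\in\pfsc{V}{\hat{x}}$ with $x_{ij}=0$. Then $c_{ij}x_{ij}=0$. Each remaining term is bounded by the corresponding maximum:
\begin{itemize}
\item $c_{ji}x_{ji}\leq\max_{x\in\pfsc{V}{\hat{x}}}c_{ji}x_{ji}$;
\item $c_{iw}x_{iw}+c_{wj}x_{wj}\leq\max_{x\in\pfsc{V}{\hat{x}},\,x_{ij}=0}(c_{iw}x_{iw}+c_{wj}x_{wj})$, since $x$ is feasible for that maximum;
\item $c_{wi}x_{wi}\leq\max_{x\in\pfsc{V}{\hat{x}}}c_{wi}x_{wi}$ and $c_{jw}x_{jw}\leq\max_{x\in\pfsc{V}{\hat{x}}}c_{jw}x_{jw}$.
\end{itemize}
Summing over all terms gives the bound for this $x$, hence for the maximum over such $x$.

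\textbf{Bound on $\ici$.} Fix $x\in\pfsc{V}{\hat{x}}$ with $x_{ij}=1$. The term $c_{ij}x_{ij}$ equals $c_{ij}$ exactly. The term $c_{ji}x_{ji}$ is bounded by its unconstrained maximum over $\pfsc{V}{\hat{x}}$. For each $w$, the pairs $(jw,iw)$ and $(wi,wj)$ are bounded by their maxima over $\{x\in\pfsc{V}{\hat{x}}\mid x_{ij}=1\}$, which contains $x$. Summing again yields the claim.

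The only point needing care is checking that every pair incident to $\{i,j\}$ is counted exactly once. The pairs $ij$ and $ji$ are not of the form $iw$ etc., because $w\in V'$ excludes $w\in\{i,j\}$. The partition into the listed groups is exact, so no term is double counted or omitted.

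A related subtlety is that these maxima range over possibly different completions; this is harmless, since we only need upper bounds. Nonemptiness of the feasible sets matters only formally. If $\{x\in\pfsc{V}{\hat{x}}\mid x_{ij}=b\}$ is empty, the left-hand maximum is $-\infty$ and the inequality holds trivially. If it is nonempty, every maximum on the right ranges over a superset of it and is therefore finite.
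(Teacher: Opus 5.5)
Your proof is correct and follows essentially the paper's argument: split the objective over pairs touching $\{i,j\}$ into the groups used in the definitions of $\ice(ij)$ and $\ici(ij)$, then bound the maximum of the sum by the sum of the group-wise maxima over supersets of the feasible set. Your nonemptiness caveat is harmless but unnecessary: $\hat{x}$ is maximally specific and $ij\notin\dom{\hat{x}}$, so $ij$ is undecided and completions with $x_{ij}=0$ and with $x_{ij}=1$ both exist.
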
%
\begin{proofatend}[Proof of~\Cref{lemma:ici-ice-bounds}]
	On the one hand, we have:
	\begin{align}
		&\max_{\substack{x\in \pfsc{V}{\hat{x}} \\ x_{ij} = 0}}\sum_{\substack{pq\in \pairs{V} \\ \{p, q\}\cap \{i, j\} \neq \emptyset}}c_{pq} x_{pq} \\
		= &\max_{\substack{x\in \pfsc{V}{\hat{x}} \\ x_{ij} = 0}} 
		\{
		c_{ji}x_{ji} 
		+ 
		\sum_{w\in V'}\left(c_{iw}x_{iw} + c_{wj}x_{wj}\right) 
		+ 
		\sum_{w\in V'}\left(c_{wi}x_{wi} + c_{jw} x_{jw}\right)\}\\
		\leq 
		&\max_{x\in \pfsc{V}{\hat{x}}}c_{ji}x_{ji} + 
		\sum_{w\in V'} \max_{\substack{x\in \pfsc{V}{\hat{x}} \\ x_{ij} = 0}}\left(c_{iw}x_{iw} + c_{wj}x_{wj}\right)  
		+ \sum_{w\in V'} (\max_{x\in \pfsc{V}{\hat{x}}}c_{wi}x_{wi} 
		+  \max_{x\in \pfsc{V}{\hat{x}}}c_{jw}x_{jw}) \\
		= &\;\ice(ij)
	\end{align}
	
	On the other hand, we have:
	\begin{align}
		&\max_{\substack{x\in \pfsc{V}{\hat{x}} \\ x_{ij} = 1}}\sum_{\substack{pq\in \pairs{V} \\ \{p, q\}\cap \{i, j\} \neq \emptyset}}c_{pq} x_{pq} \\
		= 
		&\max_{\substack{x\in \pfsc{V}{\hat{x}} \\ x_{ij} = 1}}
		\{c_{ij} + c_{ji}x_{ji} 
		+ 
		\sum_{w\in V'}\left(c_{jw}x_{jw} + c_{iw}x_{iw}\right) 
		+ 
		\sum_{w\in V'}\left(c_{wi}x_{wi} + c_{wj} x_{wj}\right)\}\\
		\leq 
		& \; c_{ij} 
		+ 
		\max_{x\in \pfsc{V}{\hat{x}}} c_{ji}x_{ji} 
		+  \sum_{w\in V'} \max_{\substack{x\in \pfsc{V}{\hat{x}} \\ x_{ij} = 1}}\left(c_{jw}x_{jw} + c_{iw}x_{iw}\right)
		+ \sum_{w\in V'} \max_{\substack{x\in \pfsc{V}{\hat{x}} \\ x_{ij} = 1}}\left(c_{wi}x_{wi} +c_{wj}x_{wj}\right)  \\
		= &\;\ici(ij)
	\end{align}
	This concludes the proof.
\end{proofatend}%

For any $pqr\in \triples{V}$, let $E(pqr) := \{pq, qr, pr\}$. 
We upper-bound the second term of the rhs.~of \eqref{eq:upper-bound-decomposition} using an edge-disjoint triple packing $T \subseteq \triples{V}$, where $E(pqr) \cap E(ijk) = \emptyset$ for all distinct $pqr, ijk\in \triples{V}$. 
This packing yields the following upper bound for any $x \in \pfsc{V}{\hat{x}}$: 
\begin{align}
	\sum_{e\in P_V}c_{e} x_{e}	= \smashoperator[r]{\sum_{e\in P_V\setminus \bigcup_{t\in T}E(t)}} c_{e} x_{e} + \sum_{t\in T}\sum_{e\in E(t)}c_{e}x_{e} 
	\leq \hspace{-2ex}\sum_{e\in P_V\setminus \bigcup_{t\in T}E(t)}\max_{x\in \pfsc{V}{\hat{x}}}c_{e} x_{e} + \sum_{t\in T}\max_{x\in \pfsc{V}{\hat{x}}}\sum_{e\in E(t)}c_{e}x_{e} \enspace.
\end{align}%

\textbf{Tight bounds \eqref{eq:lower-bound-subset-U-maximizer} and~\eqref{eq:upper-bound-subset-U-maximizer}.}
Tight bounds \eqref{eq:lower-bound-subset-U-maximizer} and \eqref{eq:upper-bound-subset-U-maximizer} can be computed efficiently in the special case where $b = 1$ and the $x^+ \in \{0, 1\}^{P_V}$ such that $\forall e \in P_V \colon x^+_e = 1 \Leftrightarrow c_e \geq 0 \vee \hat{x}_e = 1$ is feasible for~$\popcc{V}{c}{\hat{x}}$, i.e., $x^+\in \pfsc{V}{\hat{x}}$, and thus a solution to $\max_{x\in \pfsc{V}{\hat{x}}}\varphi_{c}(x)$.
In this case, for any $ij \in P_V\setminus \dom{\hat{x}}$ such that $c_{ij}\geq 0$, solving $\max_{x\in \pfsc{V}{\hat{x}} \colon x_{ij} = b'}\varphi_{c}(x)$ is tractable,
either by optimality of $x^+$ (for $b' = 1$), or by solving a minimum $ij$-cut problem (for $b' = 0$): 
\begin{proposition}
	\label{proposition:tractable-U}
	Let $V\neq \emptyset$, $c\in \mathbb{R}^{\pairs{V}}$ and $\hat x\in \ppc{V}$. Let $x^+ \in \{0, 1\}^{P_V}$ such that 
	\begin{align}
		\forall e\in P_V\colon \quad x^+_{e} = \begin{cases}
			1 & \textnormal{if $e\notin \dom{\hat{x}} \land c_e \geq 0$} \\
			0 & \textnormal{if $e\notin \dom{\hat{x}} \land c_e < 0$} \\
			\hat{x}_e & \textnormal{if $e\in \dom{\hat{x}}$}
		\end{cases}\enspace.
	\end{align}
	Moreover, let $ij\in \pairs{V}\setminus \dom{\hat{x}}$ with $c_{ij} \geq 0$ and thus $x^+_{ij} = 1$.
	If $x^+\in \pfsc{V}{\hat{x}}$ then 
	\begin{align}
		\label{eq:tractable-U-xij-1}
		&\max_{\substack{x\in \pfsc{V}{\hat{x}} \\ x_{ij} = 1}}\varphi_{c}(x) = \varphi_{c}(x^+)\enspace, \\
		\label{eq:tractable-U-xij-0}
		&
		\max_{\substack{x\in \pfsc{V}{\hat{x}} \\ x_{ij} = 0}}\varphi_{c}(x) =
		\smashoperator[l]{\max_{\substack{U\subseteq V \\ ij\in \delta(U, V\setminus U) \\ \delta(U, V\setminus U) \cap \hat{x}^{-1}(1) = \emptyset}}}
		\hspace{-2ex}\varphi_{c}(\sigma_{\delta(U, V\setminus U)}(x^+)) 
		 =\varphi_c(x^+) - \smashoperator[l]{\min_{\substack{U\subseteq V \\ ij\in \delta(U, V\setminus U) \\ \delta(U, V\setminus U) \cap \hat{x}^{-1}(1) = \emptyset}}}
		\sum_{pq\in \delta(U, V\setminus U)\setminus \hat{x}^{-1}(0)} \hspace{-2ex}c_{pq}^+\enspace.
	\end{align}
\end{proposition}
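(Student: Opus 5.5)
Both identities follow from the fact that $x^+$ maximises $\varphi_c$ \emph{pairwise} on $\pfsc{V}{\hat{x}}$. For every $e\in P_V$ and every $x\in\pfsc{V}{\hat{x}}$ we have $c_e x_e \le c_e x^+_e$. Indeed, on $\dom{\hat{x}}$ the two values coincide. Off the domain, $x^+_e$ is $1$ exactly when $c_e\ge 0$, so $c_ex_e\le c_e^+ = c_ex^+_e$. Summing over $e$ gives $\varphi_c(x)\le\varphi_c(x^+)$. Since $x^+\in\pfsc{V}{\hat{x}}$ and $x^+_{ij}=1$ (because $c_{ij}\ge 0$), $x^+$ is feasible for the constrained maximum, and \eqref{eq:tractable-U-xij-1} follows immediately.

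For \eqref{eq:tractable-U-xij-0} I will prove the first equality by two inequalities, then rewrite the objective. The last expression follows once we note that $\varphi_c(\sigma_{\delta(U,V\setminus U)}(x^+)) = \varphi_c(x^+) - \sum_{pq\in\delta(U,V\setminus U)} c_{pq}x^+_{pq}$. On $\delta(U,V\setminus U)$ the pairs of $\hat{x}^{-1}(1)$ are excluded by the constraint, pairs in $\hat{x}^{-1}(0)$ contribute $0$, and every remaining pair satisfies $c_{pq}x^+_{pq}=c^+_{pq}$. So the sum equals $\sum_{pq\in\delta(U,V\setminus U)\setminus\hat{x}^{-1}(0)}c^+_{pq}$.

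\emph{The direction ``$\ge$''.} Take any admissible $U$, i.e.\ $i\in U$, $j\notin U$ and $\delta(U,V\setminus U)\cap\hat{x}^{-1}(1)=\emptyset$. By \Cref{lemma:cut-map-trueness}, $\sigma_{\delta(U,V\setminus U)}(x^+)\in\pfsc{V}{\hat{x}}$, and its $ij$-entry is $0$. So the right-hand value is attained by a feasible point of the left-hand problem. The admissible set is nonempty: the left maximum is over a nonempty set, which the converse argument below turns into an admissible $U$. If the left set were empty, both sides would be $-\infty$, so I will treat that as degenerate and not dwell on it.

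\emph{The direction ``$\le$''.} This is the main step. Let $x\in\pfsc{V}{\hat{x}}$ with $x_{ij}=0$, and set $U:=\{i\}\cup\{q\in V\mid x_{iq}=1\}$, the up-set of $i$ in the preorder $x$. By transitivity no pair $pq$ with $p\in U$, $q\notin U$ has $x_{pq}=1$, so $x$ vanishes on $\delta(U,V\setminus U)$. Because $x_{ij}=0$, we have $j\notin U$, hence $ij\in\delta(U,V\setminus U)$. Since $x$ agrees with $\hat{x}$, no pair of $\hat{x}^{-1}(1)$ lies in the dicut, so $U$ is admissible. Now compare $x$ with $x':=\sigma_{\delta(U,V\setminus U)}(x^+)$ pair by pair. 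On the dicut both are $0$. Off the dicut $x'=x^+$, and the pairwise inequality $c_ex_e\le c_ex^+_e$ applies. Hence $\varphi_c(x)\le\varphi_c(x')$, which bounds the left maximum by the maximum over $U$. The only subtle point is choosing $U$ as the up-set (equivalently the closure under $x$) so that $x$ is genuinely zero on the whole dicut. Everything else is the pairwise dominance by $x^+$.
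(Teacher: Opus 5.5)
Your proof is correct and follows the paper's argument essentially step for step. The first identity comes from pairwise dominance of $x^+$ on $\pfsc{V}{\hat{x}}$. The direction ``$\ge$'' comes from \Cref{lemma:cut-map-trueness}. The direction ``$\le$'' takes the up-set $U=\{p\in V\mid x_{ip}=1\}$ of $i$ and compares $x$ pairwise against $\sigma_{\delta(U,V\setminus U)}(x^+)$; you also spell out the transitivity step the paper leaves implicit. The last equality is the term-wise evaluation of the dicut sum.

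The one loose end, nonemptiness of the feasible sets, is settled directly by maximal specificity of $\hat{x}$: since $ij\notin\dom{\hat{x}}$, the pair $ij$ is undecided, so some completion has $x_{ij}=0$.
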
%
\begin{proofatend}[Proof of~\Cref{proposition:tractable-U}]
	Since $x^+ \in \pfsc{V}{\hat{x}}$ and $x^+_{ij} = 1$, we have 
	\begin{align}
		\max_{\substack{x\in \pfsc{V}{\hat{x}} \\ x_{ij} = 1}}\varphi_{c}(x) \geq \varphi_c(x^+)\enspace.
	\end{align}
	Moreover, for every $x\in \pfsc{V}{\hat{x}}$ we have 
	\begin{align}
		\varphi_c(x) = 
		\smashoperator[r]{\sum_{e\in P_V}}c_e x_e \leq 
		\smashoperator[r]{\sum_{e\in \hat{x}^{-1}(1)}}c_e + 
		\smashoperator[r]{\sum_{e\in P_V \setminus \dom{\hat{x}}}}c_e^+ = \varphi_c(x^+)\enspace. 
	\end{align}
	Thus, \eqref{eq:tractable-U-xij-1} is true.
	
	For every $U\subseteq V$ such that $\delta(U, V\setminus U)\cap \hat{x}^{-1}(1) = \emptyset$ and $ij\in \delta(U, V\setminus U)$ the $x'' \coloneq \sigma_{\delta(U, V\setminus U)}(x^+)$ is such that $x''\in \pfsc{V}{\hat{x}}$ and $x''_{ij} = 0$ by~\Cref{definition:elementary-cut-map} and~\Cref{lemma:cut-map-trueness}. 
	
	Now, let $x'\in \pfsc{V}{\hat{x}}$ be the solution to the lhs.~of \eqref{eq:tractable-U-xij-0}. 
	We define $U = \{p\in V\mid x'_{ip} = 1\}$ and $x'' = \sigma_{\delta(U, V\setminus U)}(x^+)$. 
	We have $i\in U$ and $j\in V\setminus U$ and thus $ij\in \delta(U, V\setminus U)$ since $x'_{ij} = 0$. 
	Moreover, we have $\delta(U, V\setminus U)\cap \hat{x}^{-1}(1) = \emptyset$ since $x'_{pq} = 0$ for every $pq\in \delta(U, V\setminus U)$ by definition of $U$.
	Therefore, we have $x'' \in \pfsc{V}{\hat{x}}$ and $x''_{ij} = 0$. 
	Furthermore, we show that $\varphi_c(x'') \geq \varphi_c(x')$:
	\begin{align}
		&\varphi_c(x'') - \varphi_c(x') = \sum_{pq\in P_V\setminus \dom{\hat{x}}}c_{pq}(x''_{pq} - x'_{pq})
		= \sum_{\substack{pq\in P_V\setminus \dom{\hat{x}} \\ pq\notin \delta(U, V\setminus U)}}c_{pq}(x^+_{pq} - x'_{pq})\\
		=& \sum_{\substack{pq\in P_V\setminus \dom{\hat{x}} \\ pq\notin \delta(U, V\setminus U) \\ c_{pq} \geq 0}}c_{pq}(1 - x'_{pq}) + \sum_{\substack{pq\in P_V\setminus \dom{\hat{x}} \\ pq\notin \delta(U, V\setminus U) \\ c_{pq} < 0}}c_{pq}(0 - x'_{pq})
		\geq \;0\enspace.
	\end{align} 
	Therefore, the first equality of \eqref{eq:tractable-U-xij-0} is true.
	
	For every $e\in \delta(U, V\setminus U)$, we have $x^+_e = 1$ if and only if $c_e \geq 0$ and $\hat{x}_e \neq 0$ since $\delta(U, V\setminus U)\cap \hat{x}^{-1}(1) = \emptyset$. 
	From this observation follows the second equality in \eqref{eq:tractable-U-xij-0}.
\end{proofatend}%

\textbf{Bound \eqref{eq:upper-bound-boundary-subset-U-maximizer}.}
We describe three maps $\tau$ for \Cref{proposition:subset-U-maximizer}:
For a fixed set $U\subseteq V$, these maps all establish the optimal solution within $U$ and cut on the boundary of $U$ either all pairs $\delta(U, V\setminus U)$, all pairs $\delta(V \setminus U, U)$, or all pairs $\delta(U)$.
Let $y\in \argmax_{x'\in \pfsc{U}{\hat{x}'} \colon x'_{ij} = b}\varphi_{c'}(x')$. We define:
{\small
\begin{align}
	\tau^y_{\delta(U, V\setminus U)}(x)_{pq} &= \begin{cases}
		y_{pq} & \textnormal{if $pq\in P_U$}\\
		0 & \textnormal{if $pq\in \delta(U, V\setminus U)$}\\
		1 & \textnormal{if $pq\in \delta(V\setminus U, U) \land \exists r\in U\colon x_{pr} = 1 \land y_{rq} = 1 $}
		\\
		0 &\textnormal{if $pq\in \delta(V\setminus U, U) \land \forall r\in U\colon x_{pr} = 0\lor y_{rq} = 0 $}
		\\
		x_{pq} & \textnormal{if $pq\in P_{V\setminus U}$}
	\end{cases} \\
	\tau^y_{\delta(V\setminus U, U)}(x)_{pq} &= \begin{cases}
		y_{pq} & \textnormal{if $pq\in P_U$}\\
		0 & \textnormal{if $pq\in \delta(V\setminus U, U)$}\\
		1 & \textnormal{if $pq\in \delta(U, V\setminus U) \lor \exists r\in U\colon y_{pr} = 1 \land x_{rq} = 1 $}\\
		0 & \textnormal{if $pq\in \delta(U, V\setminus U)\land \forall r\in U\colon y_{pr} = 0 \lor x_{rq} = 0 $}\\
		x_{pq} & \textnormal{if $pq\in P_{V\setminus U}$}
	\end{cases} \\
	\tau^y_{\delta(U)}(x)_{pq} &= \begin{cases}
		y_{pq} & \textnormal{if $pq\in P_U$}\\
		0 & \textnormal{if $pq\in \delta(U)$}\\
		x_{pq} & \textnormal{if $pq\in P_{V\setminus U}$}
	\end{cases}
\end{align}}%
Below, \Cref{corollary:preorder-join-boundary-bounds-general} describes the bounds \eqref{eq:upper-bound-boundary-subset-U-maximizer} induced by these three choices of $\tau$ in general.
\Cref{corollary:preorder-join-boundary-bounds-exact} describes these bounds in the special case employed in~\Cref{proposition:tractable-U}.
\Cref{lemma:preorder-join-variables-boundary-tech-1} is a technical statement that we apply in~\Cref{corollary:trueness-preorder-join-maps} to establish trueness of any of the above-mentioned maps to some $\hat{x}\in \ppc{V}$, and that we apply in~\Cref{corollary:preorder-join-boundary-bounds-general,corollary:preorder-join-boundary-bounds-exact} in order to establish efficiently computable bounds \eqref{eq:upper-bound-boundary-subset-U-maximizer}.
\begin{lemma}
	\label{lemma:preorder-join-variables-boundary-tech-1}
	Let $V\neq \emptyset$ finite, $U \subseteq V$, $\hat{x}\in \ppc{V}$, $x\in \pfsc{V}{\hat{x}}$, $\hat{x}' := \hat{x}\rvert_{P_U}$ and $y\in \pfsc{U}{\hat{x}'}$.
	
	(1) For $\tau = \tau^y_{\delta(U, V\setminus U)}$ let 
	\begin{align}
		P_{01}' &\coloneqq 
		\{ pq \in \delta(V\setminus U, U) \mid
		\exists r \in U \colon \hat{x}_{pr} \neq 0 \land y_{rq} = 1 \} \setminus \hat{x}^{-1}(1) \\
		P_{10}' &= P_{10}''\coloneqq \delta(U, V\setminus U)\setminus \hat{x}^{-1}(0)\enspace.
	\end{align}%
	Moreover, let $P_{01}'' = \delta(V\setminus U, U)\setminus \hat{x}^{-1}(1)$. 
	Then $\pzo{\tau} \cap \delta(U) \subseteq P_{01}' \subseteq P_{01}''$ and $\poz{\tau} \cap \delta(U) \subseteq P_{10}' = P_{10}''$.
	
	(2) For $\tau = \tau^y_{\delta(V\setminus U, U)}$ let 
	\begin{align}
		P_{01}' &\coloneqq  
		\{ pq \in \delta(U, V\setminus U) \mid
		\exists r \in U \colon y_{pr} = 1\land \hat{x}_{rq} \neq 0 \} \setminus \hat{x}^{-1}(1)\\
		P_{10}' &= P_{10}''\coloneqq \delta(V\setminus U, U)\setminus \hat{x}^{-1}(0)\enspace.
	\end{align}%
	Moreover, let $P_{01}'' = \delta(U, V\setminus U)\setminus \hat{x}^{-1}(1)$.
	Then $\pzo{\tau} \cap \delta(U) \subseteq P_{01}' \subseteq P_{01}''$ and $\poz{\tau} \cap \delta(U) \subseteq P_{10}' = P_{10}''$.
	
	(3) For $\tau = \tau^y_{\delta(U)}$ let 
	\begin{align}
		P_{01}' &= P_{01}'' \coloneqq \emptyset \\ 
		P_{10}' &= P_{10}'' \coloneqq \delta(U)\setminus \hat{x}^{-1}(0)\enspace.
	\end{align}%
	Then $\pzo{\tau} \cap \delta(U) \subseteq P_{01}' = P_{01}''$ and $\poz{\tau} \cap \delta(U) \subseteq P_{10}' = P_{10}''$.
\end{lemma}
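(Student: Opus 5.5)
The proof is a direct case analysis on the location of a boundary pair $pq \in \delta(U)$, carried out separately for each of the three maps, with $x \in \pfsc{V}{\hat{x}}$ arbitrary. Every boundary pair lies in exactly one of $\delta(U, V\setminus U)$ and $\delta(V\setminus U, U)$. For each of the two cases we read off $\tau(x)_{pq}$ from the defining cases of the map and compare it with $x_{pq}$. The inclusions $P_{01}'\subseteq P_{01}''$ are immediate from the definitions, since $P_{01}'$ is obtained from $P_{01}''$ by adding a further condition. Likewise $P_{10}' = P_{10}''$ holds by definition. So only the memberships of $\pzo{\tau}\cap\delta(U)$ and $\poz{\tau}\cap\delta(U)$ need work.

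For (1), take $pq \in \delta(U, V\setminus U)$. Then $\tau(x)_{pq}=0$, so $pq$ can only lie in $\poz{\tau}$. Since this requires $x_{pq}=1$, we cannot have $\hat{x}_{pq}=0$, and hence $pq\in P_{10}'$. Now take $pq\in\delta(V\setminus U, U)$ with $x_{pq}=0$ and $\tau(x)_{pq}=1$. By the defining case there is $r\in U$ with $x_{pr}=1$ and $y_{rq}=1$. From $x_{pr}=1$ we get $\hat{x}_{pr}\neq 0$, and from $x_{pq}=0$ we get $\hat{x}_{pq}\neq 1$. Thus $pq\in P_{01}'$.

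We still have to rule out that a pair $pq\in\delta(V\setminus U,U)$ lies in $\poz{\tau}$, and this is the one genuinely non-syntactic step. Assume $x_{pq}=1$. We would like some $r\in U$ with $x_{pr}=1$ and $y_{rq}=1$; the choice $r=q$ works if we use the convention $x_{qq}=y_{qq}=1$. I expect the definition of $\tau^y_{\delta(U,V\setminus U)}$ to quantify over $r\in U$ including $r=q$, with the reflexive convention $\tilde{x}_{aa}=1$ fixed in the paper applying to $y$ and $x$. Under that reading $\tau(x)_{pq}=1$, so $pq\notin\poz{\tau}$. This reliance on the reflexive convention is the point I would state explicitly, since it is the main subtlety; everything else is bookkeeping.

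Part (2) is the mirror image of (1), obtained by exchanging the roles of the two cut directions and composing with $y$ on the left instead of the right. On $\delta(V\setminus U,U)$ the map sets the value to $0$, which yields the bound by $P_{10}'$. On $\delta(U,V\setminus U)$ a switch $0\to1$ needs some $r\in U$ with $y_{pr}=1$ and $x_{rq}=1$, which gives $\hat{x}_{rq}\ne0$ and hence membership in $P_{01}'$. A switch $1\to0$ is excluded by taking $r=p$; the paper's defining case for $\tau^y_{\delta(V\setminus U,U)}$ even lists a pair in $\delta(U,V\setminus U)$ directly under the value $1$. For (3), every pair in $\delta(U)$ is mapped to $0$. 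Hence $\pzo{\tau}\cap\delta(U)=\emptyset$, and every pair of $\poz{\tau}\cap\delta(U)$ has $x_{pq}=1$, so it lies outside $\hat{x}^{-1}(0)$, i.e.\ in $P_{10}'$.
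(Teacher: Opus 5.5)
Your proof is correct and follows the paper's argument: you use the same case split of $\delta(U)$ into $\delta(U,V\setminus U)$ and $\delta(V\setminus U,U)$, and the step ``$x_{pq}=1$ implies $\tau(x)_{pq}=1$'', which the paper attributes simply to the definition of $\tau$, is exactly your witness $r=q$ (resp.\ $r=p$) under the reflexive convention. One caution: the ``$\lor$'' in the third case of $\tau^y_{\delta(V\setminus U,U)}$ is a typo for ``$\land$'' (read literally it contradicts the fourth case and would break the $P_{01}'$ bound), so do not cite it as support; your $r=p$ argument is what carries that step.
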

\begin{proofatend}[Proof of~\Cref{lemma:preorder-join-variables-boundary-tech-1}]
	In all three cases: $P_{01}' \subseteq P_{01}''$ and $P_{10}' \subseteq P_{10}''$.
	
	(1) Let $\tau = \tau^y_{\delta(U, V\setminus U)}$ and $x' = \tau(x)$.
	
	Firstly, we show that $\pzo{\tau}\cap \delta(U) \subseteq P_{01}'$. 
	Obviously, $\pzo{\tau}\cap \hat{x}^{-1}(1) = \emptyset$.
	For every $pq\in \delta(U, V\setminus U)$ we have $x'_{pq} = 0$ and thus $pq\notin \pzo{\tau}$.
	For every $pq\in \delta(V\setminus U, U)$ such that for every $r\in U$ it holds  $\hat{x}_{pr} = 0$ or $y_{rq} = 0$, we have that $x'_{pq} = 0$ by definition of $\tau$, and therefore $pq\notin \pzo{\tau}$. 
	Thus, $\pzo{\tau}\cap \delta(U) \subseteq P_{01}'$. 
	
	Secondly, we show that $\poz{\tau}\cap \delta(U) \subseteq P_{10}'$. 
	Obviously, $\hat{x}^{-1}(0) \cap \poz{\tau} = \emptyset$.
	For every $pq\in \delta(V\setminus U, U)$ we have that $x_{pq} = 1$ implies $x'_{pq} = 1$, by definition of $\tau$. 
	Therefore, $\delta(V\setminus U, U)\cap \poz{\tau} = \emptyset$. 
	Thus, $\poz{\tau} \cap\delta(U)\subseteq P_{10}'$.
	
	(2) Let $\tau = \tau^y_{\delta(V\setminus U, U)}$ and $x' = \tau(x)$. 
	
	Firstly, we show that $\pzo{\tau}\cap \delta(U) \subseteq P_{01}'$. 
	Obviously, $\pzo{\tau}\cap \hat{x}^{-1}(1) = \emptyset$.
	For every $pq\in \delta(V\setminus U, U)$ we have $x'_{pq} = 0$ and thus, $pq\notin \pzo{\tau}$.
	For every $pq\in \delta(U, V\setminus U)$ such that for every $r\in U$ it holds  $y_{pr} = 0$ or $\hat{x}_{rq} = 0$, we have that $x'_{pq} = 0$ by definition of $\tau$, and therefore $pq\notin \pzo{\tau}$. 
	Thus, $\pzo{\tau}\cap \delta(U) \subseteq P_{01}'$. 
	
	Secondly, we show that $\poz{\tau}\cap \delta(U) \subseteq P_{10}'$. 
	Obviously, $\hat{x}^{-1}(0) \cap \poz{\tau} = \emptyset$.
	For every $pq\in \delta(U, V\setminus U)$ we have that $x_{pq} = 1$ implies $x'_{pq} = 1$ by definition of $\tau$. 
	Therefore, $\delta(U, V\setminus U)\cap \poz{\tau} = \emptyset$. 
	Thus, $\poz{\tau} \cap\delta(U)\subseteq P_{10}'$.
	
	(3) Let $\tau = \tau^y_{\delta(U)}$ and $x' = \tau(x)$. 
	
	Firstly, we show that $\pzo{\tau}\cap \delta(U) \subseteq P_{01}'$. 
	For every $pq\in \delta(U)$ we have $x'_{pq} = 0$ and thus $pq\notin \pzo{\tau}$. Therefore, $\pzo{\tau}\cap \delta(U) = \emptyset \subseteq P_{01}' = \emptyset$.
	
	Secondly, we show that $\poz{\tau}\cap \delta(U) \subseteq P_{10}'$. 
	Obviously, $\hat{x}^{-1}(0) \cap \poz{\tau} = \emptyset$.
	Thus, $\poz{\tau} \cap\delta(U)\subseteq P_{10}'$.
\end{proofatend}
\begin{corollary}
	\label{corollary:trueness-preorder-join-maps}
	Let $V\neq \emptyset$ finite, $U \subseteq V$, $\hat{x}\in \ppc{V}$, $\hat{x}' := \hat{x}\rvert_{P_U}$, $y\in \pfsc{U}{\hat{x}'}$ and $\tau \in \{\tau^y_{\delta(U, V\setminus U)}, \tau^y_{\delta(V\setminus U, U)}, \tau^y_{\delta(U)}\}$.
	Let $P_{01}'$ ($P_{01}''$) and $P_{10}'$ ($P_{10}''$) be defined as in~\Cref{lemma:preorder-join-variables-boundary-tech-1}. 
	Then $\tau$ is true to $\hat{x}$ if $P_{10}'\cap \hat{x}^{-1}(1) = P_{01}'\cap \hat{x}^{-1}(0) =\emptyset$ ($P_{10}''\cap \hat{x}^{-1}(1) = P_{01}''\cap \hat{x}^{-1}(0) =\emptyset$).
\end{corollary}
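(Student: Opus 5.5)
The plan is to reduce the statement to \Cref{lemma:pzo-poz-supersets}. \Cref{lemma:preorder-join-variables-boundary-tech-1} only controls $\pzo{\tau}$ and $\poz{\tau}$ on $\delta(U)$. So I will enlarge $P_{01}'$ and $P_{10}'$ by the pairs inside $P_U$ and $P_{V\setminus U}$ that can change, and check that the enlargement meets neither $\hat{x}^{-1}(0)$ nor $\hat{x}^{-1}(1)$. I take for granted, as \Cref{lemma:pzo-poz-supersets} requires, that each of the three maps sends $\pfsc{V}{\hat{x}}$ into $\pfs{V}$, i.e.\ that $\tau(x)$ is a preorder. This is implicit in the definition of the maps $\tau^y_{\cdot}$ and is not addressed in this corollary.

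\textbf{Step 1: pairs in $P_{V\setminus U}$.} In all three cases, $\tau(x)_{pq} = x_{pq}$ for $pq \in P_{V\setminus U}$. Hence $\pzo{\tau} \cap P_{V\setminus U} = \poz{\tau} \cap P_{V\setminus U} = \emptyset$.

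\textbf{Step 2: pairs in $P_U$.} Here $\tau(x)_{pq} = y_{pq}$, and $y \in \pfsc{U}{\hat{x}'}$ with $\hat{x}' = \hat{x}\rvert_{P_U}$. For $pq \in P_U \cap \dom{\hat{x}}$ and any $x \in \pfsc{V}{\hat{x}}$, we therefore have $x_{pq} = \hat{x}_{pq} = y_{pq}$. So no such pair lies in $\pzo{\tau}$ or $\poz{\tau}$. Consequently
\begin{align}
\pzo{\tau} \cap P_U \subseteq P_U \setminus \dom{\hat{x}} \quad\text{and}\quad \poz{\tau} \cap P_U \subseteq P_U \setminus \dom{\hat{x}} \enspace.
\end{align}

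\textbf{Step 3: assemble and apply \Cref{lemma:pzo-poz-supersets}.} Since $P_V = P_U \cup P_{V\setminus U} \cup \delta(U)$, define $Q_{01} := P_{01}' \cup (P_U \setminus \dom{\hat{x}})$ and $Q_{10} := P_{10}' \cup (P_U \setminus \dom{\hat{x}})$. By Steps 1 and 2 together with \Cref{lemma:preorder-join-variables-boundary-tech-1}, $Q_{01} \supseteq \pzo{\tau}$ and $Q_{10} \supseteq \poz{\tau}$. The added set $P_U \setminus \dom{\hat{x}}$ is disjoint from $\hat{x}^{-1}(0) \cup \hat{x}^{-1}(1) = \dom{\hat{x}}$. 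Hence $Q_{10} \cap \hat{x}^{-1}(1) = P_{10}' \cap \hat{x}^{-1}(1)$ and $Q_{01} \cap \hat{x}^{-1}(0) = P_{01}' \cap \hat{x}^{-1}(0)$, and both are empty by hypothesis. \Cref{lemma:pzo-poz-supersets} then yields that $\tau$ is true to $\hat{x}$.

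\textbf{Step 4: the double-primed version.} \Cref{lemma:preorder-join-variables-boundary-tech-1} gives $P_{01}' \subseteq P_{01}''$ and $P_{10}' = P_{10}''$. The hypotheses on $P_{01}''$ and $P_{10}''$ therefore imply those on $P_{01}'$ and $P_{10}'$, and the claim follows from Step 3.

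The only delicate point is Step 2. It relies on the convention in \Cref{def:pzo-poz} that $\pzo{\tau}$ and $\poz{\tau}$ range over $x \in \pfsc{V}{\hat{x}}$ only, which is what makes $x_{pq} = \hat{x}_{pq}$ on $\dom{\hat{x}}$ available.
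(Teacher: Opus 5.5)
Your proof is correct and follows the route the paper intends. The paper gives no separate proof: it presents the corollary as an immediate consequence of \Cref{lemma:preorder-join-variables-boundary-tech-1} combined with \Cref{lemma:pzo-poz-supersets}, and your Steps 1--2 spell out the handling of the pairs in $P_U$ and $P_{V\setminus U}$, which that lemma does not cover but which is needed to obtain full supersets of $\pzo{\tau}$ and $\poz{\tau}$.

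The presupposition you flag, that $\tau(x)\in\pfs{V}$, is likewise taken for granted by the paper. It follows from a short transitivity check using transitivity of $x$ and $y$.
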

\begin{corollary}
	\label{corollary:preorder-join-boundary-bounds-general}
	Let $V\neq \emptyset$, $c \in \mathbb{R}^{\pairs{V}}$, $\hat{x}\in \ppc{V}$, $U \subseteq V$, $\hat{x}' :=\hat{x}\rvert_{P_U}$ and $c' := c\rvert_{P_U}$. 
	Moreover, let $ij\in P_V\setminus \dom{\hat{x}}$, $b \in \{0, 1\}$ and 
	$y\in \argmax_{x'\in \pfsc{U}{\hat{x}'} \colon x'_{ij} = b}\varphi_{c'}(x')$.
	Let $\tau \in \{\tau^{y}_{\delta(U, V\setminus U)}, \tau^{y}_{\delta(V\setminus U, U)}, \tau^{y}_{\delta(U)}\}$
	and let $P_{01}''$ and $P_{10}''$ be defined as in~\Cref{lemma:preorder-join-variables-boundary-tech-1}. 
	Then:
	\begin{align}
		\forall x\in \pfsc{V}{\hat{x}}\colon \sum_{pq\in \delta(U)}c_{pq}\left(x_{pq} - \tau(x)_{pq}\right)  \leq \sum_{pq\in P_{01}''}c_{pq}^- + \sum_{pq\in P_{10}''}c_{pq}^+
	\end{align}%
\end{corollary}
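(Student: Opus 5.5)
The plan is to bound the boundary sum term by term, using \Cref{lemma:preorder-join-variables-boundary-tech-1}. Fix $x\in \pfsc{V}{\hat{x}}$ and write $x' := \tau(x)$. Split $\delta(U)$ into three classes of pairs: those with $x_{pq} = x'_{pq}$, those with $x_{pq}=0$ and $x'_{pq}=1$, and those with $x_{pq}=1$ and $x'_{pq}=0$.

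\begin{itemize}
\item A pair with $x_{pq} = x'_{pq}$ contributes $0$ to $\sum_{pq\in\delta(U)} c_{pq}(x_{pq}-x'_{pq})$.
\item A pair with $x_{pq}=0$ and $x'_{pq}=1$ lies in $\pzo{\tau}\cap\delta(U)$ by \Cref{def:pzo-poz}, since $x\in\pfsc{V}{\hat{x}}$ witnesses the change. By \Cref{lemma:preorder-join-variables-boundary-tech-1} it therefore lies in $P_{01}'\subseteq P_{01}''$. It contributes $-c_{pq}\le c_{pq}^-$, reading $c^-_{pq}=\max(0,-c_{pq})$ as the paper implicitly does, cf.\ the proofs of \Cref{proposition:edge-cut-condition} and \Cref{proposition:edge-join-condition}.
\item A pair with $x_{pq}=1$ and $x'_{pq}=0$ lies in $\poz{\tau}\cap\delta(U)\subseteq P_{10}'=P_{10}''$. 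It contributes $c_{pq}\le c_{pq}^+$.
\end{itemize}

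Summing over the three classes, and using that the added terms $c^-_{pq}$ and $c^+_{pq}$ are nonnegative, gives
\[
\sum_{pq\in\delta(U)} c_{pq}(x_{pq}-x'_{pq}) \le \sum_{pq\in P_{01}''} c^-_{pq} + \sum_{pq\in P_{10}''} c^+_{pq}.
\]
This holds because the pairs that actually change form subsets of $P_{01}''$ and $P_{10}''$ respectively, and the remaining pairs of $P_{01}''$ and $P_{10}''$ add only nonnegative amounts.

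The one subtle point is that $\pzo{\tau}$ and $\poz{\tau}$ are defined for maps $\pfsc{V}{\hat x}\to X_V$, so we must make sure $\tau$ may be regarded as such a map. The inclusions of \Cref{lemma:preorder-join-variables-boundary-tech-1} are stated for exactly these sets. Even without feasibility of $\tau(x)$, the argument needs only the pointwise values $\tau(x)_{pq}$ and membership of the pair in the corresponding set, which the lemma supplies. The case analysis is uniform over the three choices of $\tau$, because \Cref{lemma:preorder-join-variables-boundary-tech-1} provides the required inclusions in each case (1)--(3).
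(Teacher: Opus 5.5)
Your proof is correct and follows the paper's argument. Both use the inclusions $\pzo{\tau}\cap\delta(U)\subseteq P_{01}''$ and $\poz{\tau}\cap\delta(U)\subseteq P_{10}''$ from \Cref{lemma:preorder-join-variables-boundary-tech-1} and bound each changed term by $c^-_{pq}$ or $c^+_{pq}$. The only difference is that you classify pairs by how $x$ actually changes and then enlarge using nonnegativity of $c^{\pm}$, whereas the paper sums over $P_{01}''$ and $P_{10}''$ and appeals to their disjointness; your route does not need that disjointness.
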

\begin{proofatend}[Proof of~\Cref{corollary:preorder-join-boundary-bounds-general}]
	Let $\tau \in \{\tau^y_{\delta(U, V\setminus U)}, \tau^y_{\delta(V\setminus U, U)}, \tau^y_{\delta(U)}\}$, $x\in \pfsc{V}{\hat{x}}$ and $x' = \tau(x)$. 
	By~\Cref{lemma:preorder-join-variables-boundary-tech-1}, we have that $P_{01}''\cap P_{10}'' = \emptyset$.
	Moreover, for every $pq\in \delta(U) \setminus\left(P_{01}''\cup P_{10}''\right)$ it follows that $pq\notin \poz{\tau} \cup \pzo{\tau}$, and therefore $x'_{pq} = x_{pq}$. 
	Thus, for every $pq\in P_{01}''$ we have that $x_{pq} - x'_{pq}\in \{0, -1\}$, and 
	for every $pq\in P_{10}''$ we have that $x_{pq} - x'_{pq} \in \{0, 1\}$. 
	Therefore:
	\begin{align}
		\sum_{pq\in \delta(U)}c_{pq}\left(x_{pq} - x'_{pq}\right) 
		= \sum_{\mathclap{pq\in P_{01}''}}c_{pq}(x_{pq} - x'_{pq}) + \sum_{pq\in P_{10}''}c_{pq}(x_{pq} - x'_{pq})
		\leq \sum_{pq\in P_{01}''}c_{pq}^- + \sum_{pq\in P_{10}''}c_{pq}^+
	\end{align}
	This concludes the proof.
\end{proofatend}
\begin{corollary}
	\label{corollary:preorder-join-boundary-bounds-exact}
	Let $V\neq \emptyset$, $c \in \mathbb{R}^{\pairs{V}}$ and $\hat{x}\in \ppc{V}$. 
	Additionally, let $U \subseteq V$, $\hat{x}' :=\hat{x}\rvert_{P_U}$ and $c' := c\rvert_{P_U}$. 
	Moreover, let $ij\in P_V\setminus \dom{\hat{x}}$ and $y^+ \in \{0, 1\}^{P_U}$ such that
	\begin{align}
		\forall e\in P_U\colon \quad y^+_{e} = \begin{cases}
			1 & \textnormal{if $e\notin \dom{\hat{x}'} \land c_e \geq 0$} \\
			0 & \textnormal{if $e\notin \dom{\hat{x}'} \land c_e < 0$} \\
			\hat{x}'_e & \textnormal{if $e\in \dom{\hat{x}'}$}
		\end{cases}\enspace.
	\end{align}
	Let $\tau \in \{\tau^{y^+}_{\delta(U, V\setminus U)}, \tau^{y^+}_{\delta(V\setminus U, U)}, \tau^{y^+}_{\delta(U)}\}$
	and $P_{01}'$ and $P_{10}'$ as in~\Cref{lemma:preorder-join-variables-boundary-tech-1}. 
	If $y^+ \in \pfsc{U}{\hat{x}'}$, then:
	\begin{align}
		\forall x\in \pfsc{V}{\hat{x}}\colon \sum_{pq\in \delta(U)}c_{pq}\left(x_{pq} - \tau(x)_{pq}\right)  \leq \sum_{pq\in P_{01}'}c_{pq}^- + \sum_{pq\in P_{10}'}c_{pq}^+
	\end{align}
\end{corollary}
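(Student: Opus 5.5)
The plan is to repeat the argument of \Cref{corollary:preorder-join-boundary-bounds-general}, but with the finer sets $P_{01}'$ and $P_{10}'$ of \Cref{lemma:preorder-join-variables-boundary-tech-1} in place of $P_{01}''$ and $P_{10}''$. The hypothesis $y^+\in\pfsc{U}{\hat{x}'}$ is used only to make \Cref{lemma:preorder-join-variables-boundary-tech-1} applicable with $y = y^+$, since that lemma requires $y\in\pfsc{U}{\hat{x}'}$. In particular, $\tau$ is then well defined in each of the three cases, and the lemma yields
$\pzo{\tau}\cap\delta(U)\subseteq P_{01}'$ and $\poz{\tau}\cap\delta(U)\subseteq P_{10}'$.
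No optimality property of $y^+$ is needed for this inequality; optimality matters only later, when the bound is combined with \Cref{proposition:subset-U-maximizer}.

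Next, I check that $P_{01}'\cap P_{10}'=\emptyset$. By the lemma, $P_{01}'\subseteq P_{01}''$ and $P_{10}'=P_{10}''$, so it suffices to have $P_{01}''\cap P_{10}''=\emptyset$. This is immediate by inspection. In cases (1) and (2) one of these sets lies in $\delta(U,V\setminus U)$ and the other in $\delta(V\setminus U,U)$, and these are disjoint. In case (3), $P_{01}''=\emptyset$.

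Now fix $x\in\pfsc{V}{\hat{x}}$ and set $x'=\tau(x)$. I split $\delta(U)$ into three parts: $P_{01}'$, $P_{10}'$, and the rest.
\begin{itemize}
\item For $pq\in\delta(U)\setminus(P_{01}'\cup P_{10}')$: $pq$ lies in neither $\pzo{\tau}$ nor $\poz{\tau}$. Since these sets are defined by quantifying over all completions of $\hat{x}$, this gives $x'_{pq}=x_{pq}$, so the term vanishes.
\item For $pq\in P_{01}'$: $pq\notin\poz{\tau}$, so $x_{pq}-x'_{pq}\in\{0,-1\}$. Hence $c_{pq}(x_{pq}-x'_{pq})\le c_{pq}^-$, reading $c^-_{pq}$ as $\max(0,-c_{pq})$, which is the convention forced by the earlier proofs.
\item For $pq\in P_{10}'$: $pq\notin\pzo{\tau}$, so $x_{pq}-x'_{pq}\in\{0,1\}$. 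Hence $c_{pq}(x_{pq}-x'_{pq})\le c_{pq}^+$.
\end{itemize}
Summing the three contributions gives the claimed inequality.

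The only real point of care is that membership tests use $\pzo{\tau}$ and $\poz{\tau}$ intersected with $\delta(U)$. Because the lemma already restricts to $\delta(U)$, the decomposition of the sum over $\delta(U)$ is exactly the one the lemma controls. I expect no further obstacle; the main step is simply invoking \Cref{lemma:preorder-join-variables-boundary-tech-1} with $y=y^+$ once feasibility is assumed.
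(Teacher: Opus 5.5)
Your proposal is correct and matches the paper's proof. Both apply Lemma~\ref{lemma:preorder-join-variables-boundary-tech-1} with $y = y^+$ to get $P_{01}'\cap P_{10}'=\emptyset$ together with the containments of $\pzo{\tau}\cap\delta(U)$ and $\poz{\tau}\cap\delta(U)$, so pairs of $\delta(U)$ outside $P_{01}'\cup P_{10}'$ are unchanged by $\tau$ and the two remaining sums are bounded termwise by $c^-$ and $c^+$. You also justify the disjointness explicitly via $P_{01}'\subseteq P_{01}''$ and note that optimality of $y^+$ plays no role in this inequality; both are accurate small additions to the paper's argument.
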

\begin{proofatend}[Proof of~\Cref{corollary:preorder-join-boundary-bounds-exact}]
	Let $\tau \in \{\tau^{y^+}_{\delta(U, V\setminus U)}, \tau^{y^+}_{\delta(V\setminus U, U)}, \tau^{y^+}_{\delta(U)}\}$, $x\in \pfsc{V}{\hat{x}}$ and $x' = \tau(x)$. 
	By~\Cref{lemma:preorder-join-variables-boundary-tech-1}, we have that $P_{01}'\cap P_{10}' = \emptyset$.
	Moreover, for every $pq\in \delta(U) \setminus\left(P_{01}'\cup P_{10}'\right)$ it follows that $pq\notin \poz{\tau} \cup \pzo{\tau}$, and therefore $x'_{pq} = x_{pq}$. 
	Thus, for every $pq\in P_{01}'$ we have that $x_{pq} - x'_{pq}\in \{0, -1\}$, and
	for every $pq\in P_{10}'$ we have that $x_{pq} - x'_{pq} \in \{0, 1\}$. 
	Therefore:
	\begin{align}
		\sum_{pq\in \delta(U)}c_{pq}\left(x_{pq} - x'_{pq}\right) 
		= \sum_{\mathclap{pq\in P_{01}'}}c_{pq}(x_{pq} - x'_{pq}) + \sum_{pq\in P_{10}'}c_{pq}(x_{pq} - x'_{pq})
		\leq  \sum_{pq\in P_{01}'}c_{pq}^- + \sum_{pq\in P_{10}'}c_{pq}^+
	\end{align}
	This concludes the proof.
\end{proofatend}

\subsection{Contraction}
The contraction of equivalence classes is discussed in~\Cref{appendix:merging-of-equivalence-classes}.

%% file: section-experiments.tex
\section{Experiments}\label{section:experiments}
\begin{figure*}[p]
\begin{minipage}{\textwidth}
	\centering
	\input{figure-synthetic}
	\caption{
		Shown above are the percentage of fixed variables (Row~1) and runtimes (Row~2) 
		for applying \Cref{proposition:edge-cut-condition},~\Cref{corollary:directed-cut-condition},~\Cref{proposition:edge-join-condition}, ~\Cref{proposition:subset-U-maximizer} (with~\Cref{proposition:tractable-U}) and~\Cref{corollary:boecker-condition-strong} for $b\in \{0, 1\}$ individually 
		to instances of the synthetic dataset with respect to $\alpha\in \left[0, 1\right]$, $\lvert V\rvert = 40$ and $p_E \in \{0.25, 0.50, 0.75\}$.
		}
	\label{figure:synthetic}
\end{minipage}\\[3ex]
\begin{minipage}[t]{0.485\textwidth}
	\centering
	\input{figure-synthetic-alphas}
	\caption{
		Shown above are the percentage of fixed variables (Row~1) and runtimes (Row~2) 
		for applying \Cref{proposition:edge-cut-condition,corollary:directed-cut-condition,proposition:edge-join-condition,corollary:boecker-condition-strong,proposition:subset-U-maximizer} jointly
		to instances of the synthetic dataset with respect to $\lvert V\rvert = 40$ and $p_E \in \{0.25, 0.50, 0.75\}$.}
	\label{figure:synthetic-alphas}
\end{minipage}%
\hfill
\begin{minipage}[t]{0.485\textwidth}
	\centering
	\input{figure-synthetic-ns}
	\caption{
		Shown above are the percentage of fixed variables (Row~1) and runtimes (Row~2) 
		for applying \Cref{proposition:edge-cut-condition,corollary:directed-cut-condition,proposition:edge-join-condition,corollary:boecker-condition-strong,proposition:subset-U-maximizer} jointly
		to instances of the synthetic dataset with $\alpha\in \{0.25, 0.65, 0.70, 0.75\}$ and $p_E \in \{0.25, 0.50, 0.75\}$.}
	\label{figure:synthetic-ns}
\end{minipage}\\[3ex]
\begin{minipage}{\textwidth}
	\centering
	\input{figure-twitter}
	\caption{
		Shown above are the percentage of fixed variables (Row~1) as well as the runtimes (Row~2) 
		for applying \Cref{proposition:edge-cut-condition},~\Cref{corollary:directed-cut-condition},~\Cref{proposition:edge-join-condition}, ~\Cref{proposition:subset-U-maximizer} (together with~\Cref{proposition:tractable-U}) and~\Cref{corollary:boecker-condition-strong} for $b\in \{0, 1\}$ individually 
		as well as all conditions jointly (Column~7)
		to instances of the Twitter dataset. 
		}
	\label{figure:twitter}
\end{minipage}
\end{figure*}

We examine the algorithms defined in \Cref{section:algorithms} for deciding the partial optimality conditions established in \Cref{section:conditions} empirically, on three collections of instances of the preordering problem. 
We report percentages of variables fixed and runtimes on one core of an Intel Core i9-12900KF and 64 GB of RAM.
The code of the algorithms and for reproducing the experiments is provided in \citet{stein-code-2026}.

\subsection{Synthetic Instances}

For a systematic study, we synthesize instances wrt.~a preorder $x'$ and a design parameter $\alpha \in \left[0, 1\right]$.
Values $c_{pq}$ are drawn from two Gaussian distributions with means $+1 - \alpha$ and $-1 + \alpha$, the first in case $x'_{pq} = 1$, the second in case $x'_{pq} = 0$. 
The spread of both is $0.1+0.3\alpha$.
Any preorder $x'$ is constructed iteratively, wrt.~a design parameter $p_E \in \left[0, 1\right]$, starting in iteration $t = 0$ with the preorder $x'_0 := 0$. 
In every iteration $t$, a distinction is made:
If $\lvert (x'_t)^{-1}(1)\rvert / \lvert P_V\rvert < p_E$, an edge $e\in P_V \setminus (x'_t)^{-1}(1)$ is drawn uniformly at random, the preorder of the next iteration is defined as $x'_{t+1} := \sigma_{e}(x'_t)$, and the iteration counter is incremented.
Otherwise, $x'_t$ is output.
For any choice of $\alpha$ and $|V|$, an ensemble of $100$ such instances is constructed, from $5$ preorders $x'$, with $20$ value vectors for each.
When reporting percentages of variables fixed and runtimes, we consistently report the median, $\frac{1}{4}$- and $\frac{3}{4}$-quantile over the respective ensemble.

\Cref{figure:synthetic} shows the percentage of variables fixed and the runtime, both as a function of $\alpha$, for the cut conditions \Cref{proposition:edge-cut-condition},~\Cref{corollary:directed-cut-condition} and \Cref{corollary:boecker-condition-strong} ($b = 0$), 
and the join conditions \Cref{proposition:edge-join-condition},~\Cref{proposition:subset-U-maximizer} (together with~\Cref{proposition:tractable-U}) and~\Cref{corollary:boecker-condition-strong} ($b = 1$) separately.
All join conditions are applied on top of partial optimality from the cut conditions. 
It can be seen that \Cref{proposition:edge-cut-condition} fixes more variables to zero than \Cref{corollary:directed-cut-condition} consistently but requires more than two orders of magnitude longer to apply.
\Cref{corollary:boecker-condition-strong} ($b = 0$) and~\Cref{proposition:edge-cut-condition} are similarly effective in fixing variables to zero and take similar time to apply.
All join conditions are effective.
\Cref{corollary:boecker-condition-strong} ($b = 1$) fixes the most variables to one.
The effect of applying conditions jointly is shown as a function of $\alpha$ in \Cref{figure:synthetic-alphas}, and as a function of $|V|$ in \Cref{figure:synthetic-ns}.
It can be seen from \Cref{figure:synthetic-alphas} that all variables are fixed for instances with sufficiently small $\alpha$, and no variables are fixed for instances with sufficiently large $\alpha$.
It can be seen from \Cref{figure:synthetic-ns} that the time for applying the conditions jointly is polynomial in $|V|$.

\subsection{Social Networks}
Toward real applications, we examine partial optimality for instances of the preordering problem defined wrt.~published Twitter and Google+ ego networks \cite{leskovec2012}.
We define these instances such that $c_{ij} = 1$ if $i$ follows $j$, and $c_{ij} = -1$, otherwise. 
We analyze all Twitter instances and those Google+ instances with $\lvert V\rvert \leq 250$. The results for Google+ are similar to those for Twitter and are thus deferred to~\Cref{appendix:google+}, in particular~\Cref{figure:gplus-250}.

For Twitter, \Cref{figure:twitter} shows the percentage of variables fixed and the runtime, both as a function of $|V|$, for applications of cut conditions, \Cref{proposition:edge-cut-condition},~\Cref{corollary:directed-cut-condition} and \Cref{corollary:boecker-condition-strong} ($b = 0$), and join conditions, \Cref{proposition:edge-join-condition},~\Cref{proposition:subset-U-maximizer} (together with~\Cref{proposition:tractable-U}) and~\Cref{corollary:boecker-condition-strong} ($b = 1$), separately and jointly.
Also here, all join conditions are applied on top of partial optimality from the cut conditions. 
It can be seen that \Cref{proposition:edge-cut-condition} fixes more variables to zero (median: 29.6\%) than \Cref{corollary:directed-cut-condition} (median: 16.7\%) consistently, but takes two orders of magnitude more time to apply.
Both \Cref{proposition:edge-cut-condition,corollary:directed-cut-condition} are generally more effective in fixing variables to zero than~\Cref{corollary:boecker-condition-strong} ($b = 0$).
Fewer variables are fixed to one, but each join condition fixes some variables.

For a direct comparison of \Cref{corollary:boecker-condition-strong} to all conditions applied jointly, see \Cref{appendix:boecker}.

%% file: figure-synthetic.tex
\tikzmath{\plotheight=3.2;}
\tikzmath{\plotwidth=3.2;}

\setlength{\hspacing}{0.2cm}
\setlength{\vspacing}{0.2cm}

\providecommand{\addzeropersistencyplot}{}	
\renewcommand{\addzeropersistencyplot}[2]{\addplot+[
	only marks,
	mark=*,
	mark options={scale=0.25},
	color=#2
	] table[
	col sep=comma,
	x expr=\thisrow{alpha}, 
	y expr=\thisrow{medianZeroVariables}*100,
	y error minus expr=(\thisrow{medianZeroVariables} - \thisrow{q25ZeroVariables})*100,
	y error plus expr=(\thisrow{q75ZeroVariables} - \thisrow{medianZeroVariables})*100
	] {#1};}

\providecommand{\addonepersistencyplot}{}	
\renewcommand{\addonepersistencyplot}[2]{\addplot+[
	only marks,
	mark=*,
	mark options={scale=0.25},
	color=#2
	] table[
	col sep=comma,
	x expr=\thisrow{alpha}, 
	y expr=\thisrow{medianOneVariables}*100,
	y error minus expr=(\thisrow{medianOneVariables} - \thisrow{q25OneVariables})*100,
	y error plus expr=(\thisrow{q75OneVariables} - \thisrow{medianOneVariables})*100
	] {#1};}

\providecommand{\addruntimeplot}{}	
\renewcommand{\addruntimeplot}[2]{\addplot+[
	only marks,
	mark=*,
	mark options={scale=0.25},
	color=#2
	] table[
	col sep=comma,
	x expr=\thisrow{alpha}, 
	y expr=\thisrow{medianDuration} / 1e9,
	y error minus expr=(\thisrow{medianDuration} - \thisrow{q25Duration}) / 1e9,
	y error plus expr=(\thisrow{q75Duration} - \thisrow{medianDuration}) / 1e9
	] {#1};}

	\tikzsetnextfilename{fig-synthetic}
	\begin{tikzpicture}		
		\begin{groupplot}[group style={group size= 6 by 2, horizontal sep=\hspacing, vertical sep=\vspacing}]

			\nextgroupplot[
			title={Thm.~\ref{proposition:edge-cut-condition}},
			title style={align=center, font=\small, yshift=-1ex},
			xmin=0,
			xmax=1,
			xtick distance=0.5,
			ymin=0,
			ymax=100,
			width=\plotwidth cm,
			height=\plotheight cm,
			xticklabels=\empty,
			ylabel={$\frac{\lvert \dom{\hat{x}}\rvert}{\lvert P_V\rvert}\left[\%\right]$},
			]
			\addzeropersistencyplot{./data/RandomTransitivelyClosed/edgeCut/alphas/0.25/stats_40.csv}{blue};
			\addzeropersistencyplot{./data/RandomTransitivelyClosed/edgeCut/alphas/0.50/stats_40.csv}{\darkgreen};
			\addzeropersistencyplot{./data/RandomTransitivelyClosed/edgeCut/alphas/0.75/stats_40.csv}{red};
			
			\nextgroupplot[
			title={Cor.~\ref{corollary:directed-cut-condition}},
			title style={align=center, font=\small, yshift=-1ex},
			xmin=0,
			xmax=1,
			xtick distance=0.5,
			ymin=0,
			ymax=100,
			width=\plotwidth cm,
			height=\plotwidth cm,
			yticklabels=\empty,
			xticklabels=\empty
			]
			\addzeropersistencyplot{./data/RandomTransitivelyClosed/directedCut/alphas/0.25/stats_40.csv}{blue};
			\addzeropersistencyplot{./data/RandomTransitivelyClosed/directedCut/alphas/0.50/stats_40.csv}{\darkgreen};
			\addzeropersistencyplot{./data/RandomTransitivelyClosed/directedCut/alphas/0.75/stats_40.csv}{red};

			\nextgroupplot[
			title={Cuts+Thm.~\ref{proposition:edge-join-condition}},
			title style={align=center, font=\small, yshift=-1ex},
			xmin=0,
			xmax=1,
			xtick distance=0.5,
			ymin=0,
			ymax=100,
			width=\plotwidth cm,
			height=\plotheight cm,
			yticklabels=\empty,
			xticklabels=\empty
			]
			\addonepersistencyplot{./data/RandomTransitivelyClosed/cuts+edgeJoin/alphas/0.25/stats_40.csv}{blue};
			\addonepersistencyplot{./data/RandomTransitivelyClosed/cuts+edgeJoin/alphas/0.50/stats_40.csv}{\darkgreen};
			\addonepersistencyplot{./data/RandomTransitivelyClosed/cuts+edgeJoin/alphas/0.75/stats_40.csv}{red};
			
			\nextgroupplot[
			title={Cuts+Thm.~\ref{proposition:subset-U-maximizer} \\ (Prop.~\ref{proposition:tractable-U})},
			title style={align=center, font=\small, yshift=-1ex},
			xmin=0,
			xmax=1,
			xtick distance=0.5,
			ymin=0,
			ymax=100,
			width=\plotwidth cm,
			height=\plotheight cm,
			yticklabels=\empty,
			xticklabels=\empty
			]
			\addonepersistencyplot{./data/RandomTransitivelyClosed/cuts+preorderJoin/alphas/0.25/stats_40.csv}{blue};
			\addonepersistencyplot{./data/RandomTransitivelyClosed/cuts+preorderJoin/alphas/0.50/stats_40.csv}{\darkgreen};
			\addonepersistencyplot{./data/RandomTransitivelyClosed/cuts+preorderJoin/alphas/0.75/stats_40.csv}{red};

			\nextgroupplot[
			title={Cor.~\ref{corollary:boecker-condition-strong} ($b = 0$)},
			title style={align=center, font=\small, yshift=-1ex},
			xmin=0,
			xmax=1,
			xtick distance=0.5,
			ymin=0,
			ymax=100,
			width=\plotwidth cm,
			height=\plotheight cm,
			xticklabels=\empty,
			yticklabel=\empty
			]
			\addzeropersistencyplot{./data/RandomTransitivelyClosed/bbkStrongCut/alphas/0.25/stats_40.csv}{blue};
			\addzeropersistencyplot{./data/RandomTransitivelyClosed/bbkStrongCut/alphas/0.50/stats_40.csv}{\darkgreen};
			\addzeropersistencyplot{./data/RandomTransitivelyClosed/bbkStrongCut/alphas/0.75/stats_40.csv}{red};
			
			\nextgroupplot[
			title={Cuts+Cor.~\ref{corollary:boecker-condition-strong} \\ ($b = 1$)},
			title style={align=center, font=\small, yshift=-1ex},
			xmin=0,
			xmax=1,
			xtick distance=0.5,
			ymin=0,
			ymax=100,
			width=\plotwidth cm,
			height=\plotheight cm,
			yticklabels=\empty,
			xticklabels=\empty
			]
			\addonepersistencyplot{./data/RandomTransitivelyClosed/cuts+bbkStrongJoin/alphas/0.25/stats_40.csv}{blue};
			\addonepersistencyplot{./data/RandomTransitivelyClosed/cuts+bbkStrongJoin/alphas/0.50/stats_40.csv}{\darkgreen};
			\addonepersistencyplot{./data/RandomTransitivelyClosed/cuts+bbkStrongJoin/alphas/0.75/stats_40.csv}{red};

			\nextgroupplot[
			ylabel={Runtime [s]},
			ymode=log,
			xlabel={$\alpha$},
			xmin=0,
			xmax=1,
			ymin=1e-4,
			ymax=1e3,
			width=\plotwidth cm,
			height=\plotheight cm,
			xtick distance=0.50,
			legend pos=north west,
			legend style={font=\tiny}
			]
			\addruntimeplot{./data/RandomTransitivelyClosed/edgeCut/alphas/0.25/stats_40.csv}{blue}
			\addruntimeplot{./data/RandomTransitivelyClosed/edgeCut/alphas/0.50/stats_40.csv}{\darkgreen}
			\addruntimeplot{./data/RandomTransitivelyClosed/edgeCut/alphas/0.75/stats_40.csv}{red}
			
			\nextgroupplot[
			ymode=log,
			xlabel={$\alpha$},
			xmin=0,
			xmax=1,
			ymin=1e-4,
			ymax=1e3,
			width=\plotwidth cm,
			height=\plotheight cm,
			xtick distance=0.50,
			yticklabels=\empty,
			legend pos=north west,
			legend style={font=\tiny}
			]
			\addruntimeplot{./data/RandomTransitivelyClosed/directedCut/alphas/0.25/stats_40.csv}{blue}
			\addlegendentry{$p_E = 0.25$}
			\addruntimeplot{./data/RandomTransitivelyClosed/directedCut/alphas/0.50/stats_40.csv}{\darkgreen}
			\addlegendentry{$p_E = 0.50$}
			\addruntimeplot{./data/RandomTransitivelyClosed/directedCut/alphas/0.75/stats_40.csv}{red}
			\addlegendentry{$p_E = 0.75$}

			\nextgroupplot[
			ymode=log,
			xlabel={$\alpha$},
			xmin=0,
			xmax=1,
			ymin=1e-4,
			ymax=1e3,
			width=\plotwidth cm,
			height=\plotheight cm,
			xtick distance=0.50,
			yticklabel=\empty
			]
			\addruntimeplot{./data/RandomTransitivelyClosed/cuts+edgeJoin/alphas/0.25/stats_40.csv}{blue}
			\addruntimeplot{./data/RandomTransitivelyClosed/cuts+edgeJoin/alphas/0.50/stats_40.csv}{\darkgreen}
			\addruntimeplot{./data/RandomTransitivelyClosed/cuts+edgeJoin/alphas/0.75/stats_40.csv}{red}

			\nextgroupplot[
			ymode=log,
			xlabel={$\alpha$},
			xmin=0,
			xmax=1,
			ymin=1e-4,
			ymax=1e3,
			width=\plotwidth cm,
			height=\plotheight cm,
			xtick distance=0.50,
			yticklabel=\empty
			]
			\addruntimeplot{./data/RandomTransitivelyClosed/cuts+preorderJoin/alphas/0.25/stats_40.csv}{blue}
			\addruntimeplot{./data/RandomTransitivelyClosed/cuts+preorderJoin/alphas/0.50/stats_40.csv}{\darkgreen}
			\addruntimeplot{./data/RandomTransitivelyClosed/cuts+preorderJoin/alphas/0.75/stats_40.csv}{red}

			\nextgroupplot[
			ymode=log,
			xlabel={$\alpha$},
			xmin=0,
			xmax=1,
			ymin=1e-4,
			ymax=1e3,
			width=\plotwidth cm,
			height=\plotheight cm,
			xtick distance=0.50,
			yticklabel=\empty
			]
			\addruntimeplot{./data/RandomTransitivelyClosed/bbkStrongCut/alphas/0.25/stats_40.csv}{blue}
			\addruntimeplot{./data/RandomTransitivelyClosed/bbkStrongCut/alphas/0.50/stats_40.csv}{\darkgreen}
			\addruntimeplot{./data/RandomTransitivelyClosed/bbkStrongCut/alphas/0.75/stats_40.csv}{red}

			\nextgroupplot[
			ymode=log,
			xlabel={$\alpha$},
			xmin=0,
			xmax=1,
			ymin=1e-4,
			ymax=1e3,
			width=\plotwidth cm,
			height=\plotheight cm,
			xtick distance=0.50,
			yticklabel=\empty
			]
			\addruntimeplot{./data/RandomTransitivelyClosed/cuts+bbkStrongJoin/alphas/0.25/stats_40.csv}{blue}
			\addruntimeplot{./data/RandomTransitivelyClosed/cuts+bbkStrongJoin/alphas/0.50/stats_40.csv}{\darkgreen}
			\addruntimeplot{./data/RandomTransitivelyClosed/cuts+bbkStrongJoin/alphas/0.75/stats_40.csv}{red}
			
		\end{groupplot}
	\end{tikzpicture}

%% file: figure-synthetic-alphas.tex
\tikzmath{\plotheight=3.2;}
\tikzmath{\plotwidth=3.2;}

\setlength{\hspacing}{0.15cm}
\setlength{\vspacing}{0.50cm}

\providecommand{\addvariableplot}{}	
\renewcommand{\addvariableplot}[2]{\addplot+[
	only marks,
	mark=*,
	mark options={scale=0.25},
	color=#2
	] table[
	col sep=comma,
	x expr=\thisrow{alpha}, 
	y expr=\thisrow{medianEliminatedVariables}*100,
	y error minus expr=(\thisrow{medianEliminatedVariables} - \thisrow{q25EliminatedVariables})*100,
	y error plus expr=(\thisrow{q75EliminatedVariables} - \thisrow{medianEliminatedVariables})*100
	] {#1};}

\providecommand{\addruntimeplot}{}	
\renewcommand{\addruntimeplot}[2]{\addplot+[
	only marks,
	mark=*,
	mark options={scale=0.25},
	color=#2
	] table[
	col sep=comma,
	x expr=\thisrow{alpha}, 
	y expr=\thisrow{medianDuration} / 1e9,
	y error minus expr=(\thisrow{medianDuration} - \thisrow{q25Duration}) / 1e9,
	y error plus expr=(\thisrow{q75Duration} - \thisrow{medianDuration}) / 1e9
	] {#1};}

	\tikzsetnextfilename{fig-synthetic-alphas}
	\begin{tikzpicture}		
		\begin{groupplot}[group style={group size=3 by 2, horizontal sep=\hspacing, vertical sep=\vspacing}]

			\nextgroupplot[
			title={$p_E = 0.25$},
			title style={align=center, font=\small, yshift=-1ex},
			xmin=0,
			xmax=1,
			xtick distance=0.5,
			ymin=0,
			ymax=100,
			width=\plotwidth cm,
			height=\plotheight cm,
			ylabel={$\frac{\lvert \dom{\hat{x}}\rvert}{\lvert P_V\rvert}\left[\%\right]$},
			legend to name=TCLAlphasLegend,
			legend style={legend columns=4, font=\scriptsize},
			ylabel style={yshift=-6pt},
			legend pos=south west,
			xticklabels=\empty
			]
			\addvariableplot{./data/RandomTransitivelyClosed/allConditions/alphas/0.25/stats_20.csv}{blue}
			\addlegendentry{$\lvert V\rvert = 20$}
			\addvariableplot{./data/RandomTransitivelyClosed/allConditions/alphas/0.25/stats_30.csv}{\darkgreen}
			\addlegendentry{$\lvert V\rvert = 30$}
			\addvariableplot{./data/RandomTransitivelyClosed/allConditions/alphas/0.25/stats_40.csv}{red}
			\addlegendentry{$\lvert V\rvert = 40$}
			\addvariableplot{./data/RandomTransitivelyClosed/allConditions/alphas/0.25/stats_50.csv}{magenta}
			\addlegendentry{$\lvert V\rvert = 50$}

			\nextgroupplot[
			title style={align=center, font=\small, yshift=-1ex},
			xmin=0,
			xmax=1,
			ymin=0,
			ymax=100,
			width=\plotwidth cm,
			height=\plotheight cm,
			title={$p_E = 0.50$},
			yticklabels=\empty,
			xtick distance=0.5,
			xticklabels=\empty
			]
			\addvariableplot{./data/RandomTransitivelyClosed/allConditions/alphas/0.50/stats_20.csv}{blue}
			\addvariableplot{./data/RandomTransitivelyClosed/allConditions/alphas/0.50/stats_30.csv}{\darkgreen}
			\addvariableplot{./data/RandomTransitivelyClosed/allConditions/alphas/0.50/stats_40.csv}{red}
			\addvariableplot{./data/RandomTransitivelyClosed/allConditions/alphas/0.50/stats_50.csv}{magenta}
			
			\nextgroupplot[
			title={$p_E = 0.75$},
			title style={align=center, font=\small, yshift=-1ex},
			xmin=0,
			xmax=1,
			ymin=0,
			ymax=100,
			width=\plotwidth cm,
			height=\plotheight cm,
			yticklabels=\empty,
			xtick distance=0.5,
			xticklabels=\empty
			]
			\addvariableplot{./data/RandomTransitivelyClosed/allConditions/alphas/0.75/stats_20.csv}{blue}
			\addvariableplot{./data/RandomTransitivelyClosed/allConditions/alphas/0.75/stats_30.csv}{\darkgreen}
			\addvariableplot{./data/RandomTransitivelyClosed/allConditions/alphas/0.75/stats_40.csv}{red}
			\addvariableplot{./data/RandomTransitivelyClosed/allConditions/alphas/0.75/stats_50.csv}{magenta}

			\nextgroupplot[
			ylabel={Runtimes [s]},
			title style={align=center},
			xmin=0,
			xmax=1,
			ymin=1e-2,
			ymax=1e3,
			ymode=log,
			width=\plotwidth cm,
			height=\plotheight cm,
			ylabel style={yshift=-6pt},
			xlabel=$\alpha$,
			xtick distance=0.5,
			xlabel style={yshift=4pt}
			]
			
			\addruntimeplot{./data/RandomTransitivelyClosed/allConditions/alphas/0.25/stats_20.csv}{blue}
			\addruntimeplot{./data/RandomTransitivelyClosed/allConditions/alphas/0.25/stats_30.csv}{\darkgreen}
			\addruntimeplot{./data/RandomTransitivelyClosed/allConditions/alphas/0.25/stats_40.csv}{red}
			\addruntimeplot{./data/RandomTransitivelyClosed/allConditions/alphas/0.25/stats_50.csv}{magenta}
			
			\nextgroupplot[
			title style={align=center},
			xmin=0,
			xmax=1,
			ymode=log,
			ymin=1e-2,
			ymax=1e3,
			width=\plotwidth cm,
			height=\plotheight cm,
			yticklabels=\empty,
			xlabel=$\alpha$,
			xtick distance=0.5,
			xlabel style={yshift=4pt}
			]
			
			\addruntimeplot{./data/RandomTransitivelyClosed/allConditions/alphas/0.50/stats_20.csv}{blue}
			\addruntimeplot{./data/RandomTransitivelyClosed/allConditions/alphas/0.50/stats_30.csv}{\darkgreen}
			\addruntimeplot{./data/RandomTransitivelyClosed/allConditions/alphas/0.50/stats_40.csv}{red}
			\addruntimeplot{./data/RandomTransitivelyClosed/allConditions/alphas/0.50/stats_50.csv}{magenta}

			\nextgroupplot[
			ymode=log,
			title style={align=center},
			xmin=0,
			xmax=1,
			ymin=1e-2,
			ymax=1e3,
			width=\plotwidth cm,
			height=\plotheight cm,
			yticklabels=\empty,
			xlabel=$\alpha$,
			xtick distance=0.5,
			xlabel style={yshift=4pt}
			]
			
			\addruntimeplot{./data/RandomTransitivelyClosed/allConditions/alphas/0.75/stats_20.csv}{blue}
			\addruntimeplot{./data/RandomTransitivelyClosed/allConditions/alphas/0.75/stats_30.csv}{\darkgreen}
			\addruntimeplot{./data/RandomTransitivelyClosed/allConditions/alphas/0.75/stats_40.csv}{red}
			\addruntimeplot{./data/RandomTransitivelyClosed/allConditions/alphas/0.75/stats_50.csv}{magenta}

		\end{groupplot}
		
		\path (group c1r2.south west) -- node[below=0.75cm]{\ref{TCLAlphasLegend}} (group c3r2.south east);
	\end{tikzpicture}

%% file: figure-synthetic-ns.tex
\tikzmath{\plotheight=3.2;}
\tikzmath{\plotwidth=3.2;}

\setlength{\hspacing}{0.25cm}
\setlength{\vspacing}{0.50cm}

\providecommand{\addvariableplot}{}	
\renewcommand{\addvariableplot}[2]{\addplot+[
	only marks,
	mark=*,
	mark options={scale=0.25},
	color=#2
	] table[
	col sep=comma,
	x expr=\thisrow{numberOfPoints}, 
	y expr=\thisrow{medianEliminatedVariables}*100,
	y error minus expr=(\thisrow{medianEliminatedVariables} - \thisrow{q25EliminatedVariables})*100,
	y error plus expr=(\thisrow{q75EliminatedVariables} - \thisrow{medianEliminatedVariables})*100
	] {#1};}

\providecommand{\addruntimeplot}{}
\renewcommand{\addruntimeplot}[3][]{\addplot+[
	only marks,
	mark=*,
	mark options={scale=0.25},
	color=#3,
	#1
	] table[
	col sep=comma,
	x expr=\thisrow{numberOfPoints}, 
	y expr=\thisrow{medianDuration} / 1e9,
	y error minus expr=(\thisrow{medianDuration} - \thisrow{q25Duration}) / 1e9,
	y error plus expr=(\thisrow{q75Duration} - \thisrow{medianDuration}) / 1e9
	] {#2};}

	\tikzsetnextfilename{fig-synthetic-ns}
	\begin{tikzpicture}		
		\begin{groupplot}[group style={group size=3 by 2, horizontal sep=\hspacing, vertical sep=\vspacing}]

			\nextgroupplot[
			title={$p_E = 0.25$},
			title style={align=center, font=\small, yshift=-1ex},
			xmin=0,
			xmax=110,
			xtick distance=50,
			ymin=0,
			ymax=100,
			width=\plotwidth cm,
			height=\plotheight cm,
			ylabel={$\frac{\lvert \dom{\hat{x}}\rvert}{\lvert P_V\rvert}\left[\%\right]$},
			legend to name=TCLNumberOfVerticesLegend,
			legend style={legend columns=4, font=\scriptsize},
			ylabel style={yshift=-6pt}
			]
			\addvariableplot{./data/RandomTransitivelyClosed/allConditions/ns/0.25/stats_0.25.csv}{blue}
			\addlegendentry{$\alpha = 0.25$}
			\addvariableplot{./data/RandomTransitivelyClosed/allConditions/ns/0.25/stats_0.65.csv}{\darkgreen}
			\addlegendentry{$\alpha = 0.65$}
			\addvariableplot{./data/RandomTransitivelyClosed/allConditions/ns/0.25/stats_0.70.csv}{red}
			\addlegendentry{$\alpha = 0.70$}
			\addvariableplot{./data/RandomTransitivelyClosed/allConditions/ns/0.25/stats_0.75.csv}{magenta}
			\addlegendentry{$\alpha = 0.75$}

			\nextgroupplot[
			title style={align=center, font=\small, yshift=-1ex},
			xmin=0,
			xmax=110,
			ymin=0,
			ymax=100,
			width=\plotwidth cm,
			height=\plotheight cm,
			title={$p_E = 0.50$},
			yticklabels=\empty,
			xtick distance=50
			]
			\addvariableplot{./data/RandomTransitivelyClosed/allConditions/ns/0.50/stats_0.25.csv}{blue}
			\addvariableplot{./data/RandomTransitivelyClosed/allConditions/ns/0.50/stats_0.65.csv}{\darkgreen}
			\addvariableplot{./data/RandomTransitivelyClosed/allConditions/ns/0.50/stats_0.70.csv}{red}
			\addvariableplot{./data/RandomTransitivelyClosed/allConditions/ns/0.50/stats_0.75.csv}{magenta}
			
			\nextgroupplot[
			title={$p_E = 0.75$},
			title style={align=center, font=\small, yshift=-1ex},
			xmin=0,
			xmax=110,
			ymin=0,
			ymax=100,
			width=\plotwidth cm,
			height=\plotheight cm,
			yticklabels=\empty,
			xtick distance=50
			]
			\addvariableplot{./data/RandomTransitivelyClosed/allConditions/ns/0.75/stats_0.25.csv}{blue}
			\addvariableplot{./data/RandomTransitivelyClosed/allConditions/ns/0.75/stats_0.65.csv}{\darkgreen}
			\addvariableplot{./data/RandomTransitivelyClosed/allConditions/ns/0.75/stats_0.70.csv}{red}
			\addvariableplot{./data/RandomTransitivelyClosed/allConditions/ns/0.75/stats_0.75.csv}{magenta}

			\nextgroupplot[
			xmode=log,
			ymode=log,
			ylabel={Runtimes [s]},
			title style={align=center},
			xmin=0,
			xmax=110,
			ymin=1e-5,
			ymax=1000,
			width=\plotwidth cm,
			height=\plotheight cm,
			ylabel style={yshift=-6pt},
			xlabel=$\vert V\vert$,
			xlabel style={yshift=4pt},
			legend pos=south east,
			legend style={
				font=\scriptsize,
				draw=none,  
				fill opacity=0.9
			},
			legend image post style={xscale=0.25}
			]
			
			\addruntimeplot[forget plot]{./data/RandomTransitivelyClosed/allConditions/ns/0.25/stats_0.25.csv}{blue}
			\addruntimeplot[forget plot]{./data/RandomTransitivelyClosed/allConditions/ns/0.25/stats_0.65.csv}{\darkgreen}
			\addruntimeplot[forget plot]{./data/RandomTransitivelyClosed/allConditions/ns/0.25/stats_0.70.csv}{red}
			\addruntimeplot[forget plot]{./data/RandomTransitivelyClosed/allConditions/ns/0.25/stats_0.75.csv}{magenta}
			\addplot[
			domain=1:100,      
			samples=2,     
			color=magenta,
			thick,
			]
			{2.2356e-07 * x^4.6356};
			\addlegendentry{$O(\lvert V\rvert^{4.6})$}

			\nextgroupplot[
			xmode=log,
			ymode=log,
			title style={align=center},
			xmin=0,
			xmax=110,
			ymin=1e-5,
			ymax=1000,
			width=\plotwidth cm,
			height=\plotheight cm,
			yticklabels=\empty,
			xlabel=$\vert V\vert$,
			xlabel style={yshift=4pt},
			legend pos=south east,
			legend style={
				font=\scriptsize,
				draw=none,  
				fill opacity=0.9
			},
			legend image post style={xscale=0.25}
			]
			
			\addruntimeplot[forget plot]{./data/RandomTransitivelyClosed/allConditions/ns/0.50/stats_0.25.csv}{blue}
			\addruntimeplot[forget plot]{./data/RandomTransitivelyClosed/allConditions/ns/0.50/stats_0.65.csv}{\darkgreen}
			\addruntimeplot[forget plot]{./data/RandomTransitivelyClosed/allConditions/ns/0.50/stats_0.70.csv}{red}
			\addruntimeplot[forget plot]{./data/RandomTransitivelyClosed/allConditions/ns/0.50/stats_0.75.csv}{magenta}
			\addplot[
			domain=1:100,      
			samples=2,     
			color=magenta,
			thick,
			]
			{3.7423e-07 * x^4.4911};
			\addlegendentry{$O(\lvert V\rvert^{4.5})$}

			\nextgroupplot[
			xmode=log,
			ymode=log,
			title style={align=center},
			xmin=0,
			xmax=110,
			ymin=1e-5,
			ymax=1000,
			width=\plotwidth cm,
			height=\plotheight cm,
			yticklabels=\empty,
			xlabel=$\vert V\vert$,
			xlabel style={yshift=4pt},
			legend pos=south east,
			legend style={
				font=\scriptsize,
				draw=none,  
				fill opacity=0.9
			},
			legend image post style={xscale=0.25}
			]
			
			\addruntimeplot[forget plot]{./data/RandomTransitivelyClosed/allConditions/ns/0.75/stats_0.25.csv}{blue}
			\addruntimeplot[forget plot]{./data/RandomTransitivelyClosed/allConditions/ns/0.75/stats_0.65.csv}{\darkgreen}
			\addruntimeplot[forget plot]{./data/RandomTransitivelyClosed/allConditions/ns/0.75/stats_0.70.csv}{red}
			\addruntimeplot[forget plot]{./data/RandomTransitivelyClosed/allConditions/ns/0.75/stats_0.75.csv}{magenta}
			\addplot[
			domain=1:100,      
			samples=2,     
			color=magenta,
			thick,
			]
			{1.2243e-07 * x^4.9308};
			\addlegendentry{$O(\lvert V\rvert^{4.9})$}

		\end{groupplot}
		
		\path (group c1r2.south west) -- node[below=0.75cm]{\ref{TCLNumberOfVerticesLegend}} (group c3r2.south east);
	\end{tikzpicture}

%% file: figure-twitter.tex
\tikzmath{\plotheight=3.2;}
\tikzmath{\plotwidth=3.2;}

\setlength{\hspacing}{0.10cm}
\setlength{\vspacing}{0.50cm}

\providecommand{\addzeropersistencyplot}{}	
\renewcommand{\addzeropersistencyplot}[1]{\addplot+[
	only marks,
	mark=*,
	mark options={scale=0.25},
	color=blue
	] table[
	col sep=comma,
	x expr=\thisrow{initialNumberOfVertices}, 
	y expr=\thisrow{persistentZeroEdges} / \thisrow{initialNumberOfEdges}*100
	] {#1};}
	
\providecommand{\addpersistencyplot}{}	
\renewcommand{\addpersistencyplot}[1]{\addplot+[
	only marks,
	mark=*,
	mark options={scale=0.25},
	color=blue
	] table[
	col sep=comma,
	x expr=\thisrow{initialNumberOfVertices}, 
	y expr=(\thisrow{persistentZeroEdges} + \thisrow{persistentOneEdges}) / \thisrow{initialNumberOfEdges}*100
	] {#1};}

\providecommand{\addonepersistencyplot}{}	
\renewcommand{\addonepersistencyplot}[1]{\addplot+[
	only marks,
	mark=*,
	mark options={scale=0.25},
	color=blue
	] table[
	col sep=comma,
	x expr=\thisrow{initialNumberOfVertices}, 
	y expr=\thisrow{persistentOneEdges} / \thisrow{initialNumberOfEdges}*100
	] {#1};}

\providecommand{\addruntimeplot}{}	
\renewcommand{\addruntimeplot}[2][]{\addplot+[
	only marks,
	mark=*,
	mark options={scale=0.25, color=black},
	#1
	] table[
	col sep=comma,
	x expr=\thisrow{initialNumberOfVertices}, 
	y expr=\thisrow{runtimeNanoseconds} / 1e9
	] {#2};}

	\tikzsetnextfilename{fig-twitter}
	\begin{tikzpicture}		
		\begin{groupplot}[group style={group size= 14 by 2, horizontal sep=\hspacing, vertical sep=\vspacing}]

			\nextgroupplot[
			title={\Cref{proposition:edge-cut-condition}},
			ylabel={$\frac{\lvert \dom{\hat{x}}\rvert}{\lvert P_V\rvert} \left[\%\right]$},
			title style={align=center, font=\small, yshift=-1ex},
			xmin=0,
			xmax=250,
			xtick distance=100,
			ymin=0,
			ymax=100,
			width=\plotwidth cm,
			height=\plotheight cm
			]
			\addzeropersistencyplot{./data/twitter/edgeCut/persistencies.csv};
			
			\nextgroupplot[
			width=0.525*\plotwidth cm,
			height=\plotheight cm,
			enlarge x limits=0.5,
			ymin=0,
			ymax=100,
			xtick=data,
			hide axis,
			xshift=-\hspacing,
			ylabel shift=-1ex
			]
			
			\addplot[
			boxplot prepared from table={
						table=./data/twitter/edgeCut/boxplot-0.csv,
						lower whisker=lw,
						upper whisker=uw,
						lower quartile=lq,
						upper quartile=uq,
						median=med
				},
				boxplot prepared,
				boxplot/draw direction=y
			] coordinates {};
			
			\nextgroupplot[
			title={Cor.~\ref{corollary:directed-cut-condition}},
			title style={align=center, font=\small, yshift=-1ex},
			xmin=0,
			xmax=250,
			xtick distance=100,
			ymin=0,
			ymax=100,
			width=\plotwidth cm,
			height=\plotheight cm,
			yticklabels=\empty
			]
			\addzeropersistencyplot{./data/twitter/directedCut/persistencies.csv};
			
			\nextgroupplot[
			width=0.525*\plotwidth cm,
			height=\plotheight cm,
			enlarge x limits=0.5,
			ymin=0,
			ymax=100,
			xtick=data,
			hide axis,
			xshift=-\hspacing
			]
			\addplot[
			boxplot prepared from table={
				table=./data/twitter/directedCut/boxplot-0.csv,
				lower whisker=lw,
				upper whisker=uw,
				lower quartile=lq,
				upper quartile=uq,
				median=med
			},
			boxplot prepared,
			boxplot/draw direction=y
			] coordinates {};
			
			\nextgroupplot[
			title={Cuts+\Cref{proposition:edge-join-condition}},
			title style={align=center, font=\small, yshift=-1ex},
			xmin=0,
			xmax=250,
			xtick distance=100,
			ymin=0,
			ymax=100,
			width=\plotwidth cm,
			height=\plotheight cm,
			yticklabels=\empty
			]
			\addonepersistencyplot{./data/twitter/cuts+edgeJoin/persistencies.csv};
			
			\nextgroupplot[
			width=0.525*\plotwidth cm,
			height=\plotheight cm,
			enlarge x limits=0.5,
			ymin=0,
			ymax=100,
			xtick=data,
			hide axis,
			xshift=-\hspacing
			]
			\addplot[
			boxplot prepared from table={
				table=./data/twitter/cuts+edgeJoin/boxplot-1.csv,
				lower whisker=lw,
				upper whisker=uw,
				lower quartile=lq,
				upper quartile=uq,
				median=med
			},
			boxplot prepared,
			boxplot/draw direction=y
			] coordinates {};

			\nextgroupplot[
			title={Cuts+\Cref{proposition:subset-U-maximizer} \\ (\Cref{proposition:tractable-U})},
			title style={align=center, font=\small, yshift=-1ex},
			xmin=0,
			xmax=250,
			xtick distance=100,
			ymin=0,
			ymax=100,
			width=\plotwidth cm,
			height=\plotheight cm,
			yticklabels=\empty
			]
			\addonepersistencyplot{./data/twitter/cuts+preorderJoinExact/persistencies.csv};
			
			\nextgroupplot[
			width=0.525*\plotwidth cm,
			height=\plotheight cm,
			enlarge x limits=0.5,
			ymin=0,
			ymax=100,
			xtick=data,
			hide axis,
			xshift=-\hspacing
			]
			\addplot[
			boxplot prepared from table={
				table=./data/twitter/cuts+preorderJoinExact/boxplot-1.csv,
				lower whisker=lw,
				upper whisker=uw,
				lower quartile=lq,
				upper quartile=uq,
				median=med
			},
			boxplot prepared,
			boxplot/draw direction=y
			] coordinates {};

			\nextgroupplot[
			title={Cor.~\ref{corollary:boecker-condition-strong} \\ ($b = 0$)},
			title style={align=center, font=\small, yshift=-1ex},
			xmin=0,
			xmax=250,
			xtick distance=100,
			ymin=0,
			ymax=100,
			width=\plotwidth cm,
			height=\plotheight cm,
			yticklabels=\empty,
			]
			\addzeropersistencyplot{./data/twitter/bbkStrongCutConditions/persistencies.csv};
			
			\nextgroupplot[
			width=0.525*\plotwidth cm,
			height=\plotheight cm,
			enlarge x limits=0.5,
			ymin=0,
			ymax=100,
			xtick=data,
			hide axis,
			xshift=-\hspacing
			]
			\addplot[
			boxplot prepared from table={
				table=./data/twitter/bbkStrongCutConditions/boxplot-0.csv,
				lower whisker=lw,
				upper whisker=uw,
				lower quartile=lq,
				upper quartile=uq,
				median=med
			},
			boxplot prepared,
			boxplot/draw direction=y
			] coordinates {};

			\nextgroupplot[
			title={Cuts+Cor.~\ref{corollary:boecker-condition-strong} \\ ($b = 1$) },
			title style={align=center, font=\small, yshift=-1ex},
			xmin=0,
			xmax=250,
			xtick distance=100,
			ymin=0,
			ymax=100,
			width=\plotwidth cm,
			height=\plotheight cm,
			yticklabels=\empty
			]
			\addonepersistencyplot{./data/twitter/cuts+bbkStrongJoinConditions/persistencies.csv};
			
			\nextgroupplot[
			width=0.525*\plotwidth cm,
			height=\plotheight cm,
			enlarge x limits=0.5,
			ymin=0,
			ymax=100,
			xtick=data,
			hide axis,
			xshift=-\hspacing
			]
			\addplot[
			boxplot prepared from table={
				table=./data/twitter/cuts+bbkStrongJoinConditions/boxplot-1.csv,
				lower whisker=lw,
				upper whisker=uw,
				lower quartile=lq,
				upper quartile=uq,
				median=med
			},
			boxplot prepared,
			boxplot/draw direction=y
			] coordinates {};

			\nextgroupplot[
			title={Joint},
			title style={align=center, font=\small, yshift=-1ex},
			xmin=0,
			xmax=250,
			xtick distance=100,
			ymin=0,
			ymax=100,
			width=\plotwidth cm,
			height=\plotheight cm,
			yticklabels=\empty
			]
			\addpersistencyplot{./data/twitter/allConditions/persistencies.csv};

			\nextgroupplot[
			width=0.525*\plotwidth cm,
			height=\plotheight cm,
			enlarge x limits=0.5,
			ymin=0,
			ymax=100,
			xtick=data,
			hide axis,
			xshift=-\hspacing
			]
			\addplot[
			boxplot prepared from table={
				table=./data/twitter/allConditions/boxplot-all.csv,
				lower whisker=lw,
				upper whisker=uw,
				lower quartile=lq,
				upper quartile=uq,
				median=med
			},
			boxplot prepared,
			boxplot/draw direction=y
			] coordinates {};

			\nextgroupplot[
			ylabel={Runtime [s]},
			ymode=log,
			xmode=log,
			xlabel={$\vert V\vert$},
			xmin=0,
			xmax=250,
			ymin=1e-5,
			ymax=1e5,
			width=\plotwidth cm,
			height=\plotheight cm,
			legend pos=south east,
			legend style={
				font=\scriptsize,
				draw=none,  
				fill opacity=0.9
				},
			legend image post style={xscale=0.25},
			ylabel shift=-1ex
			]
			\addruntimeplot[forget plot]{./data/twitter/edgeCut/persistencies.csv}
			\addplot[
			domain=1:250,      
			samples=2,     
			color=black,
			thick,
			]
			{6.3308e-09 * x^4.4844};
			\addlegendentry{$O(\lvert V\rvert^{4.5})$}

			\nextgroupplot[
			width=0.5*\plotwidth cm,
			height=\plotheight cm,
			hide axis
			]

			\nextgroupplot[
			ymode=log,
			xmode=log,
			xlabel={$\vert V\vert$},
			xmin=0,
			xmax=250,
			ymin=1e-5,
			ymax=1e5,
			width=\plotwidth cm,
			height=\plotheight cm,
			yticklabels=\empty,
			legend pos=south east,
			legend style={
				font=\scriptsize,
				draw=none,  
				fill opacity=0.9
			},
			legend image post style={xscale=0.25}
			]
			\addruntimeplot[forget plot]{./data/twitter/directedCut/persistencies.csv}
			\addplot[
			domain=1:250,      
			samples=2,     
			color=black,
			thick,
			]
			{6.1000e-08 * x^2.4523};
			\addlegendentry{$O(\lvert V\rvert^{2.5})$}

			\nextgroupplot[
			width=0.5*\plotwidth cm,
			height=\plotheight cm,
			hide axis
			]
						
			\nextgroupplot[
			ymode=log,
			xmode=log,
			xlabel={$\vert V\vert$},
			xmin=0,
			xmax=250,
			ymin=1e-5,
			ymax=1e5,
			width=\plotwidth cm,
			height=\plotheight cm,
			yticklabel=\empty,
			legend pos=south east,
			legend style={
				font=\scriptsize,
				draw=none,  
				fill opacity=0.9
			},
			legend image post style={xscale=0.25}
			]
			\addruntimeplot[forget plot]{./data/twitter/cuts+edgeJoin/persistencies.csv}
			\addplot[
			domain=1:250,      
			samples=2,     
			color=black,
			thick,
			]
			{6.3792e-09 * x^5.0883};
			\addlegendentry{$O(\lvert V\rvert^{5.1})$}

			\nextgroupplot[
			width=0.5*\plotwidth cm,
			height=\plotheight cm,
			hide axis
			]
			
			\nextgroupplot[
			ymode=log,
			xmode=log,
			xlabel={$\vert V\vert$},
			xmin=0,
			xmax=250,
			ymin=1e-5,
			ymax=1e5,
			width=\plotwidth cm,
			height=\plotheight cm,
			yticklabel=\empty,
			legend pos=south east,
			legend style={
				font=\scriptsize,
				draw=none,  
				fill opacity=0.9
			},
			legend image post style={xscale=0.25}
			]
			\addruntimeplot[forget plot]{./data/twitter/cuts+preorderJoinExact/persistencies.csv}
			\addplot[
			domain=1:250,      
			samples=2,     
			color=black,
			thick,
			]
			{2.4255e-08 * x^4.4736};
			\addlegendentry{$O(\lvert V\rvert^{4.5})$}
			
			\nextgroupplot[
			width=0.5*\plotwidth cm,
			height=\plotheight cm,
			hide axis
			]

			\nextgroupplot[
			ymode=log,
			xmode=log,
			xlabel={$\vert V\vert$},
			xmin=0,
			xmax=250,
			ymin=1e-5,
			ymax=1e5,
			width=\plotwidth cm,
			height=\plotheight cm,
			yticklabels=\empty,
			legend pos=south east,
			legend style={
				font=\scriptsize,
				draw=none,  
				fill opacity=0.9
			},
			legend image post style={xscale=0.25}
			]
			\addruntimeplot[forget plot]{./data/twitter/bbkStrongCutConditions/persistencies.csv}
			\addplot[
			domain=1:250,      
			samples=2,     
			color=black,
			thick,
			]
			{5.5356e-08 * x^3.7423};
			\addlegendentry{$O(\lvert V\rvert^{3.7})$}

			\nextgroupplot[
			width=0.5*\plotwidth cm,
			height=\plotheight cm,
			hide axis
			]

			\nextgroupplot[
			ymode=log,
			xmode=log,
			xlabel={$\vert V\vert$},
			xmin=0,
			xmax=250,
			ymin=1e-5,
			ymax=1e5,
			width=\plotwidth cm,
			height=\plotheight cm,
			yticklabel=\empty,
			legend pos=south east,
			legend style={
				font=\scriptsize,
				draw=none,  
				fill opacity=0.9
			},
			legend image post style={xscale=0.25}
			]
			\addruntimeplot[forget plot]{./data/twitter/cuts+bbkStrongJoinConditions/persistencies.csv}
			\addplot[
			domain=1:250,      
			samples=2,     
			color=black,
			thick,
			]
			{2.9723e-08 * x^4.2394};
			\addlegendentry{$O(\lvert V\rvert^{4.2})$}

			\nextgroupplot[
			width=0.5*\plotwidth cm,
			height=\plotheight cm,
			hide axis
			]
			
			\nextgroupplot[
			ymode=log,
			xmode=log,
			xlabel={$\vert V\vert$},
			xmin=0,
			xmax=250,
			ymin=1e-5,
			ymax=1e5,
			width=\plotwidth cm,
			height=\plotheight cm,
			yticklabel=\empty,
			legend pos=south east,
			legend style={
				font=\scriptsize,
				draw=none,  
				fill opacity=0.9
			},
			legend image post style={xscale=0.25}
			]
			\addruntimeplot[forget plot]{./data/twitter/allConditions/persistencies.csv}
				\addplot[
			domain=1:250,      
			samples=2,     
			color=black,
			thick,
			]
			{1.7194e-08 * x^5.1053};
			\addlegendentry{$O(\lvert V\rvert^{5.1})$}

			\nextgroupplot[
			width=0.5*\plotwidth cm,
			height=\plotheight cm,
			hide axis
			]	
			
		\end{groupplot}
	\end{tikzpicture}

%% file: section-conclusion.tex
\vspace{-1ex}
\section{Conclusion}\label{section:conclusion}

We introduce new exact and heuristic algorithms that take as input an instance of the NP-hard preordering problem, terminate in polynomial time, and output a partial solution that is extendable to an optimal solution.
These algorithms decide partial optimality conditions that we prove, building on prior work on partially optimal preordering, clustering and ordering.
In experiments with synthetic data, we observe how the effectiveness of the algorithms transitions from fixing all variables, for easy instances, to not fixing any variables, for hard instances.
In experiments with published social networks, we observe that algorithms based on new conditions we propose are more effective in fixing variables to zero than algorithms that decide conditions we transfer from prior work, at the cost of longer but still polynomial runtime.

%% file: appendix-equivalence-classes.tex
\section{Merging of Equivalence Classes}
\label{appendix:merging-of-equivalence-classes}

If there exists a subset $U \subseteq V$ such that $P_U \subseteq \hat{x}^{-1}(1)$, then the elements of $U$ can be joined:

\begin{proposition}
	Let $V\neq \emptyset$ finite and $c\in \mathbb{R}^{P_V}$. 
	Moreover, let $\hat{x}\in \hat{X}_V$ and $U \subseteq V$ such that $P_U \subseteq \hat{x}^{-1}(1)$. 
	Let $V' \coloneqq \left(V\setminus U\right) \cup \{U\}$ and $c'\in \mathbb{R}^{P_{V'}}$ such that
	\begin{align}
		\forall pq\in P_{V\setminus U} &\colon \quad c'_{pq} = c_{pq}\enspace,\\
		\forall p \in V\setminus U &\colon \quad c'_{pU} = \sum_{q\in U}c_{pq}\enspace,\\
		\forall q \in V\setminus U &\colon \quad c'_{Uq} = \sum_{p\in U}c_{pq}\enspace.
	\end{align}
	Furthermore, define $\hat{x}'\in \hat{X}_{V'}$ such that 
	\begin{align}
		\forall pq\in P_{V\setminus U} \colon \quad \hat{x}'_{pq} &= \hat{x}'_{pq} \enspace,\\
		\forall q \in V \setminus U \colon \quad \hat{x}'_{Uq} &= \begin{cases}
			1 & \textnormal{if $\forall p\in U \colon \hat{x}_{pq} = 1$} \\
			0 & \textnormal{if $\forall p\in U \colon \hat{x}_{pq} = 0$} \\
			* & \textnormal{otherwise}
		\end{cases} \enspace,\\
		\forall p \in V \setminus U\colon \quad \hat{x}'_{pU} &= \begin{cases}
			1 & \textnormal{if $\forall q\in U \colon \hat{x}_{pq} = 1$} \\
			0 & \textnormal{if $\forall q\in U \colon \hat{x}_{pq} = 0$} \\
			* & \textnormal{otherwise}
		\end{cases}\enspace.
	\end{align}
	Then 
	\begin{equation}
		\max_{x\in X_V[\hat{x}]} \varphi_c(x) = \sum_{e\in P_{U}}c_{e} + \max_{x'\in X_{V'}[\hat{x}']} \varphi_{c'}(x')\enspace.
	\end{equation}
\end{proposition}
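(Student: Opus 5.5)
We prove the identity by a value-preserving bijection between $X_V[\hat{x}]$ and $X_{V'}[\hat{x}']$ (reading the first defining line of $\hat{x}'$ as $\hat{x}'_{pq} = \hat{x}_{pq}$).

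\textbf{Structure of completions.} Every $x \in X_V[\hat{x}]$ satisfies $x_{pq} = 1$ for all $pq \in P_U$. By transitivity, for any $u, u' \in U$ and $w \in V \setminus U$, $x_{uw} = 1$ implies $x_{u'w} = 1$ (via $x_{u'u} = 1$), and symmetrically $x_{wu} = 1$ implies $x_{wu'} = 1$. Hence the rows and columns of $x$ indexed by elements of $U$ are constant on $U$. The same argument applied to $\hat{x}$ itself, using maximal specificity (\Cref{proposition:maximal-specificity-in-p}: $\hat{x}^{-1}(1)$ is transitively closed and $\hat{x}^{-1}(0)$ is closed under prefixing and suffixing $1$-paths), shows $\hat{x}_{uw}$ is the same for all $u \in U$, and likewise $\hat{x}_{wu}$. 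So the case ``otherwise'' in the definition of $\hat{x}'$ never occurs, and $\hat{x}'_{Uq} = \hat{x}_{uq}$ for any $u \in U$. I expect this step to be the main point needing care, since it is what makes the constraints transfer exactly.

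\textbf{The bijection.} Define $\pi(x) = x'$ by $x'_{pq} = x_{pq}$ for $pq \in P_{V\setminus U}$, and $x'_{Uq} = x_{uq}$, $x'_{pU} = x_{pu}$ for an arbitrary $u \in U$; this is well defined by the previous step. Conversely, define $\rho(x') = x$ by copying on $P_{V\setminus U}$, setting $x_{uq} = x'_{Uq}$ and $x_{pu} = x'_{pU}$ for all $u \in U$, and $x_{uu'} = 1$ on $P_U$. To check $x \in X_V$, take a triple $pqr$ and replace each element of $U$ by the node $U$. If two of $p, q, r$ lie in $U$, the inequality follows from $P_U \equiv 1$ together with the equality of rows and columns over $U$. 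Otherwise it is exactly a triangle inequality of $x'$. Agreement with $\hat{x}$ follows from agreement of $x'$ with $\hat{x}'$ and the constancy of $\hat{x}$ on $U$-rows and columns. Symmetrically, $\pi(x)$ satisfies the triangle inequalities and agrees with $\hat{x}'$. The maps $\pi$ and $\rho$ are mutually inverse.

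\textbf{Values.} For $x \in X_V[\hat{x}]$,
\begin{align}
\varphi_c(x) &= \sum_{e\in P_U} c_e + \sum_{pq\in P_{V\setminus U}} c_{pq}x_{pq} + \sum_{q\notin U}\sum_{u\in U} c_{uq}x_{uq} + \sum_{p\notin U}\sum_{u\in U} c_{pu}x_{pu} \\
&= \sum_{e\in P_U} c_e + \varphi_{c'}(\pi(x)),
\end{align}
using the constancy of $x_{uq}$ in $u$ and the definition of $c'$. Taking maxima over the bijection yields the claim; both sides are finite because $X_V[\hat{x}] \neq \emptyset$ implies $X_{V'}[\hat{x}'] \neq \emptyset$.
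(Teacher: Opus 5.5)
Your proposal is correct and follows the paper's proof. Both collapse $U$ to a single node via the value-preserving bijection $x \mapsto x'$, which is well defined because $x \equiv 1$ on $P_U$ and transitivity make $U$-rows and $U$-columns constant, and both then regroup $\varphi_c$ using the definition of $c'$. You additionally supply what the paper leaves implicit: maximal specificity makes $\hat{x}$ constant on $U$-rows and $U$-columns, so $\hat{x}'_{Uq} = \hat{x}_{uq}$ (the ``otherwise'' case can occur, but only when all these entries are undecided); both maps are feasible and agree with $\hat{x}$ and $\hat{x}'$; and the two maps are mutually inverse.
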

\begin{proof}
	Let $x\in X_V[\hat{x}]$. We define a bijection $\varphi\colon X_V[\hat{x}] \to X_{V'}[\hat{x}']$ by the following identities:
	\begin{align}
		\forall pq\in P_{V\setminus U} &\colon\quad \varphi(x)_{pq} = x_{pq} \\
		\forall p\in V\setminus U &\colon\quad \varphi(x)_{pU} = x_{pq} && \textnormal{for any $q\in U$} \enspace, \\
		\forall q\in V\setminus U&\colon\quad \varphi(x)_{Uq} = x_{pq} && \textnormal{for any $p\in U$}\enspace.		
	\end{align}
	(1) $\varphi$ is well-defined, i.e., it follows that $x_{pq} = x_{pq'}$ for every $\{q, q'\}\in \tbinom U2$ and $p\in V\setminus U$, and that $x_{pq} = x_{p'q}$ for every $\{p, p'\}\in \tbinom U2$ and $q\in V\setminus U$, by transitivity and the fact that $x_{pq} = 1$ for all $pq\in P_U$.
	
	\noindent
	(2) Let $x' = \varphi(x)$. Now:
	\begin{align}
		\sum_{pq\in P_V}c_{pq} x_{pq}
		&=\sum_{pq\in P_U}c_{pq}x_{pq} + \sum_{p\in U}\sum_{q\in V\setminus U}c_{pq}x_{pq} 
		+ \sum_{p\in V\setminus U}\sum_{q\in U}c_{pq}x_{pq} + \sum_{pq\in P_{V\setminus U}}c_{pq}x_{pq} \\
		&= \sum_{pq\in P_U}c_{pq} + \sum_{q\in V\setminus U} \left(\sum_{p\in U}c_{pq}\right)x_{Uq} 
		+ \sum_{p\in V\setminus U}\left(\sum_{q\in U}c_{pq}\right)x_{pU} + \sum_{pq\in P_{V\setminus U}}c_{pq}x_{pq}\\
		&=\sum_{pq\in P_U}c_{pq} + \sum_{q\in V\setminus U} c'_{Uq} x_{Uq} 
		+ \sum_{p\in V\setminus U}c'_{pU} x_{pU} + \sum_{pq\in P_{V\setminus U}}c'_{pq}x_{pq} \\
		&= \sum_{pq\in P_U}c_{pq} + \sum_{pq\in P_{\left(V\setminus U\right)\cup \{U\}}}c'_{pq}x'_{pq} 
		= \sum_{pq\in P_U}c_{pq} + \sum_{pq\in P_{V'}}c'_{pq}x'_{pq}\enspace.
	\end{align}
	This concludes the proof.
\end{proof}%

%% file: appendix-energy-minimization.tex
\section{Applying~\Cref{proposition:edge-join-condition}}

\subsection{Energy Minimization Problem}\label{appendix:energy-minimization}

We cast the minimization of the rhs.~of~\eqref{eq:improvingness-join-generalization-map-1} as an energy minimization problem.
Without loss of generality, let $U'' = V\setminus \left(U \cup U'\right)$ for $U$ and $U'$ as in~\Cref{proposition:edge-join-condition}.
We consider labels $\mathcal{L}\coloneqq \{U, U', U''\}$, 
functions $H_p\colon \mathcal{L} \to \mathbb{R}$ for every $p\in V$
and functions $H_{pq} \colon\mathcal{L}\times \mathcal{L} \to \mathbb{R}$ for every $pq\in P_V$. 
Then, minimizing~\eqref{eq:improvingness-join-generalization-map-1} is equivalent to solving
\begin{equation}
	\label{eq:energy-minimization-optimization-problem}
	\min_{x\in \mathcal{L}^V} \sum_{pq\in P_V} H_{pq}(x_p, x_q) + \sum_{p\in V}H_p(x_p)\enspace,
\end{equation}
where $H_p(x_p) = 0$ for all $p\in V\setminus \{i, j\}$ and $x_p \in \mathcal{L}$,
$H_{pq}(x_p, x_q) = 0$ for all $pq\in P_V$ and $(x_p, x_q) \notin (U\times U')\cup (U'\times U) \cup (U''\times U)\cup (U'\times U'')$
and 
\begin{align}
	\label{eq:energy-minimization-costs-1}
	&H_{i}(U) = H_j(U') = 0\;,\; \\
	\label{eq:energy-minimization-costs-2}
	&H_i(U') = H_i(U'') = H_j(U) = H_j(U'') = \infty \\
	\label{eq:energy-minimization-costs-3}
	&\forall pq\in P_V \colon H_{pq}(U', U) = H_{pq}(U'', U) = H_{pq}(U', U'') 
	=\begin{cases}
		c_{pq}^+ & \textnormal{if $pq\notin \dom{\hat{x}}$}\\
		0 & \textnormal{if $\hat{x}_{pq}$}\\
		\infty & \textnormal{if $\hat{x}_{pq} = 1$}
	\end{cases} \\
	\label{eq:energy-minimization-costs-4}
	&\forall pq\in P_V \colon H_{pq}(U, U') 
	= \begin{cases}
		c_{pq}^- & \textnormal{if $\hat{x}_{pi} \neq 0 \neq \hat{x}_{jq}\land \hat{x}_{pq} \neq 1$}\\
		0 & \textnormal{if $\hat{x}_{pi} = 0 \lor \hat{x}_{jq} = 0$}\\
		\infty & \textnormal{if $\hat{x}_{pi} \neq 0 \neq \hat{x}_{jq}\land \hat{x}_{pq} = 0$}
	\end{cases}
\end{align}

%% file: appendix-alpha-beta-swap-moves.tex
\subsection{$\alpha\beta$-swaps}\label{appendix:alpha-beta-swaps}

Let $V\neq \emptyset$, $\mathcal{L}\neq\emptyset$, $H_{pq}\colon \mathcal{L}\times \mathcal{L}\to \mathbb{R}$ for every $pq\in P_V$ and $H_p\colon \mathcal{L}\to \mathbb{R}$ for every $p\in V$. 
For every $x\in \mathcal{L}^V$ we define the energy:
\begin{equation}
	\psi(x) = \sum_{pq\in P_V}H_{pq}(x_p, x_q) + \sum_{p \in V}H_p(x_p)\enspace.
\end{equation}
For simplicity, we assume that $H_{pq}(\gamma, \gamma) = 0$ for every $pq\in P_V$ and $\gamma\in \mathcal{L}$. 
This assumption is fulfilled for our value function~\eqref{eq:energy-minimization-costs-3}--\eqref{eq:energy-minimization-costs-4}.
For any $\{\alpha, \beta\} \in \tbinom{\mathcal{L}}{2}$, 
any
$x\in \mathcal{L}^V$ and any $U \subseteq x^{-1}(\alpha)\cup x^{-1}(\beta) \eqqcolon V_{\alpha\beta}$ we define
\begin{align}
	\operatorname{\alpha\beta-swap}_U(x)_p = \begin{cases}
		x_{p} & \textnormal{if $p\notin V_{\alpha\beta}$} \\
		\alpha & \textnormal{if $p\in U$}\\
		\beta & \textnormal{if $p\in V_{\alpha\beta}\setminus U$}
	\end{cases}\enspace.
\end{align}
Now:
\begin{align}
	&\quad \psi(\operatorname{\alpha\beta-swap}_U(x)) \\
	=&\sum_{p\in U}\Bigl( 
	\smashoperator[r]{\sum_{q\in V_{\alpha\beta}\setminus U}}H_{pq}(\alpha, \beta) 
	+ 
	\sum_{\mathclap{q\in V\setminus V_{\alpha\beta}}}H_{pq}(\alpha, x_q) + H_p(\alpha)
	\Bigr) \\
	&+ \sum_{p\in V_{\alpha\beta}\setminus U}
	\Bigl(
	\smashoperator[r]{\sum_{q\in U}}H_{pq}(\beta, \alpha) 
	+ 
	\sum_{\mathclap{q\in V\setminus V_{\alpha\beta}}}H_{pq}(\beta, x_q) + H_p(\beta)
	\Bigr) \\
	&+ \sum_{p\in V\setminus V_{\alpha\beta}}\Bigl(
	\sum_{q\in U}H_{pq}(x_p, \alpha)
	+ \sum_{q\in V_{\alpha\beta}\setminus U }H_{pq}(x_p, \beta)
	+ \sum_{\substack{q\in V\setminus V_{\alpha\beta} \\ q\neq p}}H_{pq}(x_p, x_q) + H_p(x_p)\Bigr)\\
	= &\sum_{p\in U}
	\Bigl(\smashoperator[r]{\sum_{q\in V\setminus V_{\alpha\beta}}} \left(H_{pq}(\alpha, x_q) + H_{qp}(x_q, \alpha)\right) + H_p(\alpha)\Bigr) \\
	&+ 
	\sum_{p\in V_{\alpha\beta}\setminus U}
	\Bigl(\smashoperator[r]{\sum_{q\in V\setminus V_{\alpha\beta}}}\left(H_{pq}(\beta, x_q) + H_{qp}(x_q, \beta)\right) + H_p(\beta)\Bigr)\\
	&+\sum_{p\in U}\sum_{q\in V_{\alpha\beta}\setminus U}\left(H_{pq}(\alpha, \beta)  + H_{qp}(\beta, \alpha)\right) 
	+ \sum_{p\in V\setminus V_{\alpha\beta}}\Bigl(\sum_{\substack{V\setminus V_{\alpha\beta} \\ q \neq p}}H_{pq}(x_p, x_q) + H_p(x_p) \Bigr)
\end{align}
Let 
\begin{align}
C \coloneqq \sum_{pq\in P_{V\setminus V_{\alpha\beta}}}H_{pq}(x_p, x_q) +  \sum_{p\in V\setminus V_{\alpha\beta}}H_p(x_p)
\end{align}
and $y\in \{0, 1\}^{V_{\alpha\beta}}$ such that $y_p = 1$ if and only if $p\in U$. 
Moreover, let $G \coloneqq (V', E)$ with $V' \coloneqq V_{\alpha\beta}\cup \{\alpha, \beta\}$ and  
\begin{equation}
	E\coloneqq \{\alpha p\mid p\in V_{\alpha\beta}\} \cup \{p\beta\mid p\in V_{\alpha\beta}\} \cup P_{V_{\alpha\beta}}\enspace,
\end{equation}
and the capacities $w\in \mathbb{R}^E$ such that
\begin{align}
	&\forall p\in V_{\alpha\beta}\colon w_{\alpha p} = \sum_{q\in V\setminus V_{\alpha\beta}}\Bigl(H_{pq}(\beta, x_q) + H_{qp}(x_q, \beta)\Bigr) + H_p(\beta) \enspace,\\
	&\forall p\in V_{\alpha\beta}\colon w_{p \beta} = \sum_{q\in V\setminus V_{\alpha\beta}}\Bigl(H_{pq}(\alpha, x_q) + H_{qp}(x_q, \alpha)\Bigr) + H_p(\alpha)\enspace,\\
	&\forall pq\in P_{V_{\alpha\beta}} \colon w_{pq} = H_{pq}(\alpha, \beta) + H_{qp}(\beta, \alpha)\enspace.
\end{align}
Then:
\begin{align}
	\label{eq:alpha-beta-cut-problem}
	&\psi(\operatorname{\alpha\beta-swap}_U(x)) = C + \sum_{pq\in P_{V_{\alpha\beta}}}w_{pq}y_p (1-y_q) 
	+ \sum_{p\in V_{\alpha\beta}}w_{\alpha p}(1-y_p) + \sum_{p \in V_{\alpha\beta}}w_{p\beta}y_p\enspace.
\end{align}

Thus, for a fixed $x\in \mathcal{L}^V$ the problem of finding the optimal $\alpha\beta$-swap reduces to a min-$\alpha\beta$-cut problem wrt.~$G$ and $w$. 
With our values \eqref{eq:energy-minimization-costs-1},~\eqref{eq:energy-minimization-costs-2},~\eqref{eq:energy-minimization-costs-3} and~\eqref{eq:energy-minimization-costs-4}, we have $w_{pq} \geq 0$ for all $pq\in P_{V_{\alpha\beta}}$, and $w_{\alpha p}, w_{p\beta} \geq 0$ for all $p\in V_{\alpha\beta}$. 
Thus, minimizing~\eqref{eq:alpha-beta-cut-problem} wrt.~$y\in \{0, 1\}^{V_{\alpha\beta}}$ can be done efficiently by means of a max-flow algorithm. 

%% file: appendix-google+.tex
\section{Experiments on Google+ Dataset}\label{appendix:google+}
\counterwithin{figure}{section}

For Google+,~\Cref{figure:gplus-250} shows the percentage of variables fixed and the runtime, both as a function of $\lvert V\rvert$, for applications of cut conditions~\Cref{proposition:edge-cut-condition},~\Cref{corollary:directed-cut-condition} and \Cref{corollary:boecker-condition-strong} ($b = 0$), and join conditions \Cref{proposition:edge-join-condition},~\Cref{proposition:subset-U-maximizer} (together with~\Cref{proposition:tractable-U}) and~\Cref{corollary:boecker-condition-strong} ($b = 1$), separately and jointly. Also here, all join conditions are applied on top of partial optimality from the cut conditions. We report these for all instances with $\vert V\rvert \leq 250$.

We make the following observations: \Cref{proposition:edge-cut-condition} consistently fixes more variables to zero than \Cref{corollary:directed-cut-condition}, but takes two orders more time to apply.
Both \Cref{proposition:edge-cut-condition,corollary:directed-cut-condition} are generally more effective in fixing variables to zero than~\Cref{corollary:boecker-condition-strong} ($b = 0$).
For these instances, fewer variables are fixed to one than for instances of the synthetic dataset, but each join condition fixes some variables. 

\input{figure-google+-250}

%% file: figure-google+-250.tex
\tikzmath{\plotheight=3.2;}
\tikzmath{\plotwidth=3.2;}

\setlength{\hspacing}{0.10cm}
\setlength{\vspacing}{0.50cm}

\providecommand{\addzeropersistencyplot}{}	
\renewcommand{\addzeropersistencyplot}[1]{\addplot+[
	only marks,
	mark=*,
	mark options={scale=0.25},
	color=blue
	] table[
	col sep=comma,
	x expr=\thisrow{initialNumberOfVertices}, 
	y expr=\thisrow{persistentZeroEdges} / \thisrow{initialNumberOfEdges}*100
	] {#1};}

\providecommand{\addpersistencyplot}{}	
\renewcommand{\addpersistencyplot}[1]{\addplot+[
	only marks,
	mark=*,
	mark options={scale=0.25},
	color=blue
	] table[
	col sep=comma,
	x expr=\thisrow{initialNumberOfVertices}, 
	y expr=(\thisrow{persistentZeroEdges} + \thisrow{persistentOneEdges}) / \thisrow{initialNumberOfEdges}*100
	] {#1};}

\providecommand{\addonepersistencyplot}{}	
\renewcommand{\addonepersistencyplot}[1]{\addplot+[
	only marks,
	mark=*,
	mark options={scale=0.25},
	color=blue
	] table[
	col sep=comma,
	x expr=\thisrow{initialNumberOfVertices}, 
	y expr=\thisrow{persistentOneEdges} / \thisrow{initialNumberOfEdges}*100
	] {#1};}

\providecommand{\addruntimeplot}{}	
\renewcommand{\addruntimeplot}[2][]{\addplot+[
	only marks,
	mark=*,
	mark options={scale=0.25, color=black},
	#1
	] table[
	col sep=comma,
	x expr=\thisrow{initialNumberOfVertices}, 
	y expr=\thisrow{runtimeNanoseconds} / 1e9
	] {#2};}

\begin{figure}[h]
	\centering 
\tikzsetnextfilename{fig-gplus-250}
\begin{tikzpicture}		
	\begin{groupplot}[group style={group size= 14 by 2, horizontal sep=\hspacing, vertical sep=\vspacing}]

		\nextgroupplot[
		title={\Cref{proposition:edge-cut-condition}},
		ylabel={$\frac{\lvert \dom{\hat{x}}\rvert}{\lvert P_V\rvert} \left[\%\right]$},
		title style={align=center, font=\small, yshift=-1ex},
		xmin=0,
		xmax=250,
		xtick distance=100,
		ymin=0,
		ymax=100,
		width=\plotwidth cm,
		height=\plotheight cm,
		ylabel shift=-1ex
		]
		\addzeropersistencyplot{./data/gplus-250/edgeCut/persistencies.csv};
		
		\nextgroupplot[
		width=0.525*\plotwidth cm,
		height=\plotheight cm,
		enlarge x limits=0.5,
		ymin=0,
		ymax=100,
		xtick=data,
		hide axis,
		xshift=-\hspacing
		]
		
		\addplot[
		boxplot prepared from table={
			table=./data/gplus-250/edgeCut/boxplot-0.csv,
			lower whisker=lw,
			upper whisker=uw,
			lower quartile=lq,
			upper quartile=uq,
			median=med
		},
		boxplot prepared,
		boxplot/draw direction=y
		] coordinates {};
		
		\nextgroupplot[
		title={\Cref{corollary:directed-cut-condition}},
		title style={align=center, font=\small, yshift=-1ex},
		xmin=0,
		xmax=250,
		xtick distance=100,
		ymin=0,
		ymax=100,
		width=\plotwidth cm,
		height=\plotheight cm,
		yticklabels=\empty
		]
		\addzeropersistencyplot{./data/gplus-250/directedCut/persistencies.csv};
		
		\nextgroupplot[
		width=0.525*\plotwidth cm,
		height=\plotheight cm,
		enlarge x limits=0.5,
		ymin=0,
		ymax=100,
		xtick=data,
		hide axis,
		xshift=-\hspacing
		]
		\addplot[
		boxplot prepared from table={
			table=./data/gplus-250/directedCut/boxplot-0.csv,
			lower whisker=lw,
			upper whisker=uw,
			lower quartile=lq,
			upper quartile=uq,
			median=med
		},
		boxplot prepared,
		boxplot/draw direction=y
		] coordinates {};
		
		\nextgroupplot[
		title={Cuts+\Cref{proposition:edge-join-condition}},
		title style={align=center, font=\small, yshift=-1ex},
		xmin=0,
		xmax=250,
		xtick distance=100,
		ymin=0,
		ymax=100,
		width=\plotwidth cm,
		height=\plotheight cm,
		yticklabels=\empty
		]
		\addonepersistencyplot{./data/gplus-250/cuts+edgeJoin/persistencies.csv};
		
		\nextgroupplot[
		width=0.525*\plotwidth cm,
		height=\plotheight cm,
		enlarge x limits=0.5,
		ymin=0,
		ymax=100,
		xtick=data,
		hide axis,
		xshift=-\hspacing
		]
		\addplot[
		boxplot prepared from table={
			table=./data/gplus-250/cuts+edgeJoin/boxplot-1.csv,
			lower whisker=lw,
			upper whisker=uw,
			lower quartile=lq,
			upper quartile=uq,
			median=med
		},
		boxplot prepared,
		boxplot/draw direction=y
		] coordinates {};

		\nextgroupplot[
		title={Cuts+\Cref{proposition:subset-U-maximizer} \\ (Prop.~\ref{proposition:tractable-U})},
		title style={align=center, font=\small, yshift=-1ex},
		xmin=0,
		xmax=250,
		xtick distance=100,
		ymin=0,
		ymax=100,
		width=\plotwidth cm,
		height=\plotheight cm,
		yticklabels=\empty
		]
		\addonepersistencyplot{./data/gplus-250/cuts+preorderJoinExact/persistencies.csv};
		
		\nextgroupplot[
		width=0.525*\plotwidth cm,
		height=\plotheight cm,
		enlarge x limits=0.5,
		ymin=0,
		ymax=100,
		xtick=data,
		hide axis,
		xshift=-\hspacing
		]
		\addplot[
		boxplot prepared from table={
			table=./data/gplus-250/cuts+preorderJoinExact/boxplot-1.csv,
			lower whisker=lw,
			upper whisker=uw,
			lower quartile=lq,
			upper quartile=uq,
			median=med
		},
		boxplot prepared,
		boxplot/draw direction=y
		] coordinates {};

		\nextgroupplot[
		title={\Cref{corollary:boecker-condition-strong} \\ ($b = 0$)},
		title style={align=center, font=\small, yshift=-1ex},
		xmin=0,
		xmax=250,
		xtick distance=100,
		ymin=0,
		ymax=100,
		width=\plotwidth cm,
		height=\plotheight cm,
		yticklabels=\empty,
		]
		\addzeropersistencyplot{./data/gplus-250/bbkStrongCutConditions/persistencies.csv};
		
		\nextgroupplot[
		width=0.525*\plotwidth cm,
		height=\plotheight cm,
		enlarge x limits=0.5,
		ymin=0,
		ymax=100,
		xtick=data,
		hide axis,
		xshift=-\hspacing
		]
		\addplot[
		boxplot prepared from table={
			table=./data/gplus-250/bbkStrongCutConditions/boxplot-0.csv,
			lower whisker=lw,
			upper whisker=uw,
			lower quartile=lq,
			upper quartile=uq,
			median=med
		},
		boxplot prepared,
		boxplot/draw direction=y
		] coordinates {};

		\nextgroupplot[
		title={Cuts+\Cref{corollary:boecker-condition-strong} \\ ($b = 1$) },
		title style={align=center, font=\small, yshift=-1ex},
		xmin=0,
		xmax=250,
		xtick distance=100,
		ymin=0,
		ymax=100,
		width=\plotwidth cm,
		height=\plotheight cm,
		yticklabels=\empty
		]
		\addonepersistencyplot{./data/gplus-250/cuts+bbkStrongJoinConditions/persistencies.csv};
		
		\nextgroupplot[
		width=0.525*\plotwidth cm,
		height=\plotheight cm,
		enlarge x limits=0.5,
		ymin=0,
		ymax=100,
		xtick=data,
		hide axis,
		xshift=-\hspacing
		]
		\addplot[
		boxplot prepared from table={
			table=./data/gplus-250/cuts+bbkStrongJoinConditions/boxplot-1.csv,
			lower whisker=lw,
			upper whisker=uw,
			lower quartile=lq,
			upper quartile=uq,
			median=med
		},
		boxplot prepared,
		boxplot/draw direction=y
		] coordinates {};

		\nextgroupplot[
		title={Joint},
		title style={align=center, font=\small, yshift=-1ex},
		xmin=0,
		xmax=250,
		xtick distance=100,
		ymin=0,
		ymax=100,
		width=\plotwidth cm,
		height=\plotheight cm,
		yticklabels=\empty
		]
		\addpersistencyplot{./data/gplus-250/allConditions/persistencies.csv};

		\nextgroupplot[
		width=0.525*\plotwidth cm,
		height=\plotheight cm,
		enlarge x limits=0.5,
		ymin=0,
		ymax=100,
		xtick=data,
		hide axis,
		xshift=-\hspacing
		]
		\addplot[
		boxplot prepared from table={
			table=./data/gplus-250/allConditions/boxplot-all.csv,
			lower whisker=lw,
			upper whisker=uw,
			lower quartile=lq,
			upper quartile=uq,
			median=med
		},
		boxplot prepared,
		boxplot/draw direction=y
		] coordinates {};

		\nextgroupplot[
		ylabel={Runtime [s]},
		ymode=log,
		xmode=log,
		xlabel={$\vert V\vert$},
		xmin=0,
		xmax=250,
		ymin=1e-5,
		ymax=1e5,
		width=\plotwidth cm,
		height=\plotheight cm,
		legend pos=south east,
		legend style={
			font=\scriptsize,
			draw=none,  
			fill opacity=0.9
		},
		legend image post style={xscale=0.25},
		ylabel shift=-1ex
		]
		\addruntimeplot[forget plot]{./data/gplus-250/edgeCut/persistencies.csv}
		\addplot[
		domain=1:250,      
		samples=2,     
		color=black,
		thick,
		]
		{1.2884e-09 * x^4.5988};
		\addlegendentry{$O(\lvert V\rvert^{4.6})$}

		\nextgroupplot[
		width=0.5*\plotwidth cm,
		height=\plotheight cm,
		hide axis
		]

		\nextgroupplot[
		ymode=log,
		xmode=log,
		xlabel={$\vert V\vert$},
		xmin=0,
		xmax=250,
		ymin=1e-5,
		ymax=1e5,
		width=\plotwidth cm,
		height=\plotheight cm,
		yticklabels=\empty,
		legend pos=south east,
		legend style={
			font=\scriptsize,
			draw=none,  
			fill opacity=0.9
		},
		legend image post style={xscale=0.25}
		]
		\addruntimeplot[forget plot]{./data/gplus-250/directedCut/persistencies.csv}
		\addplot[
		domain=1:250,      
		samples=2,     
		color=black,
		thick,
		]
		{1.5735e-07 * x^2.1658};
		\addlegendentry{$O(\lvert V\rvert^{2.2})$}

		\nextgroupplot[
		width=0.5*\plotwidth cm,
		height=\plotheight cm,
		hide axis
		]

		\nextgroupplot[
		ymode=log,
		xmode=log,
		xlabel={$\vert V\vert$},
		xmin=0,
		xmax=250,
		ymin=1e-5,
		ymax=1e5,
		width=\plotwidth cm,
		height=\plotheight cm,
		yticklabel=\empty,
		legend pos=south east,
		legend style={
			font=\scriptsize,
			draw=none,  
			fill opacity=0.9
		},
		legend image post style={xscale=0.25}
		]
		\addruntimeplot[forget plot]{./data/gplus-250/cuts+edgeJoin/persistencies.csv}
		\addplot[
		domain=1:250,      
		samples=2,     
		color=black,
		thick,
		]
		{9.1433e-09 * x^4.7814};
		\addlegendentry{$O(\lvert V\rvert^{4.8})$}

		\nextgroupplot[
		width=0.5*\plotwidth cm,
		height=\plotheight cm,
		hide axis
		]
		
		\nextgroupplot[
		ymode=log,
		xmode=log,
		xlabel={$\vert V\vert$},
		xmin=0,
		xmax=250,
		ymin=1e-5,
		ymax=1e5,
		width=\plotwidth cm,
		height=\plotheight cm,
		yticklabel=\empty,
		legend pos=south east,
		legend style={
			font=\scriptsize,
			draw=none,  
			fill opacity=0.9
		},
		legend image post style={xscale=0.25}
		]
		\addruntimeplot[forget plot]{./data/gplus-250/cuts+preorderJoinExact/persistencies.csv}
		\addplot[
		domain=1:250,      
		samples=2,     
		color=black,
		thick,
		]
		{9.0347e-08 * x^4.2460};
		\addlegendentry{$O(\lvert V\rvert^{4.2})$}
		
		\nextgroupplot[
		width=0.5*\plotwidth cm,
		height=\plotheight cm,
		hide axis
		]

		\nextgroupplot[
		ymode=log,
		xmode=log,
		xlabel={$\vert V\vert$},
		xmin=0,
		xmax=250,
		ymin=1e-5,
		ymax=1e5,
		width=\plotwidth cm,
		height=\plotheight cm,
		yticklabels=\empty,
		legend pos=south east,
		legend style={
			font=\scriptsize,
			draw=none,  
			fill opacity=0.9
		},
		legend image post style={xscale=0.25}
		]
		\addruntimeplot[forget plot]{./data/gplus-250/bbkStrongCutConditions/persistencies.csv}
		\addplot[
		domain=1:250,      
		samples=2,     
		color=black,
		thick,
		]
		{7.3766e-06 * x^2.5930};
		\addlegendentry{$O(\lvert V\rvert^{2.6})$}

		\nextgroupplot[
		width=0.5*\plotwidth cm,
		height=\plotheight cm,
		hide axis
		]

		\nextgroupplot[
		ymode=log,
		xmode=log,
		xlabel={$\vert V\vert$},
		xmin=0,
		xmax=250,
		ymin=1e-5,
		ymax=1e5,
		width=\plotwidth cm,
		height=\plotheight cm,
		yticklabel=\empty,
		legend pos=south east,
		legend style={
			font=\scriptsize,
			draw=none,  
			fill opacity=0.9
		},
		legend image post style={xscale=0.25}
		]
		\addruntimeplot[forget plot]{./data/gplus-250/cuts+bbkStrongJoinConditions/persistencies.csv}
		\addplot[
		domain=1:250,      
		samples=2,     
		color=black,
		thick,
		]
		{2.4451e-07 * x^3.5667};
		\addlegendentry{$O(\lvert V\rvert^{3.6})$}

		\nextgroupplot[
		width=0.5*\plotwidth cm,
		height=\plotheight cm,
		hide axis
		]
		
		\nextgroupplot[
		ymode=log,
		xmode=log,
		xlabel={$\vert V\vert$},
		xmin=0,
		xmax=250,
		ymin=1e-5,
		ymax=1e5,
		width=\plotwidth cm,
		height=\plotheight cm,
		yticklabel=\empty,
		legend pos=south east,
		legend style={
			font=\scriptsize,
			draw=none,  
			fill opacity=0.9
		},
		legend image post style={xscale=0.25}
		]
		\addruntimeplot[forget plot]{./data/gplus-250/allConditions/persistencies.csv}
		\addplot[
		domain=1:250, 
		samples=2, 
		color=black,
		thick,
		]
		{1.7194e-08 * x^5.1053};
		\addlegendentry{$O(\lvert V\rvert^{5.1})$}

		\nextgroupplot[
		width=0.5*\plotwidth cm,
		height=\plotheight cm,
		hide axis
		]	
		
	\end{groupplot}
\end{tikzpicture}
\caption{
	Shown above are the percentage of fixed variables (Row~1) as well as the runtimes (Row~2) 
	for applying \Cref{proposition:edge-cut-condition},~\Cref{corollary:directed-cut-condition},~\Cref{proposition:edge-join-condition}, ~\Cref{proposition:subset-U-maximizer} (together with~\Cref{proposition:tractable-U}) and~\Cref{corollary:boecker-condition-strong} for $b\in \{0, 1\}$ individually as well as all conditions jointly (Column~7)
	to instances of the Google+ dataset for $\lvert V\rvert \leq 250$.
}
\label{figure:gplus-250}
\end{figure}

%% file: appendix-boecker-condition.tex
\section{Experiments for Applying~\Cref{corollary:boecker-condition-strong}}
\label{appendix:boecker}

In~\Cref{figure:joint-vs-boecker-synthetic-alphas,figure:joint-vs-boecker-synthetic-ns,figure:joint-vs-boecker-twitter} we report the joint application of condition~\Cref{corollary:boecker-condition-strong} for $b \in \{0, 1\}$ due to~\cite{boecker2009}. On the synthetic dataset the results are comparable to those reported in~\Cref{figure:synthetic-alphas,figure:synthetic-ns}. However, there is a slight improvement of the fraction of variables fixed, esp. for $\alpha \in \{0.65, 0.70, 0.75\}$ and as a function of $\lvert V\rvert$ when applying our additional conditions. On the Twitter dataset the application of our additional conditions fix significantly more variables.

\begin{figure}
	\begin{minipage}[t]{0.475\textwidth}
		\centering
		\input{figure-synthetic-alphas-boecker}
		\caption{
			Shown above are the percentage of fixed variables (Row~1) and runtimes (Row~2) 
			for applying~\Cref{corollary:boecker-condition-strong} ($b \in \{0, 1\}$)
			to instances of the synthetic dataset with respect to $\lvert V\rvert = 40$ and $p_E \in \{0.25, 0.50, 0.75\}$.}
		\label{figure:joint-vs-boecker-synthetic-alphas}
	\end{minipage}%
	\hfill
	\begin{minipage}[t]{0.475\textwidth}
		\centering
		\input{figure-synthetic-ns-boecker}
		\caption{
			Shown above are the percentage of fixed variables (Row~1) and runtimes (Row~2) 
			for applying \Cref{corollary:boecker-condition-strong}
			to instances of the synthetic dataset with $\alpha\in \{0.25, 0.65, 0.70, 0.75\}$ and $p_E \in \{0.25, 0.50, 0.75\}$.}
		\label{figure:joint-vs-boecker-synthetic-ns}
	\end{minipage}%
\end{figure}
\begin{figure}
	\centering
	\input{figure-twitter-joint-vs-boecker}
	\caption{Above, we report the percentage of fixed variables (Row~1) as well as the runtime (Row~2) for applying condition~\Cref{corollary:boecker-condition-strong} ($b\in \{0, 1\}$) and all conditions jointly to instances of the Twitter dataset.}
	\label{figure:joint-vs-boecker-twitter}
\end{figure}

%% file: figure-synthetic-alphas-boecker.tex
\tikzmath{\plotheight=3.2;}
\tikzmath{\plotwidth=3.2;}

\setlength{\hspacing}{0.25cm}
\setlength{\vspacing}{0.50cm}

\providecommand{\addvariableplot}{}	
\renewcommand{\addvariableplot}[2]{\addplot+[
	only marks,
	mark=*,
	mark options={scale=0.25},
	color=#2
	] table[
	col sep=comma,
	x expr=\thisrow{alpha}, 
	y expr=\thisrow{medianEliminatedVariables}*100,
	y error minus expr=(\thisrow{medianEliminatedVariables} - \thisrow{q25EliminatedVariables})*100,
	y error plus expr=(\thisrow{q75EliminatedVariables} - \thisrow{medianEliminatedVariables})*100
	] {#1};}

\providecommand{\addruntimeplot}{}	
\renewcommand{\addruntimeplot}[2]{\addplot+[
	only marks,
	mark=*,
	mark options={scale=0.25},
	color=#2
	] table[
	col sep=comma,
	x expr=\thisrow{alpha}, 
	y expr=\thisrow{medianDuration} / 1e9,
	y error minus expr=(\thisrow{medianDuration} - \thisrow{q25Duration}) / 1e9,
	y error plus expr=(\thisrow{q75Duration} - \thisrow{medianDuration}) / 1e9
	] {#1};}
	
	\tikzsetnextfilename{figure-boecker-synthetic-alphas}
	\begin{tikzpicture}
		\begin{groupplot}[group style={group size=3 by 2, horizontal sep=\hspacing, vertical sep=\vspacing}]

			\nextgroupplot[
			title={$p_E = 0.25$},
			title style={align=center, font=\small, yshift=-1ex},
			xmin=0,
			xmax=1,
			xtick distance=0.5,
			ymin=0,
			ymax=100,
			width=\plotwidth cm,
			height=\plotheight cm,
			ylabel={$\frac{\lvert \dom{\hat{x}}\rvert}{\lvert P_V\rvert}\left[\%\right]$},
			legend to name=TCLAlphasLegend,
			legend style={legend columns=4, font=\scriptsize},
			ylabel style={yshift=-6pt},
			legend pos=south west,
			xticklabels=\empty
			]
			\addvariableplot{./data/RandomTransitivelyClosed/bbkConditions/alphas/0.25/stats_20.csv}{blue}
			\addlegendentry{$\lvert V\rvert = 20$}
			\addvariableplot{./data/RandomTransitivelyClosed/bbkConditions/alphas/0.25/stats_30.csv}{\darkgreen}
			\addlegendentry{$\lvert V\rvert = 30$}
			\addvariableplot{./data/RandomTransitivelyClosed/bbkConditions/alphas/0.25/stats_40.csv}{red}
			\addlegendentry{$\lvert V\rvert = 40$}
			\addvariableplot{./data/RandomTransitivelyClosed/bbkConditions/alphas/0.25/stats_50.csv}{magenta}
			\addlegendentry{$\lvert V\rvert = 50$}

			\nextgroupplot[
			title style={align=center, font=\small, yshift=-1ex},
			xmin=0,
			xmax=1,
			ymin=0,
			ymax=100,
			width=\plotwidth cm,
			height=\plotheight cm,
			title={$p_E = 0.50$},
			yticklabels=\empty,
			xtick distance=0.5,
			xticklabels=\empty
			]
			\addvariableplot{./data/RandomTransitivelyClosed/bbkConditions/alphas/0.50/stats_20.csv}{blue}
			\addvariableplot{./data/RandomTransitivelyClosed/bbkConditions/alphas/0.50/stats_30.csv}{\darkgreen}
			\addvariableplot{./data/RandomTransitivelyClosed/bbkConditions/alphas/0.50/stats_40.csv}{red}
			\addvariableplot{./data/RandomTransitivelyClosed/bbkConditions/alphas/0.50/stats_50.csv}{magenta}
			
			\nextgroupplot[
			title={$p_E = 0.75$},
			title style={align=center, font=\small, yshift=-1ex},
			xmin=0,
			xmax=1,
			ymin=0,
			ymax=100,
			width=\plotwidth cm,
			height=\plotheight cm,
			yticklabels=\empty,
			xtick distance=0.5,
			xticklabels=\empty
			]
			\addvariableplot{./data/RandomTransitivelyClosed/bbkConditions/alphas/0.75/stats_20.csv}{blue}
			\addvariableplot{./data/RandomTransitivelyClosed/bbkConditions/alphas/0.75/stats_30.csv}{\darkgreen}
			\addvariableplot{./data/RandomTransitivelyClosed/bbkConditions/alphas/0.75/stats_40.csv}{red}
			\addvariableplot{./data/RandomTransitivelyClosed/bbkConditions/alphas/0.75/stats_50.csv}{magenta}

			\nextgroupplot[
			ylabel={Runtimes [s]},
			title style={align=center},
			xmin=0,
			xmax=1,
			ymin=1e-3,
			ymax=1e2,
			ymode=log,
			width=\plotwidth cm,
			height=\plotheight cm,
			ylabel style={yshift=-6pt},
			xlabel=$\alpha$,
			xtick distance=0.5,
			xlabel style={yshift=4pt}
			]
			
			\addruntimeplot{./data/RandomTransitivelyClosed/bbkConditions/alphas/0.25/stats_20.csv}{blue}
			\addruntimeplot{./data/RandomTransitivelyClosed/bbkConditions/alphas/0.25/stats_30.csv}{\darkgreen}
			\addruntimeplot{./data/RandomTransitivelyClosed/bbkConditions/alphas/0.25/stats_40.csv}{red}
			\addruntimeplot{./data/RandomTransitivelyClosed/bbkConditions/alphas/0.25/stats_50.csv}{magenta}
			
			\nextgroupplot[
			title style={align=center},
			xmin=0,
			xmax=1,
			ymode=log,
			ymin=1e-3,
			ymax=1e2,
			width=\plotwidth cm,
			height=\plotheight cm,
			yticklabels=\empty,
			xlabel=$\alpha$,
			xtick distance=0.5,
			xlabel style={yshift=4pt}
			]
			
			\addruntimeplot{./data/RandomTransitivelyClosed/bbkConditions/alphas/0.50/stats_20.csv}{blue}
			\addruntimeplot{./data/RandomTransitivelyClosed/bbkConditions/alphas/0.50/stats_30.csv}{\darkgreen}
			\addruntimeplot{./data/RandomTransitivelyClosed/bbkConditions/alphas/0.50/stats_40.csv}{red}
			\addruntimeplot{./data/RandomTransitivelyClosed/bbkConditions/alphas/0.50/stats_50.csv}{magenta}

			\nextgroupplot[
			ymode=log,
			title style={align=center},
			xmin=0,
			xmax=1,
			ymin=1e-3,
			ymax=1e2,
			width=\plotwidth cm,
			height=\plotheight cm,
			yticklabels=\empty,
			xlabel=$\alpha$,
			xtick distance=0.5,
			xlabel style={yshift=4pt}
			]
			
			\addruntimeplot{./data/RandomTransitivelyClosed/bbkConditions/alphas/0.75/stats_20.csv}{blue}
			\addruntimeplot{./data/RandomTransitivelyClosed/bbkConditions/alphas/0.75/stats_30.csv}{\darkgreen}
			\addruntimeplot{./data/RandomTransitivelyClosed/bbkConditions/alphas/0.75/stats_40.csv}{red}
			\addruntimeplot{./data/RandomTransitivelyClosed/bbkConditions/alphas/0.75/stats_50.csv}{magenta}

		\end{groupplot}
		
		\path (group c1r2.south west) -- node[below=0.75cm]{\ref{TCLAlphasLegend}} (group c3r2.south east);
	\end{tikzpicture}

%% file: figure-synthetic-ns-boecker.tex
\tikzmath{\plotheight=3.2;}
\tikzmath{\plotwidth=3.2;}

\setlength{\hspacing}{0.20cm}
\setlength{\vspacing}{0.50cm}

\providecommand{\addvariableplot}{}	
\renewcommand{\addvariableplot}[2]{\addplot+[
	only marks,
	mark=*,
	mark options={scale=0.25},
	color=#2
	] table[
	col sep=comma,
	x expr=\thisrow{numberOfPoints}, 
	y expr=\thisrow{medianEliminatedVariables}*100,
	y error minus expr=(\thisrow{medianEliminatedVariables} - \thisrow{q25EliminatedVariables})*100,
	y error plus expr=(\thisrow{q75EliminatedVariables} - \thisrow{medianEliminatedVariables})*100
	] {#1};}

\providecommand{\addruntimeplot}{}	
\renewcommand{\addruntimeplot}[3][]{\addplot+[
	only marks,
	mark=*,
	mark options={scale=0.25},
	color=#3,
	#1
	] table[
	col sep=comma,
	x expr=\thisrow{numberOfPoints}, 
	y expr=\thisrow{medianDuration} / 1e9,
	y error minus expr=(\thisrow{medianDuration} - \thisrow{q25Duration}) / 1e9,
	y error plus expr=(\thisrow{q75Duration} - \thisrow{medianDuration}) / 1e9
	] {#2};}
	
	\tikzsetnextfilename{figure-boecker-synthetic-ns}
	\begin{tikzpicture}		
		\begin{groupplot}[group style={group size=3 by 2, horizontal sep=\hspacing, vertical sep=\vspacing}]

			\nextgroupplot[
			title={$p_E = 0.25$},
			title style={align=center, font=\small, yshift=-1ex},
			xmin=0,
			xmax=110,
			xtick distance=50,
			ymin=0,
			ymax=100,
			width=\plotwidth cm,
			height=\plotheight cm,
			ylabel={$\frac{\lvert \dom{\hat{x}}\rvert}{\lvert P_V\rvert}\left[\%\right]$},
			legend to name=TCLNumberOfVerticesLegend,
			legend style={legend columns=4, font=\scriptsize},
			ylabel style={yshift=-6pt}
			]
			\addvariableplot{./data/RandomTransitivelyClosed/bbkConditions/ns/0.25/stats_0.25.csv}{blue}
			\addlegendentry{$\alpha = 0.25$}
			\addvariableplot{./data/RandomTransitivelyClosed/bbkConditions/ns/0.25/stats_0.65.csv}{\darkgreen}
			\addlegendentry{$\alpha = 0.65$}
			\addvariableplot{./data/RandomTransitivelyClosed/bbkConditions/ns/0.25/stats_0.70.csv}{red}
			\addlegendentry{$\alpha = 0.70$}
			\addvariableplot{./data/RandomTransitivelyClosed/bbkConditions/ns/0.25/stats_0.75.csv}{magenta}
			\addlegendentry{$\alpha = 0.75$}

			\nextgroupplot[
			title style={align=center, font=\small, yshift=-1ex},
			xmin=0,
			xmax=110,
			ymin=0,
			ymax=100,
			width=\plotwidth cm,
			height=\plotheight cm,
			title={$p_E = 0.50$},
			yticklabels=\empty,
			xtick distance=50
			]
			\addvariableplot{./data/RandomTransitivelyClosed/bbkConditions/ns/0.50/stats_0.25.csv}{blue}
			\addvariableplot{./data/RandomTransitivelyClosed/bbkConditions/ns/0.50/stats_0.65.csv}{\darkgreen}
			\addvariableplot{./data/RandomTransitivelyClosed/bbkConditions/ns/0.50/stats_0.70.csv}{red}
			\addvariableplot{./data/RandomTransitivelyClosed/bbkConditions/ns/0.50/stats_0.75.csv}{magenta}
			
			\nextgroupplot[
			title={$p_E = 0.75$},
			title style={align=center, font=\small, yshift=-1ex},
			xmin=0,
			xmax=110,
			ymin=0,
			ymax=100,
			width=\plotwidth cm,
			height=\plotheight cm,
			yticklabels=\empty,
			xtick distance=50
			]
			\addvariableplot{./data/RandomTransitivelyClosed/bbkConditions/ns/0.75/stats_0.25.csv}{blue}
			\addvariableplot{./data/RandomTransitivelyClosed/bbkConditions/ns/0.75/stats_0.65.csv}{\darkgreen}
			\addvariableplot{./data/RandomTransitivelyClosed/bbkConditions/ns/0.75/stats_0.70.csv}{red}
			\addvariableplot{./data/RandomTransitivelyClosed/bbkConditions/ns/0.75/stats_0.75.csv}{magenta}

			\nextgroupplot[
			xmode=log,
			ymode=log,
			ylabel={Runtimes [s]},
			title style={align=center},
			xmin=0,
			xmax=110,
			ymin=1e-5,
			ymax=1000,
			width=\plotwidth cm,
			height=\plotheight cm,
			ylabel style={yshift=-6pt},
			xlabel=$\vert V\vert$,
			xlabel style={yshift=4pt},
			legend pos=south east,
			legend style={
				font=\scriptsize,
				draw=none,  
				fill opacity=0.9
			},
			legend image post style={xscale=0.25}
			]
			
			\addruntimeplot[forget plot]{./data/RandomTransitivelyClosed/bbkConditions/ns/0.25/stats_0.25.csv}{blue}
			\addruntimeplot[forget plot]{./data/RandomTransitivelyClosed/bbkConditions/ns/0.25/stats_0.65.csv}{\darkgreen}
			\addruntimeplot[forget plot]{./data/RandomTransitivelyClosed/bbkConditions/ns/0.25/stats_0.70.csv}{red}
			\addruntimeplot[forget plot]{./data/RandomTransitivelyClosed/bbkConditions/ns/0.25/stats_0.75.csv}{magenta}
			\addplot[
			domain=1:100,      
			samples=2,     
			color=magenta,
			thick,
			]
			{3.2345e-08 * x^4.9876};
			\addlegendentry{$O(\lvert V\rvert^{5.0})$}

			\nextgroupplot[
			xmode=log,
			ymode=log,
			title style={align=center},
			xmin=0,
			xmax=110,
			ymin=1e-5,
			ymax=1000,
			width=\plotwidth cm,
			height=\plotheight cm,
			yticklabels=\empty,
			xlabel=$\vert V\vert$,
			xlabel style={yshift=4pt},
			legend pos=south east,
			legend style={
				font=\scriptsize,
				draw=none,  
				fill opacity=0.9
			},
			legend image post style={xscale=0.25}
			]
			
			\addruntimeplot[forget plot]{./data/RandomTransitivelyClosed/bbkConditions/ns/0.50/stats_0.25.csv}{blue}
			\addruntimeplot[forget plot]{./data/RandomTransitivelyClosed/bbkConditions/ns/0.50/stats_0.65.csv}{\darkgreen}
			\addruntimeplot[forget plot]{./data/RandomTransitivelyClosed/bbkConditions/ns/0.50/stats_0.70.csv}{red}
			\addruntimeplot[forget plot]{./data/RandomTransitivelyClosed/bbkConditions/ns/0.50/stats_0.75.csv}{magenta}
			\addplot[
			domain=1:100,      
			samples=2,     
			color=magenta,
			thick,
			]
			{2.8290e-08 * x^5.0002};
			\addlegendentry{$O(\lvert V\rvert^{5.0})$}

			\nextgroupplot[
			xmode=log,
			ymode=log,
			title style={align=center},
			xmin=0,
			xmax=110,
			ymin=1e-5,
			ymax=1000,
			width=\plotwidth cm,
			height=\plotheight cm,
			yticklabels=\empty,
			xlabel=$\vert V\vert$,
			xlabel style={yshift=4pt},
			legend pos=south east,
			legend style={
				font=\scriptsize,
				draw=none,  
				fill opacity=0.9
			},
			legend image post style={xscale=0.25}
			]
			
			\addruntimeplot[forget plot]{./data/RandomTransitivelyClosed/bbkConditions/ns/0.75/stats_0.25.csv}{blue}
			\addruntimeplot[forget plot]{./data/RandomTransitivelyClosed/bbkConditions/ns/0.75/stats_0.65.csv}{\darkgreen}
			\addruntimeplot[forget plot]{./data/RandomTransitivelyClosed/bbkConditions/ns/0.75/stats_0.70.csv}{red}
			\addruntimeplot[forget plot]{./data/RandomTransitivelyClosed/bbkConditions/ns/0.75/stats_0.75.csv}{magenta}
			\addplot[
			domain=1:100,      
			samples=2,     
			color=blue,
			thick,
			]
			{2.2788e-08 * x^4.9777};
			\addlegendentry{$O(\lvert V\rvert^{5.0})$}
			
		\end{groupplot}
		
		\path (group c1r2.south west) -- node[below=0.75cm]{\ref{TCLNumberOfVerticesLegend}} (group c3r2.south east);
	\end{tikzpicture}

%% file: figure-twitter-joint-vs-boecker.tex
\tikzmath{\plotheight=3.45;}
\tikzmath{\plotwidth=3.45;}

\setlength{\hspacing}{0.25cm}
\setlength{\vspacing}{0.50cm}

\providecommand{\addzeropersistencyplot}{}	
\renewcommand{\addzeropersistencyplot}[1]{\addplot+[
	only marks,
	mark=*,
	mark options={scale=0.25},
	color=blue
	] table[
	col sep=comma,
	x expr=\thisrow{initialNumberOfVertices}, 
	y expr=\thisrow{persistentZeroEdges} / \thisrow{initialNumberOfEdges}*100
	] {#1};}

\providecommand{\addpersistencyplot}{}	
\renewcommand{\addpersistencyplot}[1]{\addplot+[
	only marks,
	mark=*,
	mark options={scale=0.25},
	color=blue
	] table[
	col sep=comma,
	x expr=\thisrow{initialNumberOfVertices}, 
	y expr=(\thisrow{persistentZeroEdges} + \thisrow{persistentOneEdges}) / \thisrow{initialNumberOfEdges}*100
	] {#1};}

\providecommand{\addonepersistencyplot}{}	
\renewcommand{\addonepersistencyplot}[1]{\addplot+[
	only marks,
	mark=*,
	mark options={scale=0.25},
	color=blue
	] table[
	col sep=comma,
	x expr=\thisrow{initialNumberOfVertices}, 
	y expr=\thisrow{persistentOneEdges} / \thisrow{initialNumberOfEdges}*100
	] {#1};}

\providecommand{\addruntimeplot}{}	
\renewcommand{\addruntimeplot}[2][]{\addplot+[
	only marks,
	mark=*,
	mark options={scale=0.25, color=black},
	#1
	] table[
	col sep=comma,
	x expr=\thisrow{initialNumberOfVertices}, 
	y expr=\thisrow{runtimeNanoseconds} / 1e9
	] {#2};}
	
	\tikzsetnextfilename{figure-boecker-twitter}
	\begin{tikzpicture}		
		\begin{groupplot}[group style={group size= 4 by 2, horizontal sep=\hspacing, vertical sep=\vspacing}]

			\nextgroupplot[
			title={Cor.~\ref{corollary:boecker-condition-strong}},
			ylabel={$\frac{\lvert \dom{\hat{x}}\rvert}{\lvert P_V\rvert} \left[\%\right]$},
			title style={align=center, font=\small, yshift=-1ex},
			xmin=0,
			xmax=250,
			xtick distance=100,
			ymin=0,
			ymax=100,
			width=\plotwidth cm,
			height=\plotheight cm
			]
			\addpersistencyplot{./data/twitter/bbkConditions/persistencies.csv};
			
			\nextgroupplot[
			width=0.5*\plotwidth cm,
			height=\plotheight cm,
			enlarge x limits=0.5,
			ymin=0,
			ymax=100,
			xtick=data,
			hide axis,
			xshift=-\hspacing
			]
			
			\addplot[
			boxplot prepared from table={
				table=./data/twitter/bbkConditions/boxplot-all.csv,
				lower whisker=lw,
				upper whisker=uw,
				lower quartile=lq,
				upper quartile=uq,
				median=med
			},
			boxplot prepared,
			boxplot/draw direction=y
			] coordinates {};

			\nextgroupplot[
			title={Joint},
			title style={align=center, font=\small, yshift=-1ex},
			xmin=0,
			xmax=250,
			xtick distance=100,
			ymin=0,
			ymax=100,
			width=\plotwidth cm,
			height=\plotheight cm,
			yticklabels=\empty
			]
			\addpersistencyplot{./data/twitter/allConditions/persistencies.csv};

			\nextgroupplot[
			width=0.5*\plotwidth cm,
			height=\plotheight cm,
			enlarge x limits=0.5,
			ymin=0,
			ymax=100,
			xtick=data,
			hide axis,
			xshift=-\hspacing
			]
			\addplot[
			boxplot prepared from table={
				table=./data/twitter/allConditions/boxplot-all.csv,
				lower whisker=lw,
				upper whisker=uw,
				lower quartile=lq,
				upper quartile=uq,
				median=med
			},
			boxplot prepared,
			boxplot/draw direction=y
			] coordinates {};

			\nextgroupplot[
			ylabel={Runtime [s]},
			ymode=log,
			xmode=log,
			xlabel={$\vert V\vert$},
			xmin=0,
			xmax=250,
			ymin=1e-5,
			ymax=1e5,
			width=\plotwidth cm,
			height=\plotheight cm,
			legend pos=south east,
			legend style={
				font=\scriptsize,
				draw=none,  
				fill opacity=0.9
			},
			legend image post style={xscale=0.25}
			]
			\addruntimeplot[forget plot]{./data/twitter/bbkConditions/persistencies.csv}
			\addplot[
			domain=1:250,      
			samples=2,     
			color=black,
			thick,
			]
			{6.9621e-08 * x^3.8245};
			\addlegendentry{$O(\lvert V\rvert^{3.8})$}

			\nextgroupplot[
			width=0.5*\plotwidth cm,
			height=\plotheight cm,
			hide axis
			]

			\nextgroupplot[
			ymode=log,
			xmode=log,
			xlabel={$\vert V\vert$},
			xmin=0,
			xmax=250,
			ymin=1e-5,
			ymax=1e5,
			width=\plotwidth cm,
			height=\plotheight cm,
			yticklabel=\empty,
			legend pos=south east,
			legend style={
				font=\scriptsize,
				draw=none,  
				fill opacity=0.9
			},
			legend image post style={xscale=0.25}
			]
			\addruntimeplot[forget plot]{./data/twitter/allConditions/persistencies.csv}
			\addplot[
			domain=1:250,      
			samples=2,     
			color=black,
			thick,
			]
			{1.7194e-08 * x^5.1053};
			\addlegendentry{$O(\lvert V\rvert^{5.1})$}

			\nextgroupplot[
			width=0.5*\plotwidth cm,
			height=\plotheight cm,
			hide axis
			]	
			
		\end{groupplot}
	\end{tikzpicture}

%% file: manuscript.bib
@article{groetschel-1984,
	author = {Martin Grötschel and Michael Jünger and Gerhard Reinelt},
	title = {A Cutting Plane Algorithm for the Linear Ordering Problem},
	year = {1984},
	journal = {Operations Research},
	volume = {32},
	number = {6},
	pages = {1195--1220},
	doi = {10.1287/opre.32.6.1195},
}

@book{marti-2011,
	author = {Mart\'{\i}, Rafael and Reinelt, Gerhard},
	title = {The Linear Ordering Problem. Exact and Heuristic Methods in Combinatorial Optimization},
	year = {2011},
	publisher = {Springer},
	doi={10.1007/978-3-642-16729-4}
}

@article{alush2012,
	author = {Alush, Amir and Goldberger, Jacob},
	doi = {10.1109/TPAMI.2011.280},
	journal = {Transactions on Pattern Analysis and Machine Intelligence},
	number = {10},
	pages = {1966--1977},
	title = {Ensemble Segmentation Using Efficient Integer Linear Programming},
	volume = {34},
	year = {2012}}

@inproceedings{lange2019,
	author = {Lange, Jan{-}Hendrik and Andres, Bjoern and Swoboda, Paul},
	booktitle = {Computer Vision and Pattern Recognition (CVPR)},
	doi = {10.1109/CVPR.2019.00625},
	title = {Combinatorial Persistency Criteria for Multicut and Max-Cut},
	year = {2019}
}

@inproceedings{lange2018,
	author = {Lange, Jan-Hendrik and Karrenbauer, Andreas and Andres, Bjoern},
	booktitle = {ICML},
	title = {Partial Optimality and Fast Lower Bounds for Weighted Correlation Clustering},
	url = {https://proceedings.mlr.press/v80/lange18a.html},
	year = {2018},
}

@phdthesis{shekhovtsov-2013,
	address = {Prague},
	author = {Shekhovtsov, Alexander},
	school = {Center for Machine Perception, Czech Technical University},
	supervisor = {Hlav{\'a}{\v c}, V{\'a}clav},
	title = {Exact and Partial Energy Minimization in Computer Vision},
	year = {2013}
}

@inproceedings{shekhovtsov-2014,
	author = {Shekhovtsov, Alexander},
	booktitle = {CVPR},
	doi = {10.1109/CVPR.2014.152},
	title = {Maximum Persistency in Energy Minimization},
	year = {2014},
}

@inproceedings{shekhovtsov-2015,
	author = {Shekhovtsov, Alexander and Swoboda, Paul and Savchynskyy, Bogdan},
	booktitle = {CVPR},
	doi = {10.1109/CVPR.2015.7298650},
	title = {Maximum Persistency via Iterative Relaxed Inference with Graphical Models},
	year = {2015},
}

@article{goldberg-1988,
	author = {Goldberg, Andrew V. and Tarjan, Robert E.},
	doi = {10.1145/48014.61051},
	journal = {Journal of the ACM},
	number = {4},
	pages = {921--940},
	title = {A New Approach to the Maximum-Flow Problem},
	volume = {35},
	year = {1988}}

@InProceedings{stein2023,
	title = 	 {Partial Optimality in Cubic Correlation Clustering},
	author =       {Stein, David and Di Gregorio, Silvia and Andres, Bjoern},
	booktitle = {ICML},
	year = 	 {2023},
	url={https://proceedings.mlr.press/v202/stein23a.html}
}

@InProceedings{stein2024,
	title = 	 {Partial Optimality in the Linear Ordering Problem},
	author =       {Stein, David and Andres, Bjoern},
	booktitle = {ICML},
	year={2024},
	url = 	 {https://proceedings.mlr.press/v235/stein24a.html}
}

@misc{gurobi,
	author = {{Gurobi Optimization, LLC}},
	title = {{Gurobi Optimizer Reference Manual}},
	year = 2023,
	url = "https://www.gurobi.com"
}

@inproceedings{leskovec2012,
	author = {Leskovec, Jure and Mcauley, Julian},
	booktitle = {NeurIPS},
	title = {Learning to Discover Social Circles in Ego Networks},
	url = {https://proceedings.neurips.cc/paper_files/paper/2012/file/7a614fd06c325499f1680b9896beedeb-Paper.pdf},
	year = {2012}
}

@InProceedings{hornakova2017,
	title = 	 {Analysis and Optimization of Graph Decompositions by Lifted Multicuts},
	author =       {Andrea Hor{\v{n}}{\'a}kov{\'a} and Jan-Hendrik Lange and Bjoern Andres},
	booktitle = 	 {ICML},
	year = 	 {2017},
	url = 	 {https://proceedings.mlr.press/v70/hornakova17a.html}
}

@article{boykov2001,
	author={Boykov, Y. and Veksler, O. and Zabih, R.},
	journal={IEEE Transactions on Pattern Analysis and Machine Intelligence}, 
	title={Fast approximate energy minimization via graph cuts}, 
	year={2001},
	volume={23},
	number={11},
	pages={1222-1239},
	doi={10.1109/34.969114}
}

@article{boecker2009,
	author={B{\"o}cker, Sebastian
	and Briesemeister, Sebastian
	and Klau, Gunnar W.},
	title={On optimal comparability editing with applications to molecular diagnostics},
	journal={BMC Bioinformatics},
	year={2009},
	volume={10},
	number={1},
	pages={S61},
	issn={1471--2105},
	doi={10.1186/1471-2105-10-S1-S61}
	}

@article{irmai2025,
	author={Jannik Irmai and Maximilian Moeller and Bjoern Andres},
	title={Algorithms for the preordering problem and their application to the task of jointly clustering and ordering the accounts of a social network},
	journal={Transactions on Machine Learning Research},
	year={2026},
	url={https://openreview.net/forum?id=cBsUnv7Cb3},
}

@article{wakabayashi1998,
	author={Yoshiko Wakabayashi},
	title={The Complexity of Computing Medians of Relations}, 
	volume={3},  
	doi={10.11606/resimeusp.v3i3.74876}, 
	number={3},
	journal={Resenhas do Instituto de Matemática e Estatística da Universidade de São Paulo}, 
	year={1998}, 
	pages={323--349}
}

@article{weller2012,
	title = {On making directed graphs transitive},
	journal = {Journal of Computer and System Sciences},
	volume = {78},
	number = {2},
	pages = {559--574},
	year = {2012},
	doi = {https://doi.org/10.1016/j.jcss.2011.07.001},
	author = {Mathias Weller and Christian Komusiewicz and Rolf Niedermeier and Johannes Uhlmann}
}

@article{jacob2008,
	author = {Jacob, Juby and Jentsch, Marcel and Kostka, Dennis and Bentink, Stefan and Spang, Rainer},
	title = {Detecting hierarchical structure in molecular characteristics of disease using transitive approximations of directed graphs},
	year = {2008},
	address = {USA},
	volume = {24},
	number = {7},
	doi = {10.1093/bioinformatics/btn056},
	journal = {Bioinformatics},
	pages = {995--1001},
	numpages = {7}
}

@article{bansal-2004,
    author = {Bansal, Nikhil and Blum, Avrim and Chawla, Shuchi},
    title = {Correlation Clustering},
    journal = {Machine Learning},
    year = {2004},
    volume = {56},
    number = {1},
    pages = {89--113},
    doi = {10.1023/B:MACH.0000033116.57574.95},
}

@article{muller1996partial,
  title={On the partial order polytope of a digraph},
  author={M{\"u}ller, Rudolf},
  journal={Mathematical Programming},
  volume={73},
  number={1},
  pages={31--49},
  year={1996},
  publisher={Springer},
  doi={10.1007/BF02592097}
}

@phdthesis{gurgel1992,
  title={Poliedros de grafos transitivos},
  author={Gurgel, Maria Angela},
  year={1992},
  school={Universidade de S{\~a}o Paulo},
  note={doi: 10.11606/T.45.1992.tde-20210728-191547}
}

@article{gurgel1997adjacency,
  title={Adjacency of vertices of the complete pre-order polytope},
  author={Gurgel, Maria Angela and Wakabayashi, Yoshiko},
  journal={Discrete Mathematics},
  volume={175},
  number={1--3},
  pages={163--172},
  year={1997},
  doi={10.1016/S0012-365X(96)00143-4},
}

@misc{stein-code-2026,
	author={Stein, David},
	title={Partial Optimality in the Preordering Problem -- Code},
	year={2026},
	url={https://github.com/dsteindd/partial-optimality-in-the-preordering-problem}
}
